%
%
%
%
%
%
%
%
%
%
%
%
%
%
%
%

\documentclass{article}
\usepackage{authblk}
\usepackage[utf8]{inputenc}
\usepackage[T1]{fontenc}
\usepackage[a4paper]{geometry}
\usepackage{graphicx}
\usepackage[textwidth=8em,textsize=small]{todonotes}
\setlength{\marginparwidth}{2cm}

\usepackage{amsmath,amsthm,amssymb}
\usepackage{natbib}
\usepackage{hyperref}
\usepackage[nameinlink]{cleveref}
\usepackage{subcaption}
\usepackage{bbm}
\usepackage{amsfonts}
\usepackage{mathtools}
\usepackage[titletoc,toc]{appendix}
\usepackage{url}
\usepackage{algpseudocode}
\usepackage{multirow}
\usepackage{booktabs}
\usepackage[normalem]{ulem}
\usepackage{enumitem}
\usepackage{array}
\usepackage{setspace}
\usepackage{orcidlink}
\graphicspath{{./figs/}}
\usepackage[linesnumbered,lined,commentsnumbered,ruled]{algorithm2e}
\SetKwInput{kwInit}{Init}
\usepackage{threeparttable}
\usepackage{pifont}


\definecolor{darkblue}{HTML}{0c7dbb}
\definecolor{darkgreen}{rgb}{0,0.48,0.65}
\hypersetup{
	colorlinks = true,
	citecolor=darkgreen,
	filecolor=black,
	linkcolor=darkblue,
	urlcolor=blue
}
\providecommand{\keywords}[1]
{
	\small
	\textbf{\textit{Keywords---}} #1
}

\DeclareMathOperator*{\argmin}{arg\,min}
\DeclareMathOperator*{\argmax}{arg\,max}
\newtheorem{theorem}{Theorem}[section]

\newtheorem{proposition}{Proposition}[section]

\newtheorem{lemma}[theorem]{Lemma}   
\newtheorem{assumption}{Assumption}[section]  
\Crefname{algocf}{Algorithm}{Algorithms}
\title{Modelling Loss of Complexity in Intermittent Time Series and its Application}
\author[1,3]{Jie Li\orcidlink{0000-0001-8353-1322}\footnote{Addresses for correspondence: Jie Li, School of Mathematics, Statistics and Actuarial Science, University of Kent, Canterbury CT2 7NF.\ \textbf{Email}:~\href{jl725@kent.ac.uk}{jl725@kent.ac.uk}}}
\author[1]{Jian Zhang}
\author[2]{Samantha L. Winter}
\author[2]{Mark Burnley}
\affil[1]{School of Mathematics, Statistics and Actuarial Science, University of Kent, Canterbury, CT2 7NF, United Kingdom}
\affil[2]{School of Sport, Exercise and Health Sciences, Loughborough University, Epinal Way, LE11 3TU, Leicestershire, United Kingdom}
\affil[3]{Inovision-IP Ltd, Culham Innovation Centre, D5 Culham Science Centre,Abingdon, Oxfordshire, OX14 3DB, UK}

\begin{document}
\maketitle
\begin{abstract}
    In this paper, we developed a novel method of nonparametric relative entropy (RlEn) for modelling loss of complexity in intermittent time series. The method consists of two steps. We first fit a nonlinear autoregressive model to each intermittent time series, where the corresponding lag order and the loss of complexity are determined by Bayesian Information Criterion (BIC) and relative entropy respectively. Then, change-points in the complexity are detected by a cumulative sum (CUSUM) based statistic. Compared to approximate entropy (ApEn), a popular method in literature, the performance of RlEn was assessed by simulations in terms of (1) ability to localize complexity change-points in intermittent time series; (2) ability to faithfully estimate underlying nonlinear models.  The performance of the proposal was then examined in a real analysis of fatigue-induced changes in the complexity of human motor outputs. The results showed that the proposed method outperformed the ApEn in accurately detecting changes of complexity in intermittent time series segments.

    \keywords{Intermittent time series; Change-point detection; Relative entropy; Transformation invariant; Background noise-free.}
\end{abstract}

\section{Introduction}\label{sec:introduction}
Intermittent time series are frequently met in various fields, such as neurological examination~\citep{buriokaApproximateEntropyElectroencephalogram2005}, heart rate analysis~\citep{acharyauHeartRateAnalysis2004},  sports science~\citep{forrestEffectSignalAcquisition2014} and energy demand~\citep{LIHeatingdemandbehavior2024}.  The intermittent time series contains information on various patterns or models. For instance, in Magnetoencephalography (MEG) or Electroencephalogram (EEG) scan, neurologically healthy subjects respond differently in terms of EEG and MEG signals for different stimuli (faces v.s.\ scrambled faces, or familiar faces v.s.\ unfamiliar faces, see~\cite{wakemanMultisubjectMultimodalHuman2015}); In cardiology, the pattern of heart rate varies along the states of human: sleeping, sitting, walking, jogging or running, see~\citet{acharyau.NonlinearAnalysisEEG2005},~\citet{buriokaApproximateEntropyElectroencephalogram2005},~\citet{shiEntropyAnalysisShortTerm2017} and among others; In sports science, muscle force and power is maintained by the aerobic resynthesis of adenosine triphosphate (ATP).  When force or power exceeds a critical value, exercise is not sustainable and fatigue progressively develops.  Eventually, the task set (e.g., to produce 50\% of maximal force) failed, as the force decreases despite maximal effort~\citep{Burnley2012}.~\citet{pethickFatigueReducesComplexity2015} demonstrated that this fatigue was also associated with a loss of muscle force or torque complexity as the contractions progressed.  This observation implies that the fatigue process can be tracked during the contractions.  These changes in complexity can, in principle, be used to detect change-points in the patterns of force development during fatigue development. Scientists have a great interest in detecting the change-points among a consecutive intermittent time series to help scientists identify the patterns of brain activity, heart disease, or improve the performance of athletes.

Throughout this research, the terminology \textit{change-point} refers to the ``change-point'' among intermittent time series rather than the ``change-point'' within a specific time series. The usual change-points are the breakpoints within one piece of time series, see~\citet{killickOptimalDetectionChangepoints2012,piotrwbs2014,Jie2024autocpd}. However, the change-points in this article are the breakpoints among the intermittent time series. For example, suppose we have extracted 55 intermittent time series from~\Cref{fig:allquad.png}. Sports scientists want to know the time of muscle fatigue occurrence. In this case, the time of muscle fatigue occurrence is one type of change-point among intermittent time series.

To identify the change-points among a consecutive intermittent time series using cumulative summation (CUSUM) based method~\citep{pageContinuousInspectionSchemes1954,killickOptimalDetectionChangepoints2012,piotrwbs2014,wang2018inspect}, it is necessary to map each intermittent time series to a scalar. Subsequently, a CUSUM-based method can be applied to the resulting series of scalars.  Let \( \mathbbm{x}_{t}=(x_{t1},\ldots,x_{tN}) \) be a univariate intermittent time series with length \( N \) where \( t \) represents the time. \( T \) is the number of intermittent isometric time series. Then \( \mathbbm{x}_{1},\mathbbm{x}_{2},\ldots,\mathbbm{x}_{T} \) represent a consecutive intermittent isometric time series. Denote the map function as \( \mathcal{I}:\mathbb{R}^{N}\mapsto \mathbb{R} \). Mathematically, the detection of change-points among intermittent isometric time series is equivalent to the detection of change-points of \( \{\mathcal{I}(\mathbbm{x}_{1}),\mathcal{I}(\mathbbm{x}_{2}),\ldots,\mathcal{I}(\mathbbm{x}_{T})\} \). In particular, if \( N=1 \) and \(\mathcal{I}(\cdot)\) is an identity function, the problem of detecting change-point among intermittent time series degenerates to the classical change-point detection problem.

Intuitively, the map function \(\mathcal{I}(\cdot)\) should be \textit{transformation invariant} and \textit{background noise-free}. The transformation invariant property means that \(\mathcal{I}(\mathbbm{x}_{t})\) equals \(\mathcal{I}(h(\mathbbm{x}_{t}))\), where \(h(\mathbbm{x}_{t})\) is a sufficient statistic of \(\mathbbm{x}_{t}\). Let the noise of \( \mathbbm{x}_{t}, t=1,2,\ldots,T\) follow normal distribution with zero mean and variance \(\sigma^{2}\). The background noise-free property means that the result of map function \(\mathcal{I}(\cdot)\) should be free from the background noise level  \(\sigma^{2}\). The former property could eliminate the effects of the measurement unit, while the latter property guarantees that \( \mathcal{I}(\cdot) \) does not include the variance of background noise.
\Cref{tab:Potential Choice of I} summarizes five potential choices of \( \mathcal{I}(\cdot) \): mean, variance, entropy (En), conditional entropy (CoEn) and relative entropy (RlEn) and their properties.
\begin{table}[htbp]
    \centering
    \begin{threeparttable}
        \caption{Potential choices of \( \mathcal{I}(\cdot) \) and their properties. En, CoEn and RlEn represent entropy, conditional entropy and relative entropy, respectively.}\label{tab:Potential Choice of I}
        \begin{tabular}{cccccc}
            \hline
            & Mean & Variance & En & CoEn & RlEn \\ \hline
            Transformation invariant & \ding{56} & \ding{56} & \ding{56} & \ding{56} & \ding{52} \\
            Background noise-free & \ding{52} & \ding{56} & \ding{56} & \ding{56} & \ding{52} \\ \hline
        \end{tabular}
    \end{threeparttable}
\end{table}

\noindent
\textit{1. Mean and Variance}. Suppose \( \mathcal{I}(\cdot) \) represents mean function, \( \mathbbm{y}_{t}= h(\mathbbm{x}_{t}) = \alpha \mathbbm{x}_{t}\) is a linear transformation, \( \alpha\neq 0 \). Then \( \bar{\mathbbm{y}}_{t} \) = \( \alpha\bar{\mathbbm{x}}_{t}\neq \bar{\mathbbm{x}}_{t} \). If \( \mathbbm{x}_{t} \) is independent of \( \sigma^2 \), then for any transformation \( h(\cdot) \), \( h(\mathbbm{x}_{t}) \) is also independent of \( \sigma^2 \). Similarly, one can easily verify that variance does not have these two properties.

\noindent
\textit{2. Entropy (En) and Conditional Entropy (CoEn)}. Entropy and conditional entropy are inappropriate  choice of \( \mathcal{I}(\cdot) \) because:
\begin{itemize}
    \item Entropy is scale variant, for example, let \( \text{En}(x) \) represent the entropy of variable \( x \), for any scale transformation \( y=\alpha x \), \( \alpha\in \mathbb{R} \) and \( \alpha\neq 0 \) then the entropy of variable \( y \) is \( \text{En}(x)+\log \left\vert \alpha \right\vert \). More generally, entropy is not transformation invariant under change of variable as well, see\textcolor{darkblue}{~\citet[][p.18]{iharaInformationTheoryContinuous1993}}.
    \item Conditional entropy is neither transformation invariant. In nonparametric settings, the four entropies: \textit{Approximate Entropy} (ApEn)~\citep{pincusApproximateEntropyMeasure1991}, \textit{Sample Entropy} (SpEn)~\citep{richmanPhysiologicalTimeseriesAnalysis2000}, \textit{Multi-scale Entropy} (MsEn)~\citep{costaMultiscaleEntropyAnalysis2003} and \textit{Fuzzy Entropy} (FzEn)~\citep{chenMeasuringComplexityUsing2009} are the special cases of conditional entropy. They do not own the property of transformation invariant. For example, when using the multivariate uniform kernel to estimate the nonparametric CoEn, the difference between ApEn and CoEn is \( \log(2h)\) which comes from the scale transformation in the kernel function with bandwidth \(h\).
    \item In addition, entropy and conditional entropy are not background noise-free. A direct example can be found in~\Cref{sec:relative_entropy_for_stationary_ar2_model}. Equations~\eqref{eq:entropy_joint} and~\eqref{eq:conditional_entropy_final} are entropy and conditional entropy respectively, however, both are related to the \( \sigma^2 \).
\end{itemize}

\noindent
\textit{3. Relative Entropy (RlEn)}.~\citet{kullbackInformationSufficiency1951} have proved that RlEn has the property of \textit{transformation invariant}. The discussion of \textit{background noise-free} property is put off in~\textcolor{darkblue}{Proposition}~\ref{thm:relative_entropy} in Appendix.

In this article, we will use the relative entropy (RlEn) as the map function for \( \mathbbm{x}_{t} \). Relative entropy is also called Kullback-Leibler divergence~\citep{kullbackInformationSufficiency1951}. It is a measure to describe the divergence between two probability distributions.~\citet{robinsonConsistentNonparametricEntropyBased1991} proposed a consistent nonparametric entropy-based test for independence in time series which employed the sample splitting device. To avoid the hyperparameter for the sample splitting device,~\citet{hongAsymptoticDistributionTheory2005} developed an \textit{asymptotic distribution theory for nonparametric entropy of serial dependence}. Unlike~\citet{hongAsymptoticDistributionTheory2005} had obtained an asymptotic distribution of relative entropy for pairwise variables, we proposed the relative entropy for \(m\)-consecutive variables in high-dimensional context where \(m\geq 2\). Furthermore, we recommend using the BIC criterion to select the pre-determined parameter \( m \). The consistency theory of BIC is developed to ensure that the estimator of lag order converges to the true order with probability 1. Under certain assumptions, the limiting distribution of nonparametric RlEn is Gaussian with convergence rate \( \sqrt{n}h^{(m+1)/2} \).

This article is organized as follows. In~\Cref{sec:methodology}, we introduce the methodology of relative entropy for \(m\)-consecutive variables, the BIC criterion to select the optimal lag order as well as the algorithms. In~\Cref{sec:theory}, we derive the asymptotic distribution of relative entropy. In~\Cref{sec:numerical_study}, we provide simulation studies to illustrate the performance of the proposed method, and we apply the proposed method to the real data set in~\Cref{sec:real_data_analysis}. Finally, we conclude the article in~\Cref{sec:discussion}.


\section{Methodology}\label{sec:methodology}
Let \( X_{1},\ldots,X_{N} \) represent the time-varying scalar measurements, which form a strictly stationary process. Denote \( \mathbf{X}_{i;m} ={(X_{i},\ldots,X_{i+m-1} )}^{\top}\) as the \( m \) consecutive variables vector where \( m \) could be sufficiently large but be bounded by \( M \). The density function of  \( \mathbf{X}_{i;m} \) is defined as \( g(\mathbf{X}_{i;m}) \). Furthermore, let \( \mathbf{X}_{i;m+1} ={(X_{i},\ldots,X_{i+m} )}^{\top}\) and \( f(\mathbf{X}_{i;m+1}) \) be the \( m+1 \) consecutive variables vector and its probability density function.
Note that given the first vector \( \mathbf{X}_{i;m} \),  the conditional probability density function of \(X_{i+m}\) is \( f(X_{i+m}\mid\mathbf{X}_{i;m})=f(\mathbf{X}_{i;m+1}) / g(\mathbf{X}_{i;m})\). Furthermore, let \( g_{1}(X_{i+m}) \) be the density function of \( X_{i+m} \), then the relative entropy is defined as
\begin{equation*}
    \mathfrak{E}  \coloneqq \mathbb{E}_{f} \left[ \log\left( \frac{f\left(\mathbf{X}_{i;m+1}\right)}{g\left(\mathbf{X}_{i;m}\right)g_{1}(X_{i+m})} \right) \right],
\end{equation*}
where \(\mathbb{E}_{f}\) represent the expectation with respect to the density function \(f\). In mathematical statistics, \(\mathfrak{E}\) is also called Kullback–Leibler (KL) divergence~\citep{kullbackInformationSufficiency1951}.
Estimation of \(\mathfrak{E}\) can be divided into two parts: density estimation and expectation estimation.

For density estimation, we use a nonparametric kernel method to estimate \( f(\mathbf{X}_{i;m+1}) \), \( g(\mathbf{X}_{i;m})\) and \(g_{1}(X_{i+m}) \). Jackknife kernel has been proposed to mitigate the boundaries effect for the kernel with bounded support~\citep{johnBoundaryModificationKernel1984,hardleAppliedNonparametricRegression1990,jonesSimpleBoundaryCorrection1993}. Additionally, other popular methods in the literature address boundary effects in kernel density estimation, such as the reflection method~\citep{Schechner1985}, the transformation method~\citep{marron1994transformations}, and local polynomial density estimation~\citep{chen1999beta}. Researchers can choose to use these alternative methods but must adapt the theory presented in~\Cref{sec:theory}. For consistency, we employ the Jackknife kernel to estimate the densities in this section.

Following~\citet{johnBoundaryModificationKernel1984}, we suppose~\textcolor{darkblue}{Assumption}~\ref{assump:kernel_assumption} holds throughout this article.
\begin{assumption}\label{assump:kernel_assumption}
    The domain of kernel function \( K(\cdot) \) is \( [-1, 1] \) and \(
    K(\cdot) \) satisfies \(
    K(-x)=K(x) \) for any \( x\in [-1, 1] \), \( \int_{-1}^{+1}
    K(x)\,\mathrm{d}x =1 \) and \(
    \int_{-1}^{+1} x^{2} K(x)\,\mathrm{d}x < + \infty \).
\end{assumption}
In fact, the Jackknife kernel is a linear combination of two different self-normalized kernels, namely,
\begin{equation*}
    k_{\rho}(u)=(1+\beta(\rho))\frac{K(u)}{\omega_{0}(\rho)}-\frac{\beta(\rho)}{\alpha}\frac{K(u/\alpha)}{\omega_{0}(\rho/\alpha)},
\end{equation*}
where \( \omega_{l}(\rho)=\int_{-1}^{\rho} u^{l}K(u)\,\mathrm{d}u \), \( l=0,1,2 \), \( 0\leq \rho\leq 1 \) and \( \beta(\rho)=\frac{R_{1}(\rho)}{\alpha R_{1}(\rho/\alpha)-R_{1}(\rho)} \),
where \( R_{l}(\rho)= \omega_{l}(\rho)/\omega_{0}(\rho)\), \( l=1,2 \). In this article, we follow the choice of  \( \alpha \) in~\citet{johnBoundaryModificationKernel1984} and let \( \alpha=2-\rho \). Finally, for univariate \( x,y \in[0,1]\), the Jackknife kernel is defined as
\begin{equation*}
    K_{h}^{J}(x-y)\coloneqq\begin{cases}
        h^{-1}k_{(x/h)}\left( \frac{x-y}{h} \right),     & \text{if}\, x\in[0,h).   \\
        h^{-1}K\left( \frac{x-y}{h} \right),             & \text{if}\, x\in[h,1-h]. \\
        h^{-1}k_{[(1-x)/h]}\left( \frac{x-y}{h} \right), & \text{if}\, x\in(1-h,1].
    \end{cases}
\end{equation*}
For more details of Jackknife kernel, see~\citet{johnBoundaryModificationKernel1984} and~\citet{hongAsymptoticDistributionTheory2005}. For vectors \( \mathbf{x} = (x_{1},\ldots,x_{m}) \) and \( \mathbf{y} = (y_{1},\ldots,y_{m}) \), we define the scaled multivariate Jackknife kernel as
\begin{equation}\label{eq:multivariate_kernel}
    \mathcal{K}^{(m)}_{h}(\textbf{x}-\textbf{y})\coloneqq K_{h}^{J}(x_{1}-y_{1})\times K_{h}^{J}(x_{2}-y_{2})\times\cdots\times K_{h}^{J}(x_{m}-y_{m}).
\end{equation}
Following the assumption in~\citet{hongAsymptoticDistributionTheory2005}, we assume the bandwidths for each component of \( x_{1},\ldots,x_{m} \) in~\eqref{eq:multivariate_kernel} are same, i.e., \(h\).

Let \( x_{1},\ldots,x_{N} \) be the observations of \( X_{1},\ldots,X_{N} \), \( \mathbf{x}_{i;m}\) and \( \mathbf{x}_{i;m+1}\), \( i=1,\ldots,N-m\) are the corresponding observations of \(\mathbf{X}_{i;m}\) and \(\mathbf{X}_{i;m+1}\), respectively. From now on, we define \(n\coloneqq N-m\) to simplify the notations in this paper. The ``leave-one-out'' kernel density estimators of \(f(\cdot),g(\cdot)\) and \(g_{1}(\cdot)\) are:
\begin{align*}
    \hat{f}\left( \mathbf{x}_{i;m+1} \right) & = \frac{1}{n-1}\sum\nolimits_{j=1}^{n}\mathcal{K}^{(m+1)}_{h}\left( \mathbf{x}_{i;m+1}-\mathbf{x}_{j;m+1} \right)\mathbbm{1}(j\neq i), \\
    \hat{g}\left( \mathbf{x}_{i;m} \right) & = \frac{1}{n-1}\sum\nolimits_{j=1}^{n}\mathcal{K}^{(m)}_{h}\left( \mathbf{x}_{i;m}-\mathbf{x}_{j;m} \right)\mathbbm{1}(j\neq i), \\
    \hat{g}_{1}(x_{i+m}) & = \frac{1}{n-1}\sum\nolimits_{j=1}^{n}K_{h}^{J}(x_{i+m}-x_{j+m})\mathbbm{1}(j\neq i),
\end{align*}
where \(i,j\in\{1,2,\ldots,n\}\) and \(\mathbbm{1}(\cdot)\) is the indicator function. Then, the nonparametric estimator of \(\mathfrak{E}\) can be expressed as
\begin{equation}\label{eq:estimator_rlen}
    \hat{\mathfrak{E}}(m,h|\mathfrak{X})=\frac{1}{n}\sum_{i\in S(m)}\log\frac{\hat{f}\left( \mathbf{x}_{i;m+1} \right)}{\hat{g}\left( \mathbf{x}_{i;m} \right)\hat{g}_{1}(x_{i+m})},
\end{equation}
where \( S(m)=\{i\in \mathbb{N}: 1\leq i\leq n, \hat{f}( \mathbf{x}_{i;m+1} )>0,\hat{g}( \mathbf{x}_{i;m} )>0,\hat{g}_{1}( x_{i+m} )>0\} \) and \(\mathfrak{X} = \{x_{1},x_{2},\ldots,x_{N}\}\). We observe that the estimator \(\hat{\mathfrak{E}}(m,h|\mathfrak{X})\) depends on the unknown lag order \(m\) and bandwidth \(h\). In practice, one can either (a) estimate \(m\) first and then estimate the bandwidth \(h\); or (b) estimate the bandwidth \(h\) first and then estimate \(m\). However, given \(h\), maximizing the estimator~\eqref{eq:estimator_rlen} with respect to \(m\) is not an appropriate criterion for selecting \(m\), as the curve of \(\hat{\mathfrak{E}}(m,h|\mathfrak{X})\) against \(m\) varies significantly for different bandwidths (see~\Cref{fig:Relative_entropy_against_lag_order_for_different_bandwidths}). In practice, our suggestion is to determine the lag order \(m\) prior to computing the relative entropy \(\hat{\mathfrak{E}}(m,h|\mathfrak{X})\). Then, for the specified \(\hat{m}\), we select the bandwidth by maximizing the estimator~\eqref{eq:estimator_rlen}, i.e,
\begin{equation*}
    \hat{h}=\argmax_{h}\hat{\mathfrak{E}}(\hat{m},h|\mathfrak{X}).
\end{equation*}
In the next section, we use BIC criterion to select the optimal lag order \(\hat{m}\) based on the general nonlinear autoregression model.

\subsection{Lag Order Selection}\label{subsec:lag_order_selection}
There are various criteria proposed to address the lag order selection problem~\citep{shibataOptimalSelectionRegression1981,vieuOrderChoiceNonlinear1995,shaoAsymptoticTheoryLinear1997}. Here, we adopt an autoregression approach. We fit the following
general nonlinear autoregression model to each intermittent time series
\begin{equation}\label{eq:general_nonlinear_time_series}
    x_{i+m}=\mathfrak{F}\left( \mathbf{x}_{i;m} \right) + \varepsilon_{i},
\end{equation}
where \( 1\leq m \leq N-1 \), \( \varepsilon_{i} \) is Gaussian white noise with zero mean and variance \(\sigma^{2}\).~\( \mathfrak{F}(\cdot) \) is an unknown function. The Nadaraya-Watson estimator of \( \mathfrak{F}( \mathbf{x}_{i;m} ) \) can be expressed as:
\begin{equation}\label{eq:f_frak_estimator_nadaraya}
    \hat{\mathfrak{F}}\left( \mathbf{x}_{i;m}, h_{*} \right) = \sum\nolimits_{j=1}^{n}l_{j}\left(\mathbf{x}_{i;m}, h_{*} \right)x_{j+m},\quad i=1,2,\ldots,n
\end{equation}
where
\begin{equation*}
    l_{j}\left(\mathbf{x}_{i;m}, h_{*} \right)=\frac{\mathcal{K}_{h_{*}}^{(m)}\left( \mathbf{x}_{j;m}-\mathbf{x}_{i;m} \right)}{\sum_{s=1}^{n}\mathcal{K}_{h_{*}}^{(m)}\left( \mathbf{x}_{s;m}-\mathbf{x}_{i;m} \right)},\quad j=1,\ldots,n.
\end{equation*}
\(h_{*}\) is the bandwidth for nonlinear autoregression model.
Denote \( L(h_{*}) \) as
\begin{equation}\label{eq:Lmatrix}
    L(h_{*})\coloneqq\begin{bmatrix}
        l_{1}\left(\mathbf{x}_{1;m} , h_{*}\right) & l_{2}\left(\mathbf{x}_{1;m}, h_{*} \right) & \cdots & l_{n}\left(\mathbf{x}_{1;m}, h_{*} \right) \\
        l_{1}\left(\mathbf{x}_{2;m}, h_{*} \right) & l_{2}\left(\mathbf{x}_{2;m}, h_{*} \right) & \cdots & l_{n}\left(\mathbf{x}_{2;m}, h_{*} \right) \\
        \vdots & \vdots & \cdots & \vdots \\
        l_{1}\left(\mathbf{x}_{n;m}, h_{*} \right) & l_{2}\left(\mathbf{x}_{n;m}, h_{*} \right) & \cdots & l_{n}\left(\mathbf{x}_{n;m}, h_{*} \right) \\
    \end{bmatrix}.
\end{equation}
Then we have the following Lemma.
\begin{lemma}\label{lemma:degree_freedom}
    For the multivariate kernel \( \mathcal{K}_{h_{*}}^{(m)}(\cdot)  \)  and Nadaraya-Watson estimator~\eqref{eq:f_frak_estimator_nadaraya}, \( L(h_{*}) \) is defined in~\eqref{eq:Lmatrix}, the effective degrees of freedom \( v \) has explicit expression as
    \begin{equation*}
        v(m,h_{*})=\text{\textbf{tr}}(L(h_{*}))=\mathcal{K}_{h_{*}}^{(m)}\left( \mathbf{0} \right)\sum\nolimits_{i=1}^{n}{\left( \sum\nolimits_{s=1}^{n}\mathcal{K}_{h_{*}}^{(m)}\left( \mathbf{x}_{s;m}-\mathbf{x}_{i;m} \right) \right)}^{-1}=O(h_{*}^{-m}).
    \end{equation*}
\end{lemma}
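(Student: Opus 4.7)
The proof is mostly a bookkeeping computation, so the plan is to first produce the explicit expression for $\text{tr}(L(h_*))$ directly from the definitions, and then argue the order $O(h_*^{-m})$ by separately estimating the numerator and denominator of a typical diagonal entry.

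First, I would write
\[
\text{tr}(L(h_*)) = \sum_{i=1}^{n} l_i(\mathbf{x}_{i;m}, h_*) = \sum_{i=1}^{n} \frac{\mathcal{K}_{h_*}^{(m)}(\mathbf{x}_{i;m}-\mathbf{x}_{i;m})}{\sum_{s=1}^{n}\mathcal{K}_{h_*}^{(m)}(\mathbf{x}_{s;m}-\mathbf{x}_{i;m})}.
\]
Since $\mathbf{x}_{i;m}-\mathbf{x}_{i;m}=\mathbf{0}$ in every diagonal numerator and the value $\mathcal{K}_{h_*}^{(m)}(\mathbf{0})$ does not depend on $i$, this constant factors out of the sum, delivering the first claimed equality
\[
\text{tr}(L(h_*)) = \mathcal{K}_{h_*}^{(m)}(\mathbf{0})\sum_{i=1}^{n}\Bigl(\sum_{s=1}^{n}\mathcal{K}_{h_*}^{(m)}(\mathbf{x}_{s;m}-\mathbf{x}_{i;m})\Bigr)^{-1}.
\]

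Next I would establish the rate. By the product form in~\eqref{eq:multivariate_kernel},
\[
\mathcal{K}_{h_*}^{(m)}(\mathbf{0}) = \prod_{j=1}^{m} K_{h_*}^{J}(0) = h_*^{-m}\prod_{j=1}^{m} k_{\rho_j}(0),
\]
where each $k_{\rho_j}(0)$ is a bounded constant (using Assumption~\ref{assump:kernel_assumption} and the fact that $\alpha=2-\rho$ keeps the Jackknife weights $(1+\beta(\rho))/\omega_0(\rho)$ and $\beta(\rho)/[\alpha\,\omega_0(\rho/\alpha)]$ finite for $\rho\in[0,1]$). Hence $\mathcal{K}_{h_*}^{(m)}(\mathbf{0}) = O(h_*^{-m})$.

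For the reciprocal sum, I would recognize that $(n-1)^{-1}\sum_{s\neq i}\mathcal{K}_{h_*}^{(m)}(\mathbf{x}_{s;m}-\mathbf{x}_{i;m})$ is exactly the Jackknife kernel density estimator $\hat g(\mathbf{x}_{i;m})$ of $g$ at $\mathbf{x}_{i;m}$, which converges in probability to $g(\mathbf{x}_{i;m})>0$ under the standard nonparametric bandwidth conditions ($h_*\to 0$, $nh_*^{m}\to\infty$). Therefore $\sum_s \mathcal{K}_{h_*}^{(m)}(\mathbf{x}_{s;m}-\mathbf{x}_{i;m}) = n\,g(\mathbf{x}_{i;m})(1+o_p(1))$, giving
\[
\sum_{i=1}^{n}\Bigl(\sum_{s=1}^{n}\mathcal{K}_{h_*}^{(m)}(\mathbf{x}_{s;m}-\mathbf{x}_{i;m})\Bigr)^{-1} = \frac{1}{n}\sum_{i=1}^{n}\frac{1+o_p(1)}{g(\mathbf{x}_{i;m})} \xrightarrow{p} \int\frac{1}{g(\mathbf{u})}g(\mathbf{u})\,\mathrm{d}\mathbf{u} = 1,
\]
so the reciprocal sum is $O_p(1)$. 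Multiplying the two estimates yields $\text{tr}(L(h_*))=O(h_*^{-m})$ as asserted.

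The only subtle point, and what I expect to be the real obstacle, is ensuring the reciprocal sum is genuinely of constant order uniformly in $i$, because a single small denominator could inflate the sum. This requires either a positive lower bound for $g$ on a compact support (which is standard in this literature and will be invoked in Section~\ref{sec:theory}), or a truncation argument using the set $S(m)$ introduced after~\eqref{eq:estimator_rlen}; under the stationarity and density assumptions implicit in the paper's framework, the uniform-in-$i$ consistency of $\hat g$ on the support of the process then closes the argument.
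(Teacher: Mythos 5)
Your proposal is correct and is precisely the ``straightforward'' computation the paper alludes to when it omits the proof of Lemma~\ref{lemma:degree_freedom}: the trace identity follows by reading off the diagonal entries of $L(h_{*})$, and the rate follows from $\mathcal{K}_{h_{*}}^{(m)}(\mathbf{0})=O(h_{*}^{-m})$ together with the denominator sums being $n\hat{g}(\mathbf{x}_{i;m})\asymp n g(\mathbf{x}_{i;m})$ with $g$ bounded away from zero (Assumption~\ref{assump:density_assumption}). Your closing remark about needing the density bounded below (or a truncation via $S(m)$) to control the reciprocal sum uniformly is exactly the right caveat and matches the assumptions the paper invokes elsewhere.
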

The proof is straightforward, which will be omitted here. The bandwidth is selected by the so-called leave-one-out cross-validation method, i.e., minimizing
\begin{equation*}
    \hat{h}_{*}(m)=\argmin_{h_{*}}\sum\nolimits_{i=1}^{n}{\left(x_{i+m}-\hat{\mathfrak{F}}_{-i}\left( \mathbf{x}_{i;m} , h_{*}\right)\right)}^{2},
\end{equation*}
where
\begin{equation*}
    \hat{\mathfrak{F}}_{-i}\left( \mathbf{x}_{i;m}, h_{*} \right) = \sum_{j=1}^{n}\frac{\mathcal{K}_{h_{*}}^{(m)}\left( \mathbf{x}_{j;m}-\mathbf{x}_{i;m} \right)}{\sum_{s=1,s\neq i}^{n}\mathcal{K}_{h_{*}}^{(m)}\left( \mathbf{x}_{s;m}-\mathbf{x}_{i;m} \right)}x_{j+m}.
\end{equation*}
Define the average square predict error as
\begin{equation*}
    \hat{\sigma}^{2}(m) =\frac{1}{n}
    \sum\nolimits_{i=1}^{n}{\left(x_{i+m}-\hat{\mathfrak{F}}_{-i}\left( \mathbf{x}_{i;m}, \hat{h}_{*}(m)
    \right)\right)}^{2},
\end{equation*}
then, we have the following BIC criterion
\begin{equation}\label{eq:BIC_m}
    \mathfrak{B}(m) = n\log(\hat{\sigma}^{2}(m)) + v(m,\hat{h}_{*}(m))\log(n),\quad m=1,2,\ldots,N-1.
\end{equation}
We have

\begin{theorem}\label{thm:BIC_consistent}
    Supposing \( m_{0} \in \{ 1,2, \ldots,N-1\}\) be the underlying lag order. Let \( \hat{m}=\argmin_{m}\mathfrak{B}(m) \). Under conditions~\ref{cond:c11}--\ref{cond:c17} in~\Cref{sec:lag_order_selection_and_proof} of Appendix,  \( \hat{m} \)  converges to \( m_{0} \) in probability, i.e.,
    \begin{equation*}
        P(\hat{m}=m_{0})\rightarrow 1.
    \end{equation*}
\end{theorem}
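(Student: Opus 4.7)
The plan is the standard two-branch BIC consistency argument, carried out on top of the nonparametric rates supplied by conditions~\ref{cond:c11}--\ref{cond:c17}. Since the lag order $m$ is bounded by a fixed $M$, it suffices to prove that for every $m\in\{1,\ldots,M\}$ with $m\neq m_{0}$,
\[
P\bigl(\mathfrak{B}(m)>\mathfrak{B}(m_{0})\bigr)\to 1,
\]
and then take a union bound over this finite collection. I would split the analysis into an underfitting branch ($m<m_{0}$) and an overfitting branch ($m>m_{0}$), handled by qualitatively different comparisons of the log-likelihood gap to the penalty gap.

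\textbf{Underfitting branch.} When $m<m_{0}$ the true regression function $\mathfrak{F}(\mathbf{X}_{i;m_{0}})$ is not measurable with respect to $\mathbf{X}_{i;m}$, so the best $m$-lag predictor leaves strictly positive excess variance $\Delta_{m}>0$; this identifiability is exactly what conditions~\ref{cond:c11}--\ref{cond:c17} are designed to guarantee. Uniform consistency of the leave-one-out Nadaraya--Watson estimator at the cross-validated bandwidth $\hat{h}_{*}(m)$ then gives $\hat{\sigma}^{2}(m)\xrightarrow{p}\sigma^{2}+\Delta_{m}$, while $\hat{\sigma}^{2}(m_{0})\xrightarrow{p}\sigma^{2}$. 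Hence
\[
n\bigl[\log\hat{\sigma}^{2}(m)-\log\hat{\sigma}^{2}(m_{0})\bigr]=n\log\!\bigl(1+\Delta_{m}/\sigma^{2}\bigr)+o_{p}(n)
\]
is of order exactly $n$, whereas by Lemma~\ref{lemma:degree_freedom} the penalty gap is at most $O_{p}(\hat{h}_{*}^{-M}\log n)$, which is $o_{p}(n)$ under the bandwidth conditions. The log-likelihood term therefore dominates and drives $\mathfrak{B}(m)-\mathfrak{B}(m_{0})\to+\infty$ in probability.

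\textbf{Overfitting branch.} When $m>m_{0}$ the model is still correctly specified, so $\hat{\sigma}^{2}(m)\xrightarrow{p}\sigma^{2}$ as well, and the argument reverses: now the penalty must beat a vanishing log-likelihood gap. Using $\log(1+u)=u+O(u^{2})$ around $\sigma^{2}$,
\[
n\bigl[\log\hat{\sigma}^{2}(m)-\log\hat{\sigma}^{2}(m_{0})\bigr]=\frac{n}{\sigma^{2}}\bigl[\hat{\sigma}^{2}(m)-\hat{\sigma}^{2}(m_{0})\bigr]+o_{p}(1).
\]
Decomposing each $\hat{\sigma}^{2}(m)-\sigma^{2}$ into the oracle piece $n^{-1}\sum\varepsilon_{i}^{2}-\sigma^{2}$, a cross term $2n^{-1}\sum\varepsilon_{i}(\mathfrak{F}-\hat{\mathfrak{F}}_{-i})$, and an integrated squared error, the oracle pieces cancel when subtracting at $m$ and $m_{0}$; what remains is governed by the usual bias--variance bound $O_{p}(\hat{h}_{*}^{4}+(n\hat{h}_{*}^{m})^{-1})$ for Nadaraya--Watson on an $m$-dimensional input. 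Conditions~\ref{cond:c11}--\ref{cond:c17} then ensure this remainder is of smaller order than the penalty jump $[v(m,\hat{h}_{*}(m))-v(m_{0},\hat{h}_{*}(m_{0}))]\log n$, which is positive and diverges in probability because $v(m,\hat{h}_{*}(m))=O_{p}(\hat{h}_{*}^{-m})$ grows with $m$. Again $\mathfrak{B}(m)-\mathfrak{B}(m_{0})\to+\infty$ in probability, completing the case.

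\textbf{Main obstacle.} The underfit branch is essentially a law-of-large-numbers calculation and should be routine once uniform consistency of $\hat{\mathfrak{F}}_{-i}$ is in hand. The real difficulty sits in the overfit branch: one must sharply control $n[\hat{\sigma}^{2}(m)-\hat{\sigma}^{2}(m_{0})]$ uniformly in the data-driven bandwidth $\hat{h}_{*}(m)$, and in particular show that the stochastic cross term $n^{-1}\sum\varepsilon_{i}[\hat{\mathfrak{F}}_{-i}(\mathbf{x}_{i;m})-\mathfrak{F}(\mathbf{x}_{i;m_{0}})]$ does not leak an extra $\log n$ factor that would spoil the comparison with the penalty. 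I would handle this via a stochastic equicontinuity/$U$-statistic argument over a shrinking neighbourhood of the cross-validated bandwidth, which is where the technical mileage of conditions~\ref{cond:c11}--\ref{cond:c17} is consumed.
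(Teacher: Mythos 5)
Your proposal is correct in overall structure and its underfitting branch coincides with the paper's (Lemma~\ref{lemma:bic_consistency}(b): a strictly positive excess variance $c_m$ makes the $n\log\hat{\sigma}^2$ gap of exact order $n$, against a penalty gap of order $n^{m/(4+m)}\log n=o(n)$). Where you genuinely diverge from the paper is the overfitting branch. The paper, following \citet{vieuOrderChoiceNonlinear1995}, shows in Lemma~\ref{lemma:bic_consistency}(c) that the cross-validated prediction error \emph{itself} already discriminates: after removing the oracle noise level $c_0=\mathrm{Var}(\varepsilon_{i,m_0})$, what remains is the nonparametric estimation error $\sigma_0^2(m)\asymp n^{-4/(4+m)}$ of equation~\eqref{eq:inf_order}, which decays strictly \emph{more slowly} for $m>m_0$, so $(\hat{\sigma}^2(m)-c_0)/(\hat{\sigma}^2(m_0)-c_0)\to\infty$ with no help from the penalty; the BIC penalty is then merely shown in~\eqref{eq:converge_rate} to be a perturbation that does not reorder the minimizer. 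You instead run the classical parametric BIC template, crediting the penalty difference $[v(m,\hat{h}_*(m))-v(m_0,\hat{h}_*(m_0))]\log n\asymp n^{m/(4+m)}\log n$ with the discrimination and demoting the prediction-error difference to a remainder. Both routes work here because the two quantities differ by only a $\log n$ factor and, crucially, the leading deterministic part of your remainder, $n[\sigma_0^2(m)-\sigma_0^2(m_0)]\asymp n^{m/(4+m)}$, is \emph{positive}, so it reinforces rather than threatens the penalty; you should make that sign explicit, since your argument as written survives on a bare $\log n$ margin, whereas the paper's version needs no margin at all. The obstacle you correctly identify --- uniform-in-bandwidth control of the cross term $n^{-1}\sum_i\varepsilon_{i,m}[\hat{\mathfrak{F}}_{-i}(\mathbf{x}_{i;m})-\mathfrak{F}(\mathbf{x}_{i;m_0})]$ --- is exactly what the paper outsources to the cited cross-validation results via the uniform expansion~\eqref{eq:cv_difference} and the Bernstein inequality for $\alpha$-mixing under condition~\ref{cond:c11}; your stochastic-equicontinuity/$U$-statistic plan is a legitimate self-contained substitute. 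Your approach arguably buys the more familiar and transferable BIC argument; the paper's buys the sharper observation that even penalty-free cross-validation would be order-consistent in this nonparametric setting.
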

The detailed proof of~\Cref{thm:BIC_consistent} can be found in~\Cref{sec:lag_order_selection_and_proof} of Appendix. The proof follows the framework of~\citet{vieuOrderChoiceNonlinear1995} combining the discussion in~\citet{shaoAsymptoticTheoryLinear1997}. We can use criterion~\eqref{eq:BIC_m} to choose \( m \) in advance, then implement the computation of relative entropy.
\subsection{Change-points Detection}\label{subsec:Change-points_Detection}%
Once we obtain the relative entropies of time series segments, denoted as \(\mathfrak{E}_{1},\mathfrak{E}_{2},\ldots,\mathfrak{E}_{J}\) where \( J \) represents the number of time series segments. Then, we can apply the existing detection methods such as CUSUM~\citep{pageContinuousInspectionSchemes1954} and its variants~\citep{inclanUseCumulativeSums1994,picardJointSegmentationCalling2011}, quasi-likelihood~\citep{braunMultipleChangepointFitting2000}. In this paper, we propose to use the optimal method~\citep{killickOptimalDetectionChangepoints2012} to search change-points because change-points in mean can be detected with a linear computational cost. Furthermore, their method has been officially adopted in the function~\textit{findchangepts} by MATLAB since 2018, which is convenient to be implemented in our algorithms below.

\subsection{Algorithms}\label{subsec:algorithms}
In practice, let \( \mathcal{X}={(x_{ij})}_{N\times J} \). Each column of \( \mathcal{X} \) represents a time series with length \( N \). Suppose \( \mathcal{X} \) has been transformed by the following logistic function,
\begin{equation}\label{eq:logistic_transform}
    \mathcal{X} = \frac{1}{1+\exp\left( -\mathcal{Y} \right)},
\end{equation}
where \( \mathcal{Y} \) represents the original time series observations without bounded support.
Similar to~\citet{hongAsymptoticDistributionTheory2005}, we use the logistic function~\eqref{eq:logistic_transform} to ensure the compact support in~\Cref{assump:density_assumption} throughout this chapter.

Finally, we summarize our approach using the following two algorithms:
\begin{algorithm}
    \DontPrintSemicolon{}
    \KwData{Matrix observation \(\mathcal{X}\) with size \( N \times J \).}
    \KwResult{\( \hat{m} \), the optimal lag order of \(\mathcal{X}\).}
    \kwInit{\(M \gets 10\)}
    \For{\( j\gets 1 \) \KwTo{} \(J \)}{
    \( m_{j} \gets 0 \) \;
    \( \mathcal{X}_{j} \gets \mathcal{X}(:, j)\) \;
    \For{\( m \gets 1 \) \KwTo{} \( M \)}{
    \( n\gets N-m \) \;
    \For{\( i\gets 1 \) \KwTo{} \( n \)}{
    \(\mathbf{x}_{i;m}\gets \mathcal{X}_{j}(i:(i+m-1))\)
    \;
    \( x_{i+m}\gets \mathcal{X}_{j}(i+m) \) \;
    }
    \( h_{*} \gets \argmin_{h_{*}}1/n\sum_{i=1}^{n}{\left(x_{i+m}-
    \hat{g}_{-i}\left(\mathbf{x}_{i;m}, h_{*} \right) \right)}^{2}\) \;
    \( v(m,h_{*})\gets \text{\textbf{tr}}(L(h_{*})) \) \;
    \( \hat{\sigma}^{2}
    \gets 1/n\sum_{i=1}^{n}{\left(x_{i+m}-\hat{g}\left( \mathbf{x}_{i;m}, h_{*}\right)\right)}^{2} \) \;
    \( \mathfrak{B}_{j}(m) \gets  n\log(\hat{\sigma}^{2}) + v(m,h_{*})\log(n) \) \;
    }
    }
    \(
    \overline{\mathfrak{B}}(m)=J^{-1}\sum_{j=1}^{J}\mathfrak{B}_{j}(m)
    \) \;
    \( \hat{m}\gets \argmin_{m}\{\overline{\mathfrak{B}}(m), m=1,\ldots,M	\}
    \) \;
    \caption{M step\label{al:m_selection_step}}
\end{algorithm}
\begin{algorithm}
    \DontPrintSemicolon{}
    \KwData{Matrix observation \(\mathcal{X}\) with size \( N \times J \).}
    \KwResult{\( \hat{j} \), the change-points.}
    \kwInit{\(\hat{m}\) from \textbf{Algorithm 1}}
    \For{\( j\gets 1 \) \KwTo{} \(J \)}{
        \( \mathfrak{E}_{j} \gets 0 \) \;
        \( \mathcal{X}_{j }\gets \mathcal{X}(:, j)\) \;
        \( \hat{h} \gets \argmin_{h} \left \{ \hat{\mathfrak{E}}(\hat{m}, h|\mathcal{X}_{j }) \right \} \)
        (Estimator~\eqref{eq:estimator_rlen}) \;
        \(\mathfrak{E}_{j}\gets \hat{\mathfrak{E}}(\hat{m}, \hat{h}|\mathcal{X}_{j })  \) \;
    }
    \( \hat{j}\gets \) change-point detected from \( \mathfrak{E}_{j},
    j=1,\ldots,J\) using the proposed  method~\citep{killickOptimalDetectionChangepoints2012} \;
    \caption{E Step\label{al:rlen_step}}
\end{algorithm}

In~\Cref{al:m_selection_step}, one can specify the initial value of \( M \). We take 10 as the default value following the suggestion from Section 4.5 in~\citet{wassermanAllNonparametricStatistics2006}. The bandwidth selection consumes the most computation time.  To reduce the time of selection \( h_{*} \), one can choose \( h_{*} \) moderately at the order \( O(n^{-1/(4+m)}) \) as an initial value, see Section 8.2 in~\citet{fanNonlinearTimeSeries2003} for more details.

In~\Cref{al:rlen_step}, we implement the bandwidth selection as well. The main difference is that the~\Cref{al:m_selection_step} includes the multivariate nonparametric regression, but in~\Cref{al:rlen_step}, \(\mathfrak{E}\) includes the multivariate nonparametric kernel density estimation.

\section{Theory}\label{sec:theory}
\citet{hongAsymptoticDistributionTheory2005} have proved that the relative entropy of pairwise variable \( (X_{t}, X_{t-j}) \) has a normal limiting distribution. The basic idea of~\citet{hongAsymptoticDistributionTheory2005}'s proof is to decompose the relative entropy into a few terms, followed by splitting each term into different parts and by neglecting the parts of smaller orders. Heuristically, the main parts can be expressed by the \textit{U}-statistics. By discussing the limiting distribution of these \textit{U}-statistics, they finally established the consistency theory of relative entropy for pairwise variables.

In this section, we develop a consistency theory of the relative entropy for \( m \) consecutive variables. By using their proving skills and ideas, we show that the limiting distribution of consecutive variables is Gaussian as well if \( m \) has an upper bound, say \( M \). The framework of our proof is very similar to that of~\citet{hongAsymptoticDistributionTheory2005}'s proof. Hence, the notations and most Lemmas and Theorems below originate from the theory and Appendix in~\citet{hongAsymptoticDistributionTheory2005}. However, our theory is not a straightforward extension from pairwise variables to \( m \) consecutive variables. There are some key points that need to be emphasized in our theory because they are different from those in~\citet{hongAsymptoticDistributionTheory2005}'s proof. In the following Theorems, Lemmas as well as their proofs in \textcolor{darkblue}{Appendix}~\ref{sec:consistency_of_rlen}, we will highlight these key points where they need to be emphasized.

From now on, we abbreviate \( \hat{\mathfrak{E}}(m, h|\mathfrak{X}) \) as \( \hat{\mathfrak{E}}(m) \).  Next we rewrite the estimator~\eqref{eq:estimator_rlen}, namely
\begin{equation}\label{eq:rewrite_estimator_rlen}\\
    \begin{aligned}
        \hat{\mathfrak{E}}(m) & = \frac{1}{n}\sum_{i\in S(m)}\left \{ \log\left[ \frac{f\left( \mathbf{x}_{i;m+1} \right)}{g\left( \mathbf{x}_{i;m} \right)g_{1}(x_{i+m})} \right]+\log\left[ \frac{\hat{f}\left( \mathbf{x}_{i;m+1} \right)}{f\left( \mathbf{x}_{i;m+1} \right)} \right] \right. \\
        & \qquad - \left. \log\left[ \frac{\hat{g}\left( \mathbf{x}_{i;m} \right)}{g\left( \mathbf{x}_{i;m} \right)} \right] -\log\left[ \frac{\hat{g}_{1}(x_{i+m})}{g_{1}(x_{i+m})} \right] \right \}, \\
        & = \hat{I}_{m}(f, g\cdot g_{1}) + \hat{I}_{m}(\hat{f}, f) - \hat{I}_{m}(\hat{g}, g) - \hat{I}_{1}(\hat{g}_{1}, g_{1}).
    \end{aligned}
\end{equation}

To obtain the consistency results of~\eqref{eq:rewrite_estimator_rlen}, we need the following assumption.
\begin{assumption}\label{assump:density_assumption}
    Suppose \( \{X_{t}\} \) is strictly stationary time series with the support
    \( \mathbb{I}=[0,1] \). Let \( \mathcal{G}:\mathbb{I}\rightarrow \mathbb{R}^{+}\)
    be the marginal density of \( X_{t} \). On support \( \mathbb{I} \), \( \mathcal{G}
    \) is away from 0 and has twice continuous differentiation
    \( \mathcal{G}^{(2)}(\cdot) \). Furthermore, \( \mathcal{G}^{(2)}(\cdot) \) satisfies the
    Lipschitz condition, i.e., for any \( x_1,x_2\in \mathbb{I}\),
    \( \left\vert \mathcal{G}^{(2)}(x_{1}) - \mathcal{G}^{(2)}(x_2) \right\vert\leq \mathcal{L} \left\vert x_{1}
    -x_{2} \right\vert \), where \( \mathcal{L} \) is the Lipschitz constant.
\end{assumption}

Moreover, at the bounds 0 and 1, the first and second derivatives of \( \mathcal{G}(\cdot) \) are defined by their right-hand derivative and left-hand derivative, respectively.~\textcolor{darkblue}{Assumption}~\ref{assump:density_assumption} is quite general and can avoid the slower convergence rate at the bounds of \( \mathbb{I} \)~\citep{hallEstimatingDirectionWhich1988,robinsonConsistentNonparametricEntropyBased1991,hongAsymptoticDistributionTheory2005}.

We prove the limiting distribution of \(\hat{\mathfrak{E}}(m)\) is Gaussian with the rate \( \sqrt{n}h^{(m+1)/2} \) in \textcolor{darkblue}{Theorem}~\ref{thm:I_m}.
\begin{theorem}\label{thm:I_m}
    Given \textcolor{darkblue}{Assumptions}~\ref{assump:kernel_assumption} and~\ref{assump:density_assumption}, if \(
    nh^{m+1}\rightarrow
    \infty \), \( {(\log
            n)}^{1/2}/(nh^{m+1})\rightarrow 0 \), \(nh^{m+13}\rightarrow 0  \)
    and \( 1\leq m<M \), under \(\mathbb{H}_{0}: f\left( \mathbf{x}_{i;m+1} \right)=g\left( \mathbf{x}_{i;m} \right)g_{1}(X_{i+m})\), we have \(  \)
    \begin{equation}\label{eq:I_m_clt}
        \sqrt{n}h^{(m+1)/2}\left[ 2\hat{\mathfrak{E}}(m)+d_{0}+d_{1} \right] \xrightarrow{d} N(0, \sigma_{*}^{2}),
    \end{equation}
    where \( \sigma_{*}^{2}= 2\beta\kappa^{m}+\beta_{1}\tau_{1}^{m}+2\beta_{2}\tau_{2}^{m}\), \( \tau_{1}= \int_{-1}^{1}{\left[ \int_{-1}^{1}
        K(u)K(u+v)\,\mathrm{d}u \right]}^{2} \,\mathrm{d}v \), \( \tau_{2}=
    \int_{-1}^{1} \int_{-1}^{1}K(u)K(v)K(u+v)\,\mathrm{d}u
    \,\mathrm{d}v \) and \(
    \beta=2n(n-m)(n-m+1)/[n^{2}{(n-1)}^{2}] \), \(
    \beta_{1}=\beta{{(n-2)}^{2}}/{(n-1)}^{2} \) and \( \beta_{2}=\beta(n-2)/(n-1) \).~\( d_{0} \) and \( d_{1} \) are the defined in Lemmas~\ref{lemma:A_2_nm} and~\ref{lemma:I_nm_divide} of the Appendix, respectively.
\end{theorem}
The proof of~\Cref{thm:I_m} can be found in \textcolor{darkblue}{Appendix}~\ref{sec:consistency_of_rlen}.~\textit{U}-statistics plays a significant role in the proof of relative entropy consistency, \textcolor{darkblue}{Appendix}~\ref{sec:lemmas_for_the_second_and_third_u_statistics} includes two lemmas about the second and third-order \textit{U}-statistics. In this section, we assume \( m \) can be arbitrarily large but be bounded by \( M \). It is desirable to relieve this limitation and let \( M \) go to infinity at a suitable rate, say \( M=O(\log(\log(n))) \). This type of extension of our theory is trivial, and the effect of \( M \rightarrow\infty \) needs to be carefully scrutinized, which will not be discussed here. Next, we carry out several numerical examples and real dataset analysis to evaluate our theory.

\section{Numerical Study}\label{sec:numerical_study}
\subsection{Case 1: Nonlinear intermittent time series.}\label{subsec:case_1_}
We consider the following two models, where Model 1 is adopted from Section 8.4 in~\citet[][]{fanNonlinearTimeSeries2003} and Model 2 is a variation of Model 1.
\begin{align*}
    \text{Model 1:} & \qquad x_{i} =-x_{i-2}\exp\left( -x_{i-2}^{2} /2 \right) + (1+x_{i-2}^{2})^{-1}\cos(\alpha x_{i-2})x_{i-1}+\varepsilon_{1i}, \\
    \text{Model 2:} & \qquad y_{i} = -y_{i-2}\exp\left( -y_{i-2}^{2} /2 \right) + (1+y_{i-2}^{2})^{-1}\sin(\alpha y_{i-2})y_{i-1}+\varepsilon_{2i},
\end{align*}
where \( \varepsilon_{1i} \) and \( \varepsilon_{2i} \) are Gaussian white noise with zero mean. The variances are \(\sigma_{1}^{2}=0.4^2\) and \( \sigma_{2}^{2}=0.5^2 \) respectively. Let \( N=400 \) be the length of time series, we generate \( P_{1}=30 \) time series from Model 1 and \( P_{2}=70 \) time series from Model 2. The total number of time series  \( P \) is \(P_{1}+P_{2} =100\). In this case, the change-point is at 31. The initial values of \( x_{1},x_{2},y_{1},y_{2} \) are all set to 1. In the first step, \textcolor{darkblue}{Algorithm}~\ref{al:m_selection_step} is implemented, and it correctly identifies the lag order, i.e., \( \hat{m}=2 \). Next, we apply \textcolor{darkblue}{Algorithm}~\ref{al:rlen_step} to the simulated dataset and compute the \(\mathfrak{E}\) values for each time series, namely, \(\hat{\mathfrak{E}}_{1},\ldots,\hat{\mathfrak{E}}_{P}\). Besides, we also use the approximated entropy (ApEn) method to calculate the conditional entropies of \(P\) time series, denoted as \(\hat{\mathfrak{A}}_{1},\ldots,\hat{\mathfrak{A}}_{P}\). Finally, we apply the change-point detection method~\citep{killickOptimalDetectionChangepoints2012} to sequences \(\hat{\mathfrak{E}}_{1},\ldots,\hat{\mathfrak{E}}_{P}\) and \(\hat{\mathfrak{A}}_{1},\ldots,\hat{\mathfrak{A}}_{P}\), respectively. Furthermore, we randomly draw \( \alpha \) from the interval \( [1,2] \) 150 times and repeat the previous procedures for each \( \alpha \) using ApEn and our algorithms. As described in \Cref{al:m_selection_step}, the logistic transformation is applied to ensure the data is confined within the compact support \([0,1]\). Following this transformation, the Augmented Dickey-Fuller test~\citep{adftest} is employed to assess the stationarity of the transformed time series, with an average stationarity rate of 100\% for all the time series under Case 1 settings.

\Cref{tab:The_Change-point_Detection_Based_on_ApEn_and_RlEn} presents the frequency of detected change-points based on the statistics \(\hat{\mathfrak{E}}_{1},\ldots,\hat{\mathfrak{E}}_{P}\) and \(\hat{\mathfrak{A}}_{1},\ldots,\hat{\mathfrak{A}}_{P}\), mean and variance respectively. For clarity, we only display the detected changes-points within the interval \([28,34]\). Our method identifies the underlying change-point at 31 with a percentage of 90.67\%. In contrast, ApEn method successfully detects the exact change-point in 85 out of 150 instances. Regarding change-point detection based on mean and variance, the percentages of correctly identifying the change-point at 31 are 0.007\% and 49.33\%, respectively.

\begin{table}[htbp]
    \centering
    \caption{The frequency of detected change-points based on ApEn, our method, mean and variance, respectively. The underlying change-point is at 31.~\(\mathfrak{A}\) and \(\mathfrak{E}\) represent the ApEn and our method. For simplicity, we only show the detected changes-points that are in the interval \([28,34]\).}\label{tab:The_Change-point_Detection_Based_on_ApEn_and_RlEn}
    \begin{tabular}{cccccccc}
        \toprule
        \multirow{2}{*}{Method}& \multicolumn{7}{c}{Detected Change-points} \\
        \cline{2-8}
        & 28 & 29 & 30 & 31 & 32 & 33 & 34 \\ \hline
        \(\mathfrak{A}\) & 6 & 8 & 21 & 85 & 15 & 7 & 3 \\
        \(\mathfrak{E}\) & 1 & 0 & 9 & 136 & 3 & 1 & 0 \\
        Mean & 0 & 0 & 4 & 1 & 0 & 0 & 1 \\
        Variance & 6 & 7 & 11 & 74 & 25 & 8 & 5 \\
        \bottomrule
    \end{tabular}

\end{table}
\subsection{Case 2: Linear intermittent time series.}\label{subsec:case_2}
Suppose there are two AR(3) processes:
\begin{align*} 
    \text{Process 1:} & \qquad x_{i} =\phi_{1}x_{i-1}+\phi_{2}x_{i-2} + \phi_{3}x_{i-3} +\varepsilon_{1i}, \\
    \text{Process 2:} & \qquad y_{i} =\phi_{1}^{*}y_{i-1}+\phi_{2}^{*}y_{i-2} + \phi_{3}^{*}y_{i-3} +\varepsilon_{2i},
\end{align*}
where \( \varepsilon_{1i} \), \( \varepsilon_{2i} \) are white noise with zero mean and variance \(\sigma^{2}_{1} \), \(\sigma^{2}_{2} \) respectively. It is easy to verify that the variance of \( x_{i} \) is \( \sigma_{1}^{2}/(1-\phi_{1}\rho_{1}-\phi_{2}\rho_{2} - \phi_{3}\rho_{3}) \) where

\begin{equation}\label{eq:rho_3}\\
    \begin{aligned}
        \rho_{1} & = -(\phi_{1}+\phi_{2}\,\phi_{3})/\phi_{d}, \\
        \rho_{2} & = -(\phi_{1}^2+\phi_{3}\,\phi_{1}-{\phi_{2}}^2+\phi_{2})/\phi_{d}, \\
        \rho_{3} & = -(\phi_{1}^3+{\phi_{1}}^2\,\phi_{3}+c_{1}\phi_{1}+c_{2})/\phi_{d},
    \end{aligned}
\end{equation}
and \( \phi_{d}= \phi_{3}^2+\phi_{1}\,\phi_{3}+\phi_{2}-1\), \( c_{1} = -{\phi_{2}}^2+2\,\phi_{2}-{\phi_{3}}^2 \), \( c_{2}={\phi_{2}}^2\,\phi_{3}-\phi_{2}\,\phi_{3}-{\phi_{3}}^3+\phi_{3}\). Suppose \( {x_{i}} \) and \( y_{i} \) have the same variance, then
\begin{equation}\label{eq:sigma1_sigma2_3}
    \sigma_{2}^{2}=\sigma_{1}^{2}\frac{1-\phi_{1}^{*}\rho_{1}^{*}-\phi_{2}^{*}\rho_{2}^{*} - \phi_{3}^{*}\rho_{3}^{*}}{1-\phi_{1}\rho_{1}-\phi_{2}\rho_{2} - \phi_{3}\rho_{3}},
\end{equation}
where \( \rho_{1}^{*},\rho_{2}^{*},\rho_{3}^{*} \) are the expressions of equation~\eqref{eq:rho_3} with \( \phi_{1},\phi_{2},\phi_{3} \) replaced by \( \phi_{1}^{*},\phi_{2}^{*},\phi_{3}^{*} \). We let \( \phi_{1}=0.8 \), \( \phi_{1}^{*}=0.7 \), \( \phi_{2}=\phi_{2}^{*}=-0.3 \), \( \phi_{3}=\phi_{3}^{*}=0.1 \) and \( \sigma^{2}_{1}=0.1 \),  \( \sigma_{2}^{2} \) is obtained according to equation~\eqref{eq:sigma1_sigma2_3}, i.e., 0.1168.  Let the length of time series be \( N=500 \), and randomly generate \( P_{1}=60 \) and \( P_{2}=40 \) time series from Process 1 and Process 2 respectively. Denote \( P = P_{1}+P_{2} \),  the change-point is at 61. To investigate the robustness of~\Cref{al:m_selection_step} with respect to the selection of \(m\), we appropriately allow \( m \) to change from 1 to 6. For each \( m \), both our method and ApEn are calculated using the same time series.
Last, repeat the above estimation procedure \( J=150 \) times. Let \( \tau  \) represent the change-point, we define the mean absolute distance (MAD) as \( \bar{\tau}={J}^{-1}\sum\nolimits_{j=1}^{J} \left\vert \tau_{j}-61 \right\vert \).

\textcolor{darkblue}{Table}~\ref{tab:Case_apen_rlen} shows the comparison results between our method and ApEn. The MAD based on our relative entropy is consistently smaller than that of ApEn for \( m=1,\ldots,6 \). The `Failure' columns in \textcolor{darkblue}{Table}~\ref{tab:Case_apen_rlen} represent the number of no change-point detected. The `Accuracy' column lists the percentage of exactly detecting the change-point at \( \tau=61 \).~\(\mathfrak{E}\) method can identify the change-point for the 150 repetitions,  out of which there are at least 105 exact detections. However, ApEn is not as robust as \(\mathfrak{E}\) when \( m \) is large, for instance, when \( m=6 \), the number of exactly detecting \( \tau=61 \) is 0 and the failure number of finding change-point is 116 for ApEn method. Especially, as~\citet{pincusApproximateEntropyMeasure1991}'s suggestion, \( m=2 \) is not a suitable choice in this simulation. Furthermore, this study also verifies that our \(\mathfrak{E}\) is robust with respect to the lag order. Even \( m \) is misspecified, the MAD is still less than 0.45. It needs to be pointed out that after the logistic transformation, 58.11\% of the transformed time series are stationary. ApEn results are totally based on the stationary time series. This means that our method is robust even for non-stationary time series as well.
\begin{table}[htbp]
    \centering
    \caption{The comparison between our method (\(\mathfrak{E}\)) and ApEn for different \(m\) in Case 2. The `Failure' column represents the number of instances where no change-point was detected. The `Accuracy' column represents the percentage of cases where the change-point was exactly detected at 61.}\label{tab:Case_apen_rlen}
    \begin{tabular}{ccccccc}
        \toprule
        \multirow{2}{*}{\( m \)} &\multicolumn{3}{c}{\(\mathfrak{E}\)}&\multicolumn{3}{c}{\(\mathfrak{A}\)} \\
        \cmidrule(lr){2-4}\cmidrule(lr){5-7}
        & MAD & Failure & \( Accuracy \) & MAD & Failure & \( Accuracy \) \\
        \hline
        1 & 0.3333 & 0 & 0.7800 & 0.6333 & 0 & 0.6467 \\
        2 & 0.3467 & 0 & 0.7333 & 33.9000 & 110 & 0.00 \\
        3 & 0.3533 & 0 & 0.7733 & 1.1200 & 0 & 0.5333 \\
        4 & 0.4200 & 0 & 0.7267 & 3.8200 & 0 & 0.2467 \\
        5 & 0.3667 & 0 & 0.7467 & 11.9145 & 33 & 0.1067 \\
        6 & 0.4600 & 0 & 0.7267 & 20.4762 & 108 & 0.0133 \\
        \bottomrule
    \end{tabular}

\end{table}


\section{Real Data Analysis}\label{sec:real_data_analysis}

\subsection{Multi-subjects Muscle Contraction Dataset}\label{subsec:Multi-subjects_Muscle_Contration_Dataset}
This dataset consists of 11 subjects' muscle contraction observations. Each subject needs to perform a series of intermittent isometric contractions (six seconds for contraction and four seconds for rest) until task failure~\citep[][]{pethickLossKneeExtensor2016}. Therefore, the number of each subject contractions is not consistent, see \textcolor{darkblue}{Table}~\ref{tab:Result_of_Change-point_Detection}. The sampling frequency is 1 kHz. We found that the \textcolor{darkblue}{Figures}~\ref{fig:rlen_three_groups1.png} and~\ref{fig:rlen_three_groups2.png} share the similar patterns, and both are significantly different to \textcolor{darkblue}{Figure}~\ref{fig:rlen_three_groups3.png}. Hence, in this study, we only find one change-point. Furthermore, based on the analysis of selection of \( m \) in Cases 2 and Case 3 in Appendix, the selection of \( m \)  is not sensitive to the change-point detection. In many research fields, ApEn is frequently employed to evaluate the complexity of signals~\citep{richmanPhysiologicalTimeseriesAnalysis2000,buriokaApproximateEntropyElectroencephalogram2005,pethickLossKneeExtensor2016}. Considering the computational complexity, we set \( m=2 \) to coordinate with ApEn. The change-point detections based on \(\mathfrak{E}\) for each subject are summarized in \textcolor{darkblue}{Table}~\ref{tab:Result_of_Change-point_Detection}. In contrast, we also obtain the change-point results based on ApEn, see \textcolor{darkblue}{Table}~\ref{tab:Result_of_Change-point_Detection_apen}. The parameter settings for ApEn follow the suggestions in~\citet{pincusApproximateEntropyMeasure1991}.
\begin{table}[htbp]
    \centering
    \caption{Result of Change-point Detection Based on
        \(\mathfrak{E}\)}\label{tab:Result_of_Change-point_Detection}
    \resizebox{0.85\linewidth}{!}{%
        \begin{tabular}{ccccccc}
            \toprule
            Subject & \( N_{c} \) & \( CP \) & \(CP/ N_{c} \) & \( \overline{\mathfrak{E}}_{1}(std.)  \) & \( \overline{\mathfrak{E}}_{2}(std.)  \) & \( p \)-value \\\hline
            1 & 70 & 22 & 31.43\% & 4.0315(0.2065) & 4.2406(0.2129) & 4.30e-04 \\
            2 & 38 & 11 & 28.95\% & 3.6582(0.1889) & 4.3606(0.2218) & 1.22e-08 \\
            3 & 54 & 23 & 42.59\% & 3.8257(0.2571) & 4.5578(0.2871) & 4.52e-13 \\
            4 & 79 & 54 & 68.35\% & 4.1137(0.2065) & 4.3692(0.2502) & 5.12e-05 \\
            5 & 289 & 236 & \textbf{81.66}\% & 3.4292(0.2286) & 3.8779(0.2563) & 1.07e-18 \\
            6 & 54 & 40 & 74.07\% & 3.8749(0.2395) & 4.6409(0.1624) & 5.78e-16 \\
            7 & 80 & 49 & 61.25\% & 3.6241(0.2112) & 4.1233(0.2889) & 2.81e-11 \\
            8 & 177 & 78 & 44.07\% & 4.1200(0.2251) & 4.4312(0.1759) & 4.03e-18 \\
            9 & 52 & 23 & 44.23\% & 3.7092(0.1310) & 4.3786(0.2141) & 1.31e-18 \\
            10 & 87 & 19 & \textbf{21.84}\% & 3.9454(0.1881) & 4.3561(0.1665) & 1.05e-08 \\
            11 & 89 & 38 & 42.70\% & 3.9879(0.1999) & 4.4219(0.2508) & 3.62e-14 \\
            \bottomrule
        \end{tabular}}
\end{table}
\begin{table}[htbp]
    \centering
    \caption{Result of Change-point Detection Based on
        ApEn}\label{tab:Result_of_Change-point_Detection_apen}
    \resizebox{0.85\linewidth}{!}{%
        \begin{tabular}{ccccccc}
            \toprule
            Subject & \( N_{c} \) & \( CP \) & \(CP/ N_{c} \) & \( \overline{ApEn}_{1}(std.)  \) & \( \overline{ApEn}_{2}(std.)  \) & \( p \)-value \\\hline
            1 & 70 & 68 & 97.14\% & 0.0062(0.0023) & 0.0114(0.0050) & 0.215 \\
            2 & 38 & 10 & 26.32\% & 0.0134(0.0043) & 0.0037(0.0018) & 1.15e-04 \\
            3 & 54 & 18 & 33.33\% & 0.0181(0.0067) & 0.0040(0.0025) & 1.33e-07 \\
            4 & 79 & -- & -- & -- & -- & -- \\
            5 & 289 & 239 & 82.70\% & 0.0139(0.0053) & 0.0073(0.0031) & 1.64e-22 \\
            6 & 54 & 26 & 48.15\% & 0.0144(0.0046) & 0.0040(0.0031) & 5.22e-12 \\
            7 & 80 & 48 & 60.00\% & 0.0129(0.0042) & 0.0059(0.0030) & 4.54e-13 \\
            8 & 177 & 78 & 44.07\% & 0.0068(0.0019) & 0.0047(0.0013) & 8.36e-13 \\
            9 & 52 & 19 & 36.54\% & 0.0231(0.0059) & 0.0046(0.0031) & 1.94e-11 \\
            10 & 87 & 33 & 37.93\% & 0.0098(0.0023) & 0.0063(0.0026) & 5.80e-09 \\
            11 & 89 & 39 & 43.82\% & 0.0123(0.0031) & 0.0047(0.0021) & 4.41e-19 \\
            \bottomrule
        \end{tabular}}
\end{table}

In \textcolor{darkblue}{Tables}~\ref{tab:Result_of_Change-point_Detection} and~\ref{tab:Result_of_Change-point_Detection_apen}, \( N_{c} \) represents the number of contractions in the series of experiments. \( CP \) stands for the change-point detected by ApEn or \(\mathfrak{E}\). \(CP/ N_{c} \) is the percentage of the way through the trial.~\( \overline{\mathfrak{E}}_{1}(std.)  \), \( \overline{\mathfrak{E}}_{2}(std.)  \), \( \overline{ApEn}_{1}(std.)  \) and \( \overline{ApEn}_{2}(std.)  \) stand for the two groups entropy averages (standard deviation) for \(\mathfrak{E}\) and ApEn respectively.
The last column shows the \( p \)-values of \( t \)-test for the mean comparison of the two groups.

In \textcolor{darkblue}{Table}~\ref{tab:Result_of_Change-point_Detection}, we can conclude that the intermittent isometric contractions of each subject can be divided into two groups which are supported by the \( p \)-values in the last column. The averages of \(\mathfrak{E}\) in the first group \( \overline{\mathfrak{E}}_{1}  \) are consistently smaller than \( \overline{\mathfrak{E}}_{2}  \). It is not surprising because muscle fatigue will increase the entropy of contraction signals~\citep[][]{pethickLossKneeExtensor2016}. According to \(CP/ N_{c} \), Subject 5 has the largest relative location of change-point while Subject 2's is just 21.84\%. Compared to other subjects, it means that the contraction torques are stable, and Subject 5 can maintain steady contractions for a long time.

In \textcolor{darkblue}{Table}~\ref{tab:Result_of_Change-point_Detection_apen}, the ``--'' represents the failure of change-point detection based on ApEn. Besides, the \( p \)-value of \( t \)-test for Subject 1 is even larger than 0.1, which means the change-point, 68, is not statistically reliable. It is also worth pointing out that for Subject 10, the change-point based on ApEn is 33 but is 19 based on \(\mathfrak{E}\).~\textcolor{darkblue}{Figure}~\ref{fig:Group_Division_for_Subject_10} shows the two divided groups using ApEn and \(\mathfrak{E}\) respectively. It is clear that the group in \textcolor{darkblue}{Figure}~\ref{fig:rlen_1.png} is more stable than the group illustrated by \textcolor{darkblue}{Figure}~\ref{fig:apen_1.png}. Moreover, there is no need to compare the averages of ApEn and \(\mathfrak{E}\) because ApEn has two free parameters and is not transformation invariant. The change-points of other subjects are almost the same for both ApEn and \(\mathfrak{E}\).
\begin{figure}[ht]
    \centering
    \begin{subfigure}[bt]{0.48\textwidth}
        \centering
        \includegraphics[width=\textwidth]{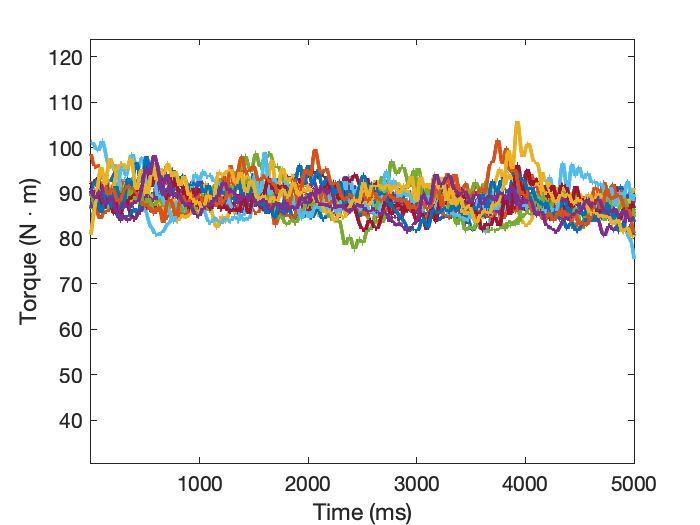}
        \caption{\(\mathfrak{E}\), First 18 Contractions}\label{fig:rlen_1.png}
    \end{subfigure}
    \begin{subfigure}[bt]{0.48\textwidth}
        \centering
        \includegraphics[width=\textwidth]{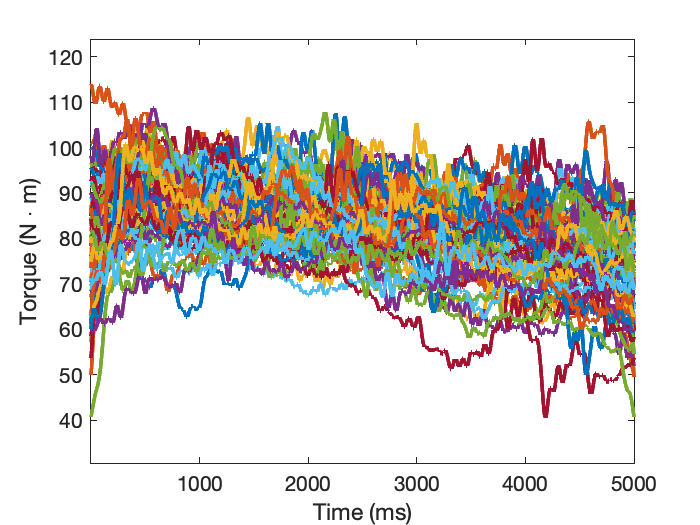}
        \caption{\(\mathfrak{E}\), Last 69 Contractions}\label{fig:rlen_2.png}
    \end{subfigure}
    \begin{subfigure}[bt]{0.48\textwidth}
        \centering
        \includegraphics[width=\textwidth]{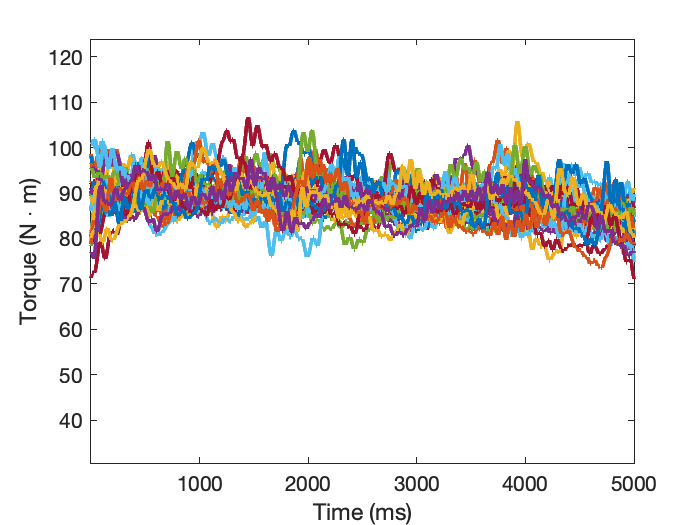}
        \caption{ApEn, First 32 Contractions}\label{fig:apen_1.png}
    \end{subfigure}
    \begin{subfigure}[bt]{0.48\textwidth}
        \centering
        \includegraphics[width=\textwidth]{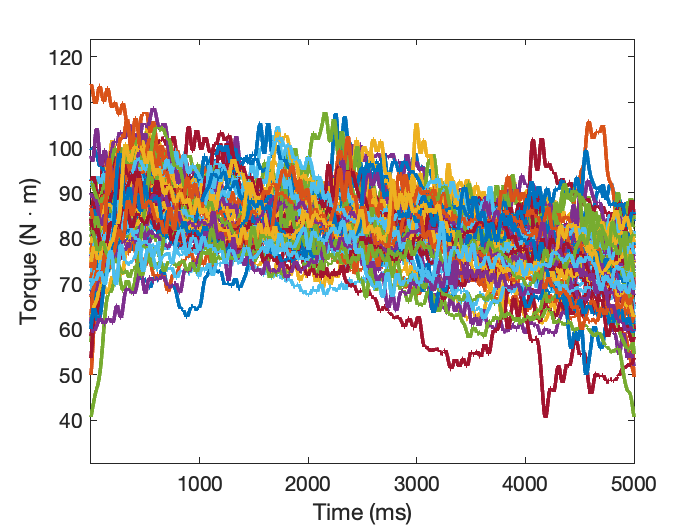}
        \caption{ApEn, Last 55 Contractions}\label{fig:apen_2.png}
    \end{subfigure}
    \caption{Divided Groups for Subject 10}\label{fig:Group_Division_for_Subject_10}
\end{figure}

Cases 1-3 show that the \(\mathfrak{E}\) is less sensitive to the lag order \( m \) and better than ApEn. Combining the results of muscle contraction data, i.e., \textcolor{darkblue}{Figure}~\ref{fig:Group_Division_for_Subject_10}, \textcolor{darkblue}{Tables}~\ref{tab:Result_of_Change-point_Detection} and~\ref{tab:Result_of_Change-point_Detection_apen}, our \(\mathfrak{E}\) performs better than the ApEn.


\section{Discussion}\label{sec:discussion}
In this article, we have proposed a nonparametric relative entropy as a testing statistic to detect the change-points among intermittent time series. In the nonparametric setting with weak dependence, we have developed a class of leave-one-out relative entropy. Given \textcolor{darkblue}{Assumptions}~\ref{assump:kernel_assumption} and~\ref{assump:density_assumption}, we have proved that the relative entropy has a limiting normal distribution with of order \( \sqrt{n}h^{(m+1)/2} \). Similarly, we have also discussed the selection of lag order \( m \). We suggest using BIC to select \( m \). If \( m \) has an upper bound, a theory of the selection of \(m\) can ensure consistency based on BIC from the point of view of nonparametric regression. Three simulations have shown that the relative entropy is appropriate to summaries the information of a time series. Based on RlEn, one can find the change-points among intermittent time series with high accurateness. One real data analysis has shown that our approach is effective in terms of change-point detection in practice as well.
\section*{Data and Code Availability}
The code is available at GitHub repository:~\href{https://github.com/Jieli12/RlEn}{https://github.com/Jieli12/RlEn}. Please contact Mark Burnley (m.burnley@lboro.ac.uk) and Samantha L. Winter (s.l.winter@lboro.ac.uk) to access the real sports data.

\section*{Competing interests}
We declare that we have no competing interests.

\section*{Acknowledgements}

Jie Li's work is supported by GTA PhD Scholarship and the Knowledge Transfer Partnership (KTP, No.10053865) project at the University of Kent.  The authors thank the editor, co-editor and the anonymous referees for very useful comments that improved the presentation of the paper.

\small\bibliographystyle{chicago}
\bibliography{reference.bib}
\clearpage
\newpage
\appendix
\section{Relative Entropy for Stationary AR(2) Process}\label{sec:relative_entropy_for_stationary_ar2_model}

Without loss of generality, let
\begin{equation}\label{eq:ar2_model}
    x_{i}=\phi_{1}x_{i-1}+\phi_{2}x_{i-2} + \varepsilon_{i},
\end{equation}
represent the AR\( (2) \) process without intercept, where \( \varepsilon_{i} \) is Gaussian white noise with zero mean and variance \( \sigma^{2} \). Suppose \( -1 < \phi_{2} < 1- \left\vert \phi_{1} \right\vert \), then process~\eqref{eq:ar2_model} is stationary. Let \( \gamma_{0}=E(X_{i}^{2}) \), \( \gamma_{1}=E(X_{i}X_{i-1})\) and \( \gamma_{2}=E(X_{i}X_{i-2})\). By~\eqref{eq:ar2_model}, we have the following Yule-Walker equations: \( \gamma_{0}  = \gamma_{0}\,{\phi_{1}}^2+2\,\gamma_{1}\,\phi_{1}\,\phi_{2}+\gamma_{0}\,{\phi_{2}}^2+\sigma^{2} \), \( \gamma_{1}  = \gamma_{0}\,\phi_{1}+\gamma_{1}\,\phi_{2} \) and \( \gamma_{2}  =\gamma_{1}\,\phi_{1}+\gamma_{0}\,\phi_{2} \).
Solving the above linear equations, we can get \( \gamma_{0}  = \sigma^{2}\,(\phi_{2}-1)/\phi_{c} \), \( \gamma_{1}  = -\phi_{1}\,\sigma^{2}/\phi_{c} \) and \( \gamma_{2}  = -\sigma^{2}\,({\phi_{1}}^2-{\phi_{2}}^2+\phi_{2})/\phi_{c} \)
where \( \phi_{c}=\left(\phi_{2}+1\right)\,\left({\phi_{1}}^2-{\phi_{2}}^2+2\,\phi_{2}-1\right) \). As \( \varepsilon_{i} \) is the Gaussian white noise, \( X_{i}, X_{i-1}, X_{i-2} \) have the following density function, namely,
\begin{equation*}
    f(x_{i},x_{i-1},x_{i-2}) = {(2\pi)}^{-\frac{3}{2}}{\left\vert \Sigma \right\vert}^{-\frac{1}{2}}\exp\left( - (x_{i},x_{i-1},x_{i-2})\Sigma^{-1}{(x_{i},x_{i-1},x_{i-2})}^{T}/2\right),
\end{equation*}
where
\begin{equation*}
    \Sigma=\begin{bmatrix}
        \gamma_{0} & \gamma_{1} & \gamma_{2} \\
        \gamma_{1} & \gamma_{0} & \gamma_{1} \\
        \gamma_{2} & \gamma_{1} & \gamma_{0}
    \end{bmatrix},\qquad
    \Sigma^{-1}=\frac{1}{\sigma^{2}}\left(\begin{array}{ccc} 1 & -\phi_{1} & -\phi_{2}\\ -\phi_{1} & {\phi_{1}}^2-{\phi_{2}}^2+1 & -\phi_{1}\\ -\phi_{2} & -\phi_{1} & 1 \end{array}\right),
\end{equation*}
and \( \left\vert\Sigma\right\vert=-\sigma^{6}/(\phi_{c}\left(\phi_{2}+1\right)) \).
Therefore, the entropy of \( f(x_{i},x_{i-1},x_{i-2}) \) is
\begin{equation}\label{eq:entropy_joint}\\
    \begin{aligned}
        \text{En}(f) & = -\iiint f(x_{i},x_{i-1},x_{i-2})\log\left( f(x_{i},x_{i-1},x_{i-2}) \right) \,\mathrm{d}x_{i}\,\mathrm{d}x_{i-1}\,\mathrm{d}x_{i-2}, \\
        & = 2^{-1}\log\left( {(2\pi e)}^{3} \left\vert\Sigma\right\vert\right).
    \end{aligned}
\end{equation}
By~\eqref{eq:ar2_model} and given \( x_{i-1},x_{i-2} \),  we can obtain the conditional density:
\begin{equation*}
    g(x_{i}|x_{i-1},x_{i-2})={(2\pi)}^{-1/2}{\left\vert \sigma \right\vert}^{-1}\exp\left( - {(x_{i}-\phi_{1}x_{i-1}-\phi_{2}x_{i-2})}^{2}/(2\sigma^{2})\right).
\end{equation*}
The conditional entropy is
\begin{align*}
    \text{CoEn}(f,g) & = -\iiint f(x_{i},x_{i-1},x_{i-2})\log\left( g(x_{i}|x_{i-1},x_{i-2}) \right) \,\mathrm{d}x_{i}\,\mathrm{d}x_{i-1}\,\mathrm{d}x_{i-2}, \\
    & = \frac{1}{2}\log(2\pi)+\log(\sigma)+\frac{1}{2\sigma^{2}}E\left[ {(x_{i}-\phi_{1}x_{i-1}-\phi_{2}x_{i-2})}^{2} \right], \\
    & =\frac{1}{2}\log(2\pi)+\log(\sigma)+\frac{1}{2\sigma^{2}}E\left[ y^{2} \right],
\end{align*}
where \( y=c^{T}\mathbf{x} \), \( c^{T}=(1, -\phi_{1}, -\phi_{2}) \), \( \mathbf{x}={(x_{i},x_{i-1},x_{i-2})}^{T} \). Apparently, \( y\sim N(0,c^{T}\Sigma c) \). It is easy to verify that \( c^{T}\Sigma c=\sigma^{2} \), so
\begin{equation}\label{eq:conditional_entropy_final}
    \text{CoEn}(f, g) = 2^{-1}\log(2\pi e)+\log(\sigma).
\end{equation}
We also notice that the density of \( x_{i} \) is \( g(x_{i})={(2\pi)}^{-1/2}{\left\vert \gamma_{0} \right\vert}^{-1/2}\exp\left( - x_{i}^{2}/2\gamma_{0}\right) \).
Finally, one can obtain the relative entropy
\begin{align}
    \text{RlEn}(f, g)
    & =2^{-1}\log\left( (\phi_{2}-1)/\phi_{c} \right)\label{eq:relative_entropy}.
\end{align}
Comparing Entropy~\eqref{eq:entropy_joint}, Conditional Entropy~\eqref{eq:conditional_entropy_final} with Relative Entropy~\eqref{eq:relative_entropy}, we conclude that RlEn is determined by the coefficients of autoregression coefficients and does not include the variance of noise in the AR(\textit{2}) process.
\section{Background noise-free}
For simplicity, we only show the property of background noise-free for AR\( (p) \) process. It is easy to extend to the general case such as MA\( (q) \), ARMA\( (p,q) \) process.
Without loss of generality, let the AR\( (p) \) process with zero mean be
\begin{equation}\label{eq:ar_p}
    x_{i}=\phi_{1}x_{i-1}+\phi_{2}x_{i-2} +\cdots + \phi_{p}x_{i-p} + \varepsilon_{i},
\end{equation}
where \( \phi_{1},\ldots, \phi_{p}  \) are autoregression coefficients, \( \varepsilon_{i} \) is the Gaussian white noise with zero mean and variance \( \sigma^{2} \), \( \{x_{i}\}_{1\leq i\leq n} \) and \( \{\varepsilon_{i}\}_{1\leq i\leq n} \) are dependent. Let \( \mathbf{x}_{m+1} ={(x_{i},\ldots,x_{i+m} )}^{\top}\), \( \mathbf{x}_{m+1-s} ={(x_{i},\ldots,x_{i+m-s} )}^{\top}\) and \( \mathbf{x}_{s} ={(x_{i+m-s+1},\ldots,x_{i+m} )}^{\top}\). \( \gamma_{k}=E(x_{i}x_{i-k}), k = 0, \pm 1, \pm 2, \ldots \) represent the auto-covariance functions, apparently \( \gamma_{k}=\gamma_{-k} \). Next, define \( \rho_{k}=\gamma_{k}/\gamma_{0}, k = 0, \pm 1, \pm 2, \ldots \) as the auto-correlation functions. Hence, we have the following proposition.
\begin{proposition}\label{thm:relative_entropy}
    Supposed \( \{x_{i}\} \) is a time series from the stationary AR\( (p) \) process
    defined in~\eqref{eq:ar_p}, \( \varepsilon_{i} \) is the Gaussian white noise
    with zero mean and variance \(\sigma^{2} \),
    then we have
    \begin{equation}\label{eq:mapping_function_y_to_x}
        \mathcal{I}_{s}\left(\mathbf{x}_{m+1}\right)=\frac{1}{2}\log\left( \frac{\left\vert R_{11;ms} \right\vert \left\vert R_{22;ms} \right\vert}{\left\vert R_{m} \right\vert} \right),\quad 1\leq s\leq m,
    \end{equation}
    which is independent of \( \sigma^{2} \), where \( \left\vert \cdot \right\vert \) is a matrix determinant operator.
\end{proposition}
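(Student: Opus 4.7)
My plan is to exploit two facts: first, for a stationary Gaussian AR$(p)$ process, every sub-vector of $(x_i,\ldots,x_{i+m})$ is multivariate normal with zero mean, so the relative entropy $\mathcal{I}_s$ reduces to a pure log-determinant ratio; second, the noise variance $\sigma^2$ enters every autocovariance $\gamma_k$ only as a common multiplicative factor, so it cancels when we switch from covariance matrices to correlation matrices.

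Step 1 (Gaussian reduction). Because $\varepsilon_i\sim N(0,\sigma^2)$ and the process is stationary, the vector $\mathbf{x}_{m+1}$ is jointly $N(\mathbf{0},\Sigma_m)$ with $(\Sigma_m)_{jk}=\gamma_{|j-k|}$. The marginals of $\mathbf{x}_{m+1-s}$ and $\mathbf{x}_{s}$ are likewise centred Gaussian with covariance matrices $\Sigma_{11;ms}$ and $\Sigma_{22;ms}$, which are the natural block principal minors of $\Sigma_m$. Writing the three log-densities as quadratic forms and taking $\mathbb{E}_f$, the quadratic-form expectations collapse via $\mathbb{E}_f[\mathbf{y}^\top\Sigma^{-1}\mathbf{y}]=\mathrm{tr}(\Sigma^{-1}\mathrm{Cov}(\mathbf{y}))$ to the dimensions $m+1$, $m+1-s$, $s$, which satisfy $(m+1)-(m+1-s)-s=0$; the $\log(2\pi)$ terms similarly cancel. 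What remains is
\begin{equation*}
\mathcal{I}_s(\mathbf{x}_{m+1})=\tfrac{1}{2}\log\!\left(\frac{|\Sigma_{11;ms}|\,|\Sigma_{22;ms}|}{|\Sigma_m|}\right).
\end{equation*}

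Step 2 (Noise cancellation). Factor out the marginal variance by writing $\Sigma_m=\gamma_0 R_m$, $\Sigma_{11;ms}=\gamma_0 R_{11;ms}$, and $\Sigma_{22;ms}=\gamma_0 R_{22;ms}$, where $R_m,R_{11;ms},R_{22;ms}$ are the corresponding correlation matrices with entries $\rho_{|j-k|}$. Pulling the scalar $\gamma_0$ out of each determinant contributes the exponents $m+1-s$, $s$, $-(m+1)$, which sum to zero. This yields exactly the claimed identity~\eqref{eq:mapping_function_y_to_x}.

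Step 3 (Independence from $\sigma^2$). It remains to verify that $R_m$ contains no $\sigma^2$. From the Yule--Walker recursion $\gamma_k=\phi_1\gamma_{k-1}+\cdots+\phi_p\gamma_{k-p}$ ($k\ge 1$), dividing through by $\gamma_0$ gives $\rho_k=\phi_1\rho_{k-1}+\cdots+\phi_p\rho_{k-p}$, a $\sigma^2$-free recursion; the initial values $\rho_1,\ldots,\rho_{p-1}$ are obtained from the first $p-1$ Yule--Walker equations, which, after the same division, involve only $\phi_1,\ldots,\phi_p$. Hence $\rho_k$, and therefore all three correlation matrices, are functions of the autoregression coefficients alone, and~\eqref{eq:mapping_function_y_to_x} is independent of $\sigma^2$.

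There is no serious obstacle here; the only point demanding a bit of care is step 1, where one must track the cancellation of the constant and trace contributions in the Gaussian log-likelihood ratio so that the relative entropy really does collapse to a log-determinant ratio with no residual $\sigma^2$-dependence through an additive term. The subsequent scaling argument in steps 2--3 is then purely algebraic and completes the proof.
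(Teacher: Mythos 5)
Your proposal is correct and follows essentially the same route as the paper: both reduce the relative entropy to the difference of Gaussian entropies (you inline the trace identity that the paper isolates in its entropy lemma), both cancel the $\gamma_0$ powers to pass from $\Sigma$ to $R$ matrices, and both invoke the normalized Yule--Walker equations to conclude that the autocorrelations, and hence the result, are free of $\sigma^{2}$. No gaps; if anything, your Step 3 states the $\sigma^{2}$-independence argument more cleanly than the paper's closing remark.
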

To prove this proposition, we need the following lemma
\begin{lemma}\label{lemma:entropy}
    If multivariate variable \( \mathbf{x}\sim N_{p}(\mu,\Sigma) \), then the entropy of \( \mathbf{x} \), denoted as \( h(\mathbf{x}) \), is
    \begin{equation*}
        h(\mathbf{x}) = \frac{1}{2}\log\left( {(2\pi e)}^{p} \left\vert \Sigma \right\vert  \right).
    \end{equation*}
\end{lemma}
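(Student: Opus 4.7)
The plan is to compute $h(\mathbf{x}) = -\mathbb{E}[\log f(\mathbf{x})]$ directly from the definition of differential entropy, where $f$ is the multivariate Gaussian density. I would start by writing out the log-density
\[
\log f(\mathbf{x}) = -\frac{p}{2}\log(2\pi) - \frac{1}{2}\log\left\vert \Sigma \right\vert - \frac{1}{2}(\mathbf{x}-\mu)^{\top}\Sigma^{-1}(\mathbf{x}-\mu),
\]
so that taking expectations term by term reduces the problem to evaluating $\mathbb{E}\bigl[(\mathbf{x}-\mu)^{\top}\Sigma^{-1}(\mathbf{x}-\mu)\bigr]$, since the first two terms are deterministic constants.

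Next, I would handle the quadratic form via the trace trick: since a scalar equals its own trace, one may cycle
\[
(\mathbf{x}-\mu)^{\top}\Sigma^{-1}(\mathbf{x}-\mu) = \mathrm{tr}\bigl(\Sigma^{-1}(\mathbf{x}-\mu)(\mathbf{x}-\mu)^{\top}\bigr),
\]
and then interchange expectation with the trace. Using $\mathbb{E}[(\mathbf{x}-\mu)(\mathbf{x}-\mu)^{\top}] = \Sigma$ yields $\mathrm{tr}(\Sigma^{-1}\Sigma) = \mathrm{tr}(I_{p}) = p$. Collecting the three contributions gives $h(\mathbf{x}) = \tfrac{p}{2}\log(2\pi) + \tfrac{1}{2}\log\left\vert\Sigma\right\vert + \tfrac{p}{2}$, which I would rewrite as $\tfrac{1}{2}\log\bigl((2\pi e)^{p}\left\vert\Sigma\right\vert\bigr)$ by absorbing $p/2$ into the logarithm as $\tfrac{p}{2}\log e$.

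There is no genuine obstacle here: the argument is a routine textbook computation, and the only place that requires a moment of care is the trace manipulation, which relies on the linearity of expectation and the cyclic property of the trace. Alternatively, one could avoid the trace trick by applying an affine whitening transformation $\mathbf{y} = \Sigma^{-1/2}(\mathbf{x}-\mu)$, noting that differential entropy transforms as $h(\mathbf{x}) = h(\mathbf{y}) + \tfrac{1}{2}\log\left\vert\Sigma\right\vert$, and then invoking the fact that the entropy of a standard $p$-dimensional Gaussian decomposes as a sum of $p$ independent one-dimensional Gaussian entropies each equal to $\tfrac{1}{2}\log(2\pi e)$; but the direct calculation is cleaner and self-contained, so I would present that version.
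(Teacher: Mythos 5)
Your proposal is correct and follows essentially the same route as the paper's proof: expand $-\log f(\mathbf{x})$, take expectations, evaluate $\mathbb{E}\bigl[(\mathbf{x}-\mu)^{\top}\Sigma^{-1}(\mathbf{x}-\mu)\bigr]=p$, and absorb $p/2$ as $\tfrac{p}{2}\log e$. The paper leaves the quadratic-form expectation implicit, whereas you spell it out via the trace trick, but this is the same computation.
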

\begin{proof}
    By the definition of continuous entropy, we have
    \begin{align*}
        h(\mathbf{x}) & = -\int_{\mathbb{R}^{p}}f(\mathbf{x})\log\left( f(\mathbf{x}) \right)d\mathbf{x}, \\
        & = \frac{p}{2}\log(2\pi)+\frac{1}{2}\log(\left\vert \Sigma \right\vert) + \frac{1}{2}\mathbf{E}\left[ {\left( \mathbf{x}-\mu \right)}^{T}\Sigma^{-1}  \left( \mathbf{x}-\mu \right)\right], \\
        & =\frac{1}{2}\log\left( {(2\pi e)}^{p}\left\vert \Sigma \right\vert  \right),
    \end{align*}
    this completes the proof.
\end{proof}
Let \( \Sigma_{m}, \Sigma_{11;ms}, \Sigma_{22;ms}\) represent the auto-covariance matrices of vectors \( \mathbf{x}_{m+1} \), \( \mathbf{x}_{m+1-s} \) and \(\mathbf{x}_{s}  \) respectively, \( \Sigma_{12;ms} \) is the covariance matrix between \( \mathbf{x}_{m+1-s} \) and \(\mathbf{x}_{s}  \), i.e.,
\begin{equation*}
    \Sigma_{m}=\begin{bmatrix}
        \gamma_{0} & \gamma_{1} & \gamma_{2} & \cdots & \gamma_{m} \\
        \gamma_{1} & \gamma_{0} & \gamma_{1} & \cdots & \gamma_{m-1} \\
        \gamma_{2} & \gamma_{1} & \gamma_{0} & \cdots & \gamma_{m-2} \\
        \vdots & \vdots & \vdots & \vdots & \vdots \\
        \gamma_{m} & \gamma_{m-1} & \gamma_{m-2} & \cdots & \gamma_{0}
    \end{bmatrix},
    \Sigma_{11;ms}=\begin{bmatrix}
        \gamma_{0} & \gamma_{1} & \cdots & \gamma_{m-s} \\
        \gamma_{1} & \gamma_{0} & \cdots & \gamma_{m-s-1} \\
        \gamma_{2} & \gamma_{1} & \cdots & \gamma_{m-s-2} \\
        \vdots & \vdots & \vdots & \vdots \\
        \gamma_{m-s} & \gamma_{m-s-1} & \cdots & \gamma_{0}
    \end{bmatrix},
\end{equation*}
and
\begin{equation*}
    \Sigma_{22;ms}=\begin{bmatrix}
        \gamma_{0} & \gamma_{1} & \cdots & \gamma_{s-1} \\
        \gamma_{1} & \gamma_{0} & \cdots & \gamma_{s-2} \\
        \gamma_{2} & \gamma_{1} & \cdots & \gamma_{s-3} \\
        \vdots & \vdots & \vdots & \vdots \\
        \gamma_{s-1} & \gamma_{s-2} & \cdots & \gamma_{0}
    \end{bmatrix},
    \Sigma_{12;ms}=\begin{bmatrix}
        \gamma_{m-s+1} & \gamma_{m-s+2} & \cdots & \gamma_{m} \\
        \gamma_{m-s} & \gamma_{m-s+1} & \cdots & \gamma_{m-1} \\
        \gamma_{m-s-1} & \gamma_{m-s} & \cdots & \gamma_{m-2} \\
        \vdots & \vdots & \vdots & \vdots \\
        \gamma_{1} & \gamma_{2} & \cdots & \gamma_{s}
    \end{bmatrix}.
\end{equation*}
For simplicity, we have
\begin{equation*}
    \Sigma_{m}=\begin{bmatrix}
        \Sigma_{11;ms} & \Sigma_{12;ms} \\
        \Sigma_{21;ms} & \Sigma_{22;ms}
    \end{bmatrix},
\end{equation*}
we also notice that \( R_{m}=\gamma^{-1}_{0}\Sigma_{m} \), \( R_{11;ms}=\gamma^{-1}_{0}\Sigma_{11;ms} \), \( R_{12;ms}=\gamma^{-1}_{0}\Sigma_{12;ms} \), \( R_{22;s}=\gamma^{-1}_{0}\Sigma_{22;ms} \).
Based on these facts, we now prove \textcolor{darkblue}{Proposition}~\ref{thm:relative_entropy}.
\begin{proof}[Proof of \textcolor{darkblue}{Proposition}~\ref{thm:relative_entropy}]
    Since \( \varepsilon_{i} \) is the Gaussian white noise, the distribution of \( \mathbf{x}_{m+1} \), \( \mathbf{x}_{m+1-s} \) and \(\mathbf{x}_{s}  \) are multivariate Gaussian with covariance matrices \( \Sigma_{m}, \Sigma_{11;ms}, \Sigma_{22;ms}\) respectively. By \textcolor{darkblue}{Lemma}~\ref{lemma:entropy}, we can obtain the following results: \( h\left(\mathbf{x}_{m+1}\right)  = 2^{-1}\log\left( {(2\pi e)}^{m+1}\left\vert \Sigma_{m} \right\vert \right) \), \( h\left(\mathbf{x}_{s}\right) =2^{-1}\log\left( {(2\pi e)}^{s} \left\vert \Sigma_{22;ms} \right\vert  \right) \) and \( h\left(\mathbf{x}_{m+1-s}\right) = 2^{-1}\log\left( {(2\pi e)}^{m+1-s} \left\vert \Sigma_{11;ms} \right\vert  \right) \).
    The relative entropy can be expressed as
    \begin{align*}
        \mathcal{I}_{s}\left(\mathbf{x}_{m+1}\right) & =\text{RlEn}_{s}=\int_{\mathbb{R}^{m+1}}f\left(\mathbf{x}_{m+1}\right)\log\left( \frac{f\left(\mathbf{x}_{m+1}\right)}{g\left(\mathbf{x}_{m+1-s}\right)g\left(\mathbf{x}_{s}\right)} \right)d\mathbf{x}_{m+1}, \\
        & = \frac{1}{2}\log\left( \frac{\left\vert R_{11;ms}  \right\vert \left\vert R_{22;ms} \right\vert}{\left\vert R_{m} \right\vert} \right),\quad 1\leq s\leq m,\quad m\geq 1.
    \end{align*}
    According to the Yule-Walker equations,
    \begin{equation*}
        \begin{bmatrix}
            \rho_{1} \\
            \rho_{2} \\
            \rho_{3} \\
            \vdots \\
            \rho_{p}
        \end{bmatrix}=\begin{bmatrix}
            1 & \rho_{1} & \rho_{2} & \cdots & \rho_{p-1} \\
            \rho_1 & {1} & \rho_{2} & \cdots & \rho_{p-2} \\
            \rho_{2} & \rho_{1} & 1 & \cdots & \rho_{p-3} \\
            \vdots & \vdots & \vdots & \vdots & \vdots \\
            \rho_{p-1} & \rho_{p-2} & \rho_{p-3} & \cdots & 1 \\
        \end{bmatrix}
        \begin{bmatrix}
            \phi_{1} \\
            \phi_{2} \\
            \phi_{3} \\
            \vdots \\
            \phi_{p}
        \end{bmatrix},
    \end{equation*}
    the autocorrelation \( \rho_{i}, i=1,2,\ldots,\infty \) are totally determined by coefficient \( \phi_{k}, k=1,\ldots, p \). Hence, \( \mathcal{I}_{s}(\mathbf{x}_{m+1}) \) is a function of \( \phi_{k}, k=1,\ldots, p \). As \( \{x_{i}\} \) and \( \{\varepsilon_{i}\} \) are independent, the autocorrelation function of \( \{x_{i}\} \)  is independent of \( \sigma \) which implies \( \mathcal{I}_{s}(\mathbf{x}_{m+1}) \) is independent of \( \sigma \) as well which completes the proof.
\end{proof}

\section{Lag order selection and proof}\label{sec:lag_order_selection_and_proof}
\begin{proof}[Proof of \textcolor{darkblue}{Theorem}~\ref{thm:BIC_consistent}] The proof is  based on the proofs of~\citet{vieuOrderChoiceNonlinear1995} and~\citet{shaoAsymptoticTheoryLinear1997}. First, we construct a new equation \(\sigma^{2}_{\lambda}(m)=\hat{\sigma}^{2}(m)[1+\lambda(m,\hat{h}_{m})] \), where \( \lambda(m,\hat{h}_{m})= v(m,\hat{h}_{m})\log(n)/n\). We can regard \( \lambda(m,\hat{h}_{m}) \) as a penalty part in \( \sigma^{2}_{\lambda}(m) \), when \( \lambda(m,\hat{h}_{m})\rightarrow 0 \), minimizing \( BIC(m) \) is equivalent to minimizing \( \sigma^{2}_{\lambda}(m) \) based on the fact that \( \log(1+x)\approx x \) as \( x\rightarrow 0 \). This proof skill is frequently adopted in discussion of BIC or AIC consistency, see, for example,~\citet[][p.46]{shibataOptimalSelectionRegression1981},~\citet[][314]{vieuOrderChoiceNonlinear1995} and~\citet[][232]{shaoAsymptoticTheoryLinear1997}. Therefore, the sketch of our proof can be summarized into the following two steps: (1) We need to discuss the consistency of lag order selected via \( \hat{\sigma}^{2}(m)\); (2) We extend the result to the penalty version \( \sigma^{2}_{\lambda}(m) \) with controlling \( \lambda(m,\hat{h}_{m})\rightarrow 0 \) in an  appropriate rate. This proof is very similar to that in lag order selection~\citep{vieuOrderChoiceNonlinear1995} except the definitions of \( \hat{\sigma}^{2}(m)\) and \( \lambda(m,\hat{h}_{m}) \). Next, we introduce the conditions used in our proof.
    \begin{enumerate}[label= (C\arabic*),start=1]
        \item\label{cond:c11} The time series \( {\{X_{i}\}}_{i\in \mathbb{N}} \) is \( \alpha \)-mixing, the mixing coefficient \( \alpha(n) \) satisfies: \( \exists\ s>0, \exists\ 0<t<1, \forall\ n\geq 1, \alpha(n)\leq st^{n} \).
        \item\label{cond:c12} For each \( 1\leq m<M \), there exists the nonlinear autoregression functions such that
        \begin{equation*}
            x_{i+m}=\mathfrak{F}\left( \mathbf{X}_{i;m} \right)+\varepsilon_{i,m},
        \end{equation*}
        where \( \mathbf{X}_{i;m} \) is independent of \( \varepsilon_{i,m} \) and \( \varepsilon_{i,m}, i=1,\ldots,n \) are noise with mean zero.

        \item\label{cond:c13} The unknown function \( \mathfrak{F} \) has second-order continuous derivation.
        \item\label{cond:c14} \( \forall\ q\geq 1, \exists\ \mathbb{M}_{q}  \) such that for any \( i, \mathbb{E}{\left\vert X_{i} \right\vert}^{q}\leq \mathbb{M}_{q} <\infty \).
        \item\label{cond:c15} Given \( m \), for some \( 0< \gamma_{m} < \infty \) and some \( 0<\eta_{m}<1/(4+m)  \), the  bandwidth satisfies:
        \begin{equation*}
            h_{*}\in \mathcal{H}_{n,m}=\left[ \gamma^{-1}_{m}n^{-\eta_{m}-1/(4+m)},\gamma_{m} n^{\eta_{m}-1/(4+m)} \right].
        \end{equation*}
        \item\label{cond:c16} \( \lambda(m,\hat{h}_{m}) \) is of order \( \log(n)/(n{(h_{*})}^{m}) \).
        \item\label{cond:c17} \( m \) could be arbitrary large but has an upper bound \(M\).
    \end{enumerate}
    Conditions~\ref{cond:c11}--\ref{cond:c16} are quoted from~\citet[][310--311]{vieuOrderChoiceNonlinear1995} with appropriate adjustments for our circumstance. Condition~\ref{cond:c17} controls the upper bound \( M \) to coordinate the proof of \(\mathfrak{E}\) in \textcolor{darkblue}{Appendix}~\ref{sec:consistency_of_rlen}. Given lag order \( m, 1\leq m\leq M \) and the underlying lag order \( m_{0} \), we define the distance between \( \mathfrak{F} ( \mathbf{X}_{i;m_{0}} ) \) and \( \hat{\mathfrak{F}} ( \mathbf{X}_{i;m}, h_{*} ) \) by
    \begin{equation*}
        \sigma_{0}^{2}(m, h_{*})=\frac{1}{N-\max(m,m_{0})}\sum_{i=1}^{\max(m,m_{0})}{\left[ \mathfrak{F} \left( \mathbf{X}_{i;m_{0}} \right)- \hat{\mathfrak{F}} \left( \mathbf{X}_{i;m}, h_{*}\right)\right]}^{2}.
    \end{equation*}
    Furthermore, let \( \sigma_{0}^{2}(m)=\inf_{h_{*}\in \mathcal{H}}\sigma_{0}^{2}(m, h_{*}) \), we have
    \begin{lemma}\label{lemma:bic_consistency}
        Given Conditions~\ref{cond:c11}--\ref{cond:c15}, \textcolor{darkblue}{Assumption}~\ref{assump:kernel_assumption} and \textcolor{darkblue}{Assumption}~\ref{assump:density_assumption}, the nonlinear autoregression process~\eqref{eq:general_nonlinear_time_series} has underlying lag order \( m_{0} \) and \( m_{0}\in \left \{ 1,\ldots,M  \right \} \), then we have
        \begin{enumerate}[label={(\alph*)}]
            \item \( \sigma_{0}^{2}(m_{0})\rightarrow 0 \), a.e.,
            \item  For \( 1\leq m< m_{0} \), there exists real positive constant \( c_{m}>0 \) such that
            \begin{equation*}
                \hat{\sigma}^{2}(m)-\hat{\sigma}^{2}(m_{0})\geq c_{m},\quad a.s.,
            \end{equation*}
            \item For \( m_{0}< m\leq M \), \( \exists\ c_{0}>0\ s.t. \) \( \hat{\sigma}^{2}(m_{0})-c_{0}\geq 0  \), a.s.\ and
            \begin{equation*}
                \frac{\hat{\sigma}^{2}(m)-c_{0}}{\hat{\sigma}^{2}(m_{0})-c_{0}}\rightarrow +\infty,\quad a.s.
            \end{equation*}
        \end{enumerate}
    \end{lemma}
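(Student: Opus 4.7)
The plan is to establish the three parts sequentially, using the framework of \citet{vieuOrderChoiceNonlinear1995} adapted to the Nadaraya–Watson estimator with Jackknife kernel and leave-one-out bandwidth selection, with Conditions~\ref{cond:c11}--\ref{cond:c15} supplying the mixing, smoothness, moment, and bandwidth ingredients.

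For part \textbf{(a)}, I would exploit that when the true lag order is $m_{0}$, Condition~\ref{cond:c12} gives $X_{i+m_{0}}=\mathfrak{F}(\mathbf{X}_{i;m_{0}})+\varepsilon_{i,m_{0}}$ with independent mean-zero noise, so the Nadaraya--Watson estimator $\hat{\mathfrak{F}}(\mathbf{X}_{i;m_{0}},h_{*})$ is a bona fide nonparametric regression estimator of $\mathfrak{F}$. Under the $\alpha$-mixing of Condition~\ref{cond:c11}, the smoothness of $\mathfrak{F}$ in~\ref{cond:c13}, the moment bound~\ref{cond:c14}, and the multivariate Jackknife kernel of Assumption~\ref{assump:kernel_assumption}, standard bias--variance analysis yields a pointwise MSE of order $O(h_{*}^{4}+n^{-1}h_{*}^{-m_{0}})$. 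Any $h_{*}\in\mathcal{H}_{n,m_{0}}$ of the form $h_{*}\asymp n^{-1/(4+m_{0})}$ satisfies this rate, so $\sigma_{0}^{2}(m_{0},h_{*})\to 0$ a.s.\ for that choice; taking the infimum over $\mathcal{H}_{n,m_{0}}$ preserves this and gives $\sigma_{0}^{2}(m_{0})\to 0$ a.s.

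For part \textbf{(b)}, when $m<m_{0}$ the regression with $m$ lags is misspecified: $\hat{\mathfrak{F}}(\mathbf{x};h_{*})$ consistently estimates $\mu_{m}(\mathbf{x})\coloneqq\mathbb{E}[X_{i+m}\mid\mathbf{X}_{i;m}=\mathbf{x}]$ rather than $\mathfrak{F}(\mathbf{X}_{i;m_{0}})$. Decomposing the leave-one-out residual as
\begin{equation*}
x_{i+m}-\hat{\mathfrak{F}}_{-i}(\mathbf{x}_{i;m},\hat{h}_{*}(m))
=\bigl[\mathfrak{F}(\mathbf{x}_{i;m_{0}})-\mu_{m}(\mathbf{x}_{i;m})\bigr]+\varepsilon_{i,m_{0}}+r_{i,n},
\end{equation*}
where $r_{i,n}=\mu_{m}(\mathbf{x}_{i;m})-\hat{\mathfrak{F}}_{-i}(\mathbf{x}_{i;m},\hat{h}_{*}(m))\to 0$ in $L^{2}$, the mean-square ergodic theorem under \ref{cond:c11} gives $\hat{\sigma}^{2}(m)\to \sigma^{2}+\delta_{m}$ a.s., where
\begin{equation*}
\delta_{m}=\mathbb{E}\bigl[\mathfrak{F}(\mathbf{X}_{i;m_{0}})-\mu_{m}(\mathbf{X}_{i;m})\bigr]^{2}.
\end{equation*}
Since $\mathfrak{F}$ genuinely depends on all $m_{0}$ coordinates (else $m_{0}$ would not be the true order), $\delta_{m}>0$. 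Because $\hat{\sigma}^{2}(m_{0})\to\sigma^{2}$ by part~(a), setting $c_{m}\coloneqq\delta_{m}/2$ yields $\hat{\sigma}^{2}(m)-\hat{\sigma}^{2}(m_{0})\geq c_{m}$ for all large $n$ a.s.

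For part \textbf{(c)}, when $m_{0}<m\leq M$ the bias vanishes but the curse of dimensionality inflates variance. Choosing $c_{0}\in(0,\sigma^{2})$, part~(a) gives $\hat{\sigma}^{2}(m_{0})-c_{0}\to\sigma^{2}-c_{0}>0$ a.s. For the numerator, the Nadaraya--Watson estimator with $m$ lags has leave-one-out MSE of order $h_{*}^{4}+n^{-1}h_{*}^{-m}$, minimised over $\mathcal{H}_{n,m}$ at rate $n^{-4/(4+m)}$, which is larger than $n^{-4/(4+m_{0})}$ but still $o(1)$. The key quantitative step is to show that the excess $\hat{\sigma}^{2}(m)-\sigma^{2}$ inherits the variance component $v(m,\hat{h}_{*}(m))/n$ through the bandwidth selector, so that after subtracting $c_{0}$ the ratio picks up a factor of order $n^{4/(4+m_{0})-4/(4+m)}\to\infty$. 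Combining this with part~(a)'s rate on the denominator yields the divergence.

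The main obstacle will be the uniform-in-bandwidth strong consistency required to justify passing from pointwise MSE bounds to the infimum over $\mathcal{H}_{n,m}$ and to the cross-validated $\hat{h}_{*}(m)$. This needs a Bernstein-type exponential inequality for $\alpha$-mixing sequences together with a chaining argument over a discretisation of $\mathcal{H}_{n,m}$ whose mesh is polynomial in $n$; the mixing rate in Condition~\ref{cond:c11} is exactly what makes the union bound summable. A secondary technical nuisance is that the Jackknife boundary correction changes kernel moments near $0$ and $1$, but since $k_{\rho}$ shares the support, symmetry, and moment properties of $K$ up to bounded constants, the standard rates of \citet{vieuOrderChoiceNonlinear1995} transfer with only cosmetic changes, and the upper bound $M$ in Condition~\ref{cond:c17} keeps the dimension-dependent constants uniform.
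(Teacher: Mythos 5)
Your parts (a) and (b) follow essentially the same route as the paper: the bias--variance expansion $\sigma^{2}(m,h_{*})\asymp h_{*}^{4}+n^{-1}h_{*}^{-m}$ optimised over $\mathcal{H}_{n,m}$ for (a), and for (b) the identification of the limit of $\hat{\sigma}^{2}(m)$ with the variance of the misspecified-model noise, whose excess over $\mathrm{Var}(\varepsilon_{i,m_{0}})$ is the strictly positive quantity $\mathrm{Var}\bigl[\mathbb{E}(x_{i+m}\mid\mathbf{X}_{i;m_{0}})-\mathbb{E}(x_{i+m}\mid\mathbf{X}_{i;m})\bigr]$. The paper gets the almost-sure statements from the uniform-in-bandwidth expansion $\hat{\sigma}^{2}(m,h_{*})-\sigma^{2}(m,h_{*})=n^{-1}\sum_{i}\varepsilon_{i,m}^{2}+o(\sigma^{2}(m,h_{*}))$ imported from Vieu, rather than from a mean-square ergodic theorem, but you flag the uniformity issue yourself and the ingredients you list (Bernstein for $\alpha$-mixing plus discretisation of $\mathcal{H}_{n,m}$) are the right ones.

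Part (c), however, contains a genuine error: you choose $c_{0}\in(0,\sigma^{2})$ strictly below the noise variance. Since $\hat{\sigma}^{2}(m)\rightarrow\sigma^{2}$ almost surely for \emph{every} $m\geq m_{0}$ (overfitting adds no asymptotic bias), both $\hat{\sigma}^{2}(m)-c_{0}$ and $\hat{\sigma}^{2}(m_{0})-c_{0}$ then converge to the same positive constant $\sigma^{2}-c_{0}$, so the ratio tends to $1$, not to $+\infty$; no factor $n^{4/(4+m_{0})-4/(4+m)}$ can appear. The divergence works only if $c_{0}$ is taken to be exactly $\mathrm{Var}(\varepsilon_{i,m_{0}})$, so that the leading constant cancels and the numerator and denominator are driven by the residual terms $\sigma_{0}^{2}(m)\asymp n^{-4/(4+m)}$ and $\sigma_{0}^{2}(m_{0})\asymp n^{-4/(4+m_{0})}$ respectively. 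This in turn requires a rate for the strong law, namely $n^{-1}\sum_{i}\varepsilon_{i,m_{0}}^{2}-\mathrm{Var}(\varepsilon_{i,m_{0}})=o\bigl(n^{-4/(4+m_{0})}\bigr)$ a.s.\ (the paper obtains this via Bernstein's inequality for $\alpha$-mixing), since otherwise the stochastic fluctuation of the empirical noise variance would swamp the denominator $\sigma_{0}^{2}(m_{0})$. The same exact choice of $c_{0}$ is what makes $\hat{\sigma}^{2}(m_{0})-c_{0}\geq 0$ a nontrivial claim, proved from the nonnegativity of $\sigma_{0}^{2}(m_{0})$ plus the rate above. You should replace your "any $c_{0}\in(0,\sigma^{2})$" with this specific constant and add the rate argument for the empirical noise variance.
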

    \begin{proof}[Proof of \textcolor{darkblue}{Lemma}~\ref{lemma:bic_consistency}]
        This proof employs the same techniques used in Lemma 1, Lemma 2 and Theorem 3 in~\citet{vieuOrderChoiceNonlinear1995}. It can be regarded as a special case of~\citet{vieuOrderChoiceNonlinear1995} except the upper bound \( M \). Given \( m \), \( n=N-m \), the average square predicted error of nonlinear autoregression is defined as
        \begin{equation*}
            \sigma^{2}(m, h_{*})=\frac{1}{n}\sum_{i=1}^{n}{\left[ \mathfrak{F} \left( \mathbf{X}_{i;m} \right)- \hat{\mathfrak{F}} \left( \mathbf{X}_{i;m} , h_{*}\right)\right]}^{2}.
        \end{equation*}
        Under \textcolor{darkblue}{Assumptions}~\ref{assump:kernel_assumption} and~\ref{assump:density_assumption}, we can rewrite the average square predict error~\citep[e.g.,][83--85]{liNonparametricEconometricsTheory2007} as
        \begin{equation}\label{eq:sigma_m_h_order}
            \sigma^{2}(m, h_{\ast})=\alpha_{1m}\frac{1}{nh_{*}^{m}}+\alpha_{2m}mh_{*}^{4}+o\left( \frac{1}{nh_{*}^{m}}+h_{*}^{4} \right),\quad a.s.,
        \end{equation}
        where \( \alpha_{1m} \) and \( \alpha_{2m} \) are constant. We can easily obtain the optimal bandwidth if we minimize the first two leading terms in equation~\eqref{eq:sigma_m_h_order}, denoted as
        \begin{equation*}
            \hat{h}_{m}={\left( \frac{4\alpha_{2m}}{\alpha_{1m}}n \right)}^{-\frac{1}{4+m}}.
        \end{equation*}
        Finally, we get
        \begin{equation}\label{eq:inf_order}
            \sigma^{2}(m)=\inf_{h_{*}\in \mathcal{H}_{n,m}}\sigma^{2}(m, h_{*})=\alpha_{3m}n^{-\frac{4}{4+m}}+o\left(n^{-\frac{4}{4+m}}\right),\quad a.s.,
        \end{equation}
        where \( \alpha_{3m} \) is a constant. Especially, when \( m=m_{0} \), then \( \sigma^{2}(m)=\sigma^{2}_{0}(m_{0}) \), immediately, \( \sigma_{0}^{2}(m_{0})\rightarrow 0 \) holds almost surely. This completes the proof of (a).

        \textit{Proof of (b).} For given \( m \), \( 1\leq m<m_{0} \), let \( \hat{h}_{m,cv}=\argmin_{h_{*}\in\mathcal{H}_{n,m}}\hat{\sigma}^{2}(m, h_{*})	 \), the bandwidth selected by least square cross-validation is still of order \( n^{-1/(4+m)} \) as \( \hat{h}_{m} \)~\citep{vieuSmoothingTechniquesTime1991,hallCrossValidationEstimationConditional2004}, so \( \hat{h}_{m,cv}\in \mathcal{H}_{n,m} \), we have
        \begin{equation}\label{eq:ratio_cv_if}
            \frac{\sigma^{2}(m,\hat{h}_{m,cv})}{\inf\limits_{ h_{*}\in\mathcal{H}_{n,m}}\sigma^{2}(m,h_{*})}\rightarrow 1,\quad a.s.
        \end{equation}
        In the proof of this property,~\citet{vieuSmoothingTechniquesTime1991} and~\citet{vieuOrderChoiceNonlinear1995} employed the following statement for nonlinear autoregression:
        \begin{equation}\label{eq:cv_difference}
            \hat{\sigma}^{2}(m,h_{*})-\sigma^{2}(m,h_{*})=\frac{1}{n}\sum_{i=1}^{n}\varepsilon^{2}_{i,m}+o\left( \sigma^{2}(m,h_{*}) \right), \quad a.s.,
        \end{equation}
        where the operation \( o(\cdot) \) is uniform over \( h_{*}\in\mathcal{H}_{n,m} \). Therefore, by equation~\eqref{eq:inf_order} and equation~\eqref{eq:ratio_cv_if}, \( \forall \) \( m \), \( 1\leq m<m_{0} \), we have  \( \sigma^{2}(m,\hat{h}_{m,cv})=o(n^{-4/(4+m)} ) \) almost surely. Then, minimizing equation~\eqref{eq:cv_difference}, we obtain for any \( 1\leq m<m_{0} \)
        \begin{equation}\label{eq:sigma_e_m}
            \hat{\sigma}^{2}(m)=\frac{1}{n}\sum_{i=1}^{n}\varepsilon^{2}_{i,m}+o(1),\quad a.s.
        \end{equation}
        Let \( \delta_{m}=\text{Var}(\varepsilon_{i,m}) \) and \( c_{m}= \delta_{m}-\delta_{m_{0}}\), by equation~\eqref{eq:sigma_e_m}, we have \( \hat{\sigma}^{2}(m)-\hat{\sigma}^{2}(m_{0})\rightarrow c_{m}\), a.s., and \( c_{m}>0 \) because from~\ref{cond:c15} and \textcolor{darkblue}{Assumption}~\ref{assump:density_assumption}, we know \( c_{m}\geq \text{ Var }[ \mathbb{E}(x_{i+m}|\mathbf{X}_{i;m_{0}})- \mathbb{E}(x_{i+m}|\mathbf{X}_{i;m})] \), because \( 1\leq m < m_{0} \), so \( \text{ Var }[ \mathbb{E}(x_{i+m}|\mathbf{X}_{i;m_{0}})- \mathbb{E}(x_{i+m}|\mathbf{X}_{i;m})]>0 \) which completes the proof (b).

        \textit{Proof of (c)}. For \( m_{0}<m\leq M \), replacing \( h_{*} \) in equation~\eqref{eq:cv_difference} with \( \hat{h}_{m,cv} \), we obtain
        \begin{equation*}
            \hat{\sigma}^{2}(m)-\sigma^{2}(m,\hat{h}_{m,cv})=\frac{1}{n}\sum_{i=1}^{n}\varepsilon^{2}_{i,m}+o\left(n^{-4/(4+m)} \right), \quad a.s.
        \end{equation*}
        We also note that~\eqref{eq:ratio_cv_if} implies
        \begin{equation*}
            \sigma^{2}(m,\hat{h}_{m,cv})=\inf\limits_{ h_{*}\in\mathcal{H}_{n,m}}\sigma^{2}(m,h_{*})+o\left(n^{-4/(4+m)} \right), \quad a.s.,
        \end{equation*}
        therefore
        \begin{equation}\label{eq:a_12}
            \hat{\sigma}^{2}(m)=\sigma^{2}_{0}(m)+\frac{1}{n}\sum_{i=1}^{n}\varepsilon^{2}_{i,m}+o\left(n^{-4/(4+m)} \right), \quad a.s.
        \end{equation}
        Like the discussion in~\citet[]{vieuOrderChoiceNonlinear1995}, by Bernstein's inequality for \( \alpha \)-mixing, for example, see the Theorem 3.1 in~\citet[][]{roussas1988probability}, we have
        \begin{equation}\label{eq:a_13}
            \frac{1}{N-m_{0}}\sum_{i=1}^{N-m_{0}}\varepsilon^{2}_{i,m_{0}}-\text{Var}(\varepsilon_{i,m_{0}})=o\left(n^{-4/(4+m_{0})}\right),\quad a.s.
        \end{equation}
        Combining~\eqref{eq:a_12} and~\eqref{eq:a_13}, we get
        \begin{equation*}
            \frac{\left\vert \hat{\sigma}^{2}(m)-c_{0} \right\vert}{\left\vert \hat{\sigma}^{2}(m_{0})-c_{0} \right\vert}=\frac{\sigma^{2}_{0}(m)}{\sigma^{2}_{0}(m_{0})}+o\left( \frac{\sigma^{2}_{0}(m)}{\sigma^{2}_{0}(m_{0})} \right),\quad a.s.,
        \end{equation*}
        where \( c_{0}=\text{Var}(\varepsilon_{i,m_{0}}) \). By the fact~\eqref{eq:inf_order}, we have \( \sigma^{2}_{0}(m)/\sigma^{2}_{0}(m_{0}) \rightarrow\infty \) almost surely. Immediately, by~\eqref{eq:a_12}, we can conclude \( \hat{\sigma}^{2}(m_{0})-c_{0}\geq 0  \), a.s., because \( \sigma^{2}_{0}(m) \) is positive. This completes the proof.
    \end{proof}

    Let \( \bar{m}=\argmin \hat{\sigma}^{2}(m) \), based on \textcolor{darkblue}{Lemma}~\ref{lemma:bic_consistency}, we immediately have
    \begin{equation*}
        \sigma^{2}_{0}(\bar{m})/
        \sigma^{2}_{0}(m_{0}) \rightarrow 1,\quad a.s.
    \end{equation*}
    Moreover, we add the penalty part to \( \hat{\sigma}^{2}(m) \), i.e., \(\sigma^{2}_{\lambda}(m)=\hat{\sigma}^{2}(m)[1+\lambda(m,\hat{h}_{m})] \), where \( \lambda(m,\hat{h}_{m})= v(m,\hat{h}_{m})\log(n)/n\). Based on \textcolor{darkblue}{Lemma}~\ref{lemma:degree_freedom}, Condition~\ref{cond:c16} makes sure that the penalty part is not arbitrarily large compared with 0. Based on \textcolor{darkblue}{Lemma}~\ref{lemma:bic_consistency}, previous discussion and Condition~\ref{cond:c17}, we have
    \begin{equation}\label{eq:converge_rate}
        \begin{aligned}
            \max_{m=1,\ldots,M}\frac{\left\vert \sigma^{2}_{\lambda}(m)-\hat{\sigma}^{2}(m) \right\vert}{\sigma^{2}_{0}(m)} & = \frac{\log(n)}{n^{\frac{4}{4+m}}}
            \leq \frac{\log(n)}{n^{\frac{4}{4+O \left( \sqrt{\log(n)} \right)}}}
            = o(1),\quad a.s.
        \end{aligned}
    \end{equation}
    Let \( \hat{m}=\argmin \sigma_{\lambda}^{2}(m) \), we have \(  \sigma^{2}_{0}(\hat{m}_{\lambda})/ \sigma^{2}_{0}(m_{0})\rightarrow 1\) almost surely. Because the previous results are almost surely convergence, so
    \begin{equation*}
        P(\hat{m}=m_{0})\rightarrow 1,
    \end{equation*}
    holds as well, which completes the whole proof.
\end{proof}
Note: \citet{vieuOrderChoiceNonlinear1995} claimed \( M=O(\log(n)) \), however, our result shows that \( M \) is at least of order \( O(\sqrt{\log(n)}) \), see equation~\eqref{eq:converge_rate}. Furthermore, if one want to control \( M \)  to tend to infinity not as fast as \( O(\sqrt{\log(n)}) \), the order of \( M \) could be \( O(\log(\log(n))) \). However, in order to keep \( M \) consistent in the proof of relative entropy theory, we sacrifice the relaxation of \( m \)  to infinity discussed above.

\begin{figure}[ht]
    \centering
    \begin{subfigure}[bt]{0.4\textwidth}
        \centering
        \includegraphics[width=\textwidth]{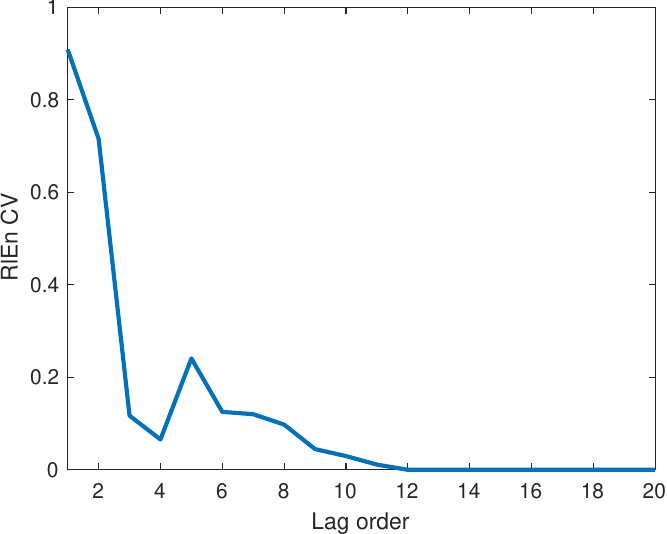}
        \caption{\( h=0.05 \)}\label{fig:RlEnh005.pdf}
    \end{subfigure}
    \begin{subfigure}[bt]{0.4\textwidth}
        \centering
        \includegraphics[width=\textwidth]{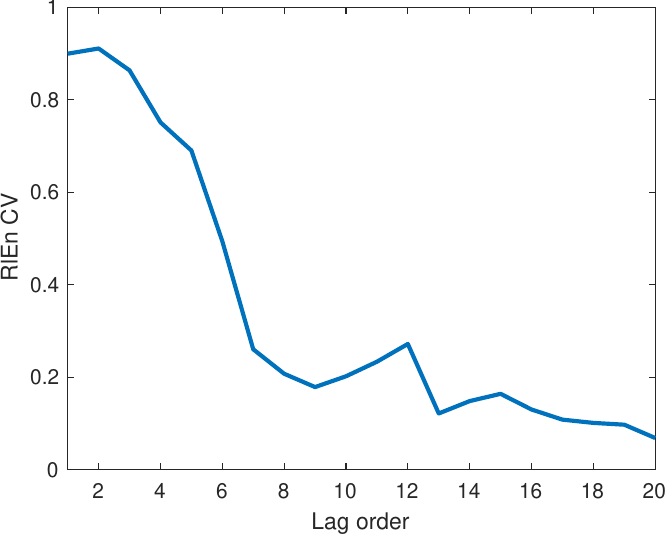}
        \caption{\( h=0.1 \)}\label{fig:RlEnh01.pdf}
    \end{subfigure}
    \begin{subfigure}[bt]{0.4\textwidth}
        \centering
        \includegraphics[width=\textwidth]{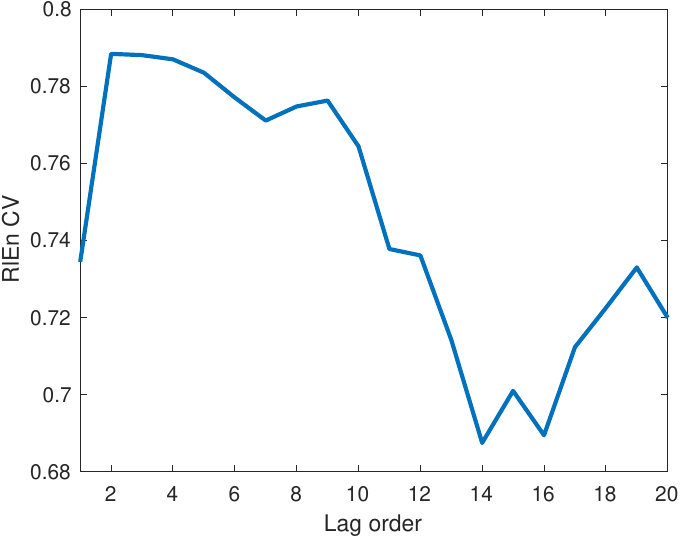}
        \caption{\( h=0.2 \)}\label{fig:RlEnh02.pdf}
    \end{subfigure}
    \begin{subfigure}[bt]{0.4\textwidth}
        \centering
        \includegraphics[width=\textwidth]{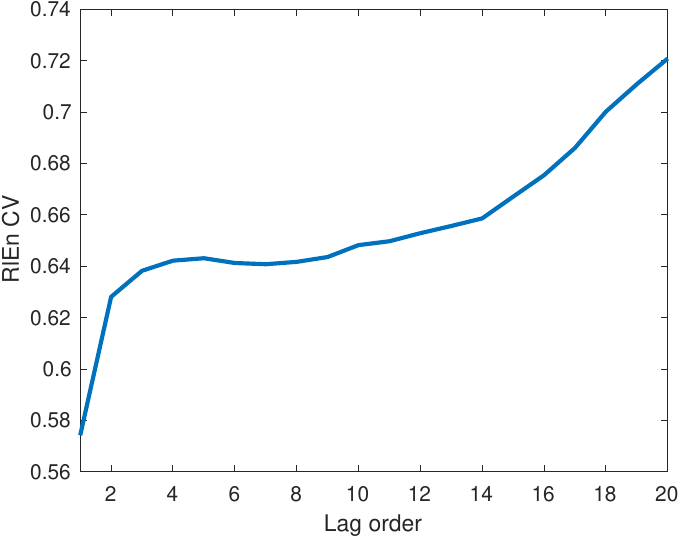}
        \caption{\( h=0.3 \)}\label{fig:RlEnh03.pdf}
    \end{subfigure}
    \caption{Relative Entropy against Lag Order for Different Bandwidths}\label{fig:Relative_entropy_against_lag_order_for_different_bandwidths}
\end{figure}

\section{Consistency of nonparametric complexity measure}\label{sec:consistency_of_rlen}
Under the following null hypothesis: \(\mathbb{H}_{0}: f\left( \mathbf{x}_{i;m+1} \right)=g\left( \mathbf{x}_{i;m} \right)g_{1}(X_{i+m})\),
the first term in equation~\eqref{eq:rewrite_estimator_rlen}, \( \hat{I}_{m}(f, g\cdot g_{1})=0 \) almost surely.\@ Note that for \( \left\vert x \right\vert<1 \), we have the inequality \( \left\vert \log(1+x)-x+\frac{1}{2}x^2 \right\vert \leq {\left\vert x \right\vert}^3\), so the third term in equation~\eqref{eq:rewrite_estimator_rlen} can be expressed as
\begin{equation}\label{eq:Inm_f_hat_f}\\
    \begin{aligned}
        \hat{I}_{m}(\hat{g}, g) & = \frac{1}{n}\sum_{i\in S(m)}\log\left[ 1+\frac{\hat{g}\left( \mathbf{x}_{i;m} \right)-g\left( \mathbf{x}_{i;m} \right)}{g\left( \mathbf{x}_{i;m} \right)} \right], \\
        & = \frac{1}{n}\sum_{i=1}^{n}\left[\frac{\hat{g}\left( \mathbf{x}_{i;m} \right)-g\left( \mathbf{x}_{i;m} \right)}{g\left( \mathbf{x}_{i;m} \right)} \right] \\
        & \qquad
        -\frac{1}{2}\sum_{i=1}^{n} {\left[\frac{\hat{g}\left( \mathbf{x}_{i;m} \right)-g\left( \mathbf{x}_{i;m} \right)}{g\left( \mathbf{x}_{i;m} \right)} \right]}^{2} + \text{remainder}, \\
        & =\hat{W}_{1}(m)-\frac{1}{2}\hat{W}_{2}(m)+\text{remainder}.
    \end{aligned}
\end{equation}
To expand the term \( \hat{W}_{1}(m) \), we need to introduce some notations: for any vector \( \mathbf{z}_{1}, \mathbf{z}_{2} \in \mathbb{I}^{m}\), define \( \bar{g}(\mathbf{z}_{1})=\int_{\mathbb{I}^{m}}\mathcal{K}^{(m)}_{h}(\mathbf{z}_{1},\mathbf{z}_{2})g(\mathbf{z}_{2})d\mathbf{z}_{2} \),
where \( \mathcal{K}^{(m)}_{h}(\mathbf{z}_{1},\mathbf{z}_{2}) =\mathcal{K}^{(m)}_{h}(\mathbf{z}_{1}-\mathbf{z}_{2}) \). Let
\begin{equation*}
    \tilde{\mathcal{K}}_{h}^{(m)}(\mathbf{z}_{1},\mathbf{z}_{2})=\mathcal{K}^{(m)}_{h}(\mathbf{z}_{1},\mathbf{z}_{2})-\int_{\mathbb{I}^{m}}\mathcal{K}^{(m)}_{h}(\mathbf{z},\mathbf{z}_{2})d\mathbf{z},
\end{equation*}
\begin{equation*}
    \tilde{A}_{m}(\mathbf{z}_{1},\mathbf{z}_{2})=\left[ \tilde{\mathcal{K}}_{h}^{(m)}(\mathbf{z}_{1},\mathbf{z}_{2})- \int_{\mathbb{I}^{m}}\tilde{\mathcal{K}}_{h}^{(m)}(\mathbf{z}_{1},\mathbf{z})g(\mathbf{z})d\mathbf{z}\right]/g(\mathbf{z}_{1}),
\end{equation*}
\begin{equation}\label{eq:A_nm}
    A_{m}(\mathbf{z}_{1},\mathbf{z}_{2})=\left[ \mathcal{K}_{h}^{(m)}(\mathbf{z}_{1},\mathbf{z}_{2})- \int_{\mathbb{I}^{m}}\mathcal{K}_{h}^{(m)}(\mathbf{z}_{1},\mathbf{z})g(\mathbf{z})d\mathbf{z}\right]/g(\mathbf{z}_{1}),
\end{equation}
\begin{equation*}
    \gamma_{m}(\mathbf{z}_{1},\mathbf{z}_{2})=\int_{\mathbb{I}^{m}}\left[ \mathcal{K}_{h}^{(m)}(\mathbf{z},\mathbf{z}_{2})- \int_{\mathbb{I}^{m}}\mathcal{K}_{h}^{(m)}(\mathbf{z},\mathbf{z}^{*})g(\mathbf{z}^{*})d\mathbf{z}^{*}\right]d\mathbf{z}/g(\mathbf{z}_{1}),
\end{equation*}
\begin{equation*}
    B_{m}(\mathbf{z}_{1})=\left[ \int_{\mathbb{I}^{m}}\mathcal{K}_{h}^{(m)}(\mathbf{z}_{1},\mathbf{z})g(\mathbf{z})d\mathbf{z}-g(\mathbf{z}_{1})\right]/g(\mathbf{z}_{1}),
\end{equation*}
\begin{equation}\label{eq:H_1nm_z1_z2}
    H_{1m}(\mathbf{z}_{1}, \mathbf{z}_{2})=\tilde{A}_{m}(\mathbf{z}_{1},\mathbf{z}_{2})+\tilde{A}_{m}(\mathbf{z}_{2},\mathbf{z}_{1}),
\end{equation}
\begin{equation*}
    H_{2m}(\mathbf{z}_{1}, \mathbf{z}_{2})=\int_{\mathbb{I}^{m}}A_{m}(\mathbf{z},\mathbf{z}_{1})A_{m}(\mathbf{z},\mathbf{z}_{2})g(\mathbf{z})\,\mathrm{d}\mathbf{z},
\end{equation*}
\begin{equation}\label{eq:c_hat_nm}
    \hat{C}(m)=\frac{1}{n}\sum\nolimits_{i=1}^{n}\int_{\mathbf{z}\in \mathbb{I}^{m}}A_{m}\left( \mathbf{z} ,\mathbf{x}_{i;m}\right) B_{m}(\mathbf{z})g(\mathbf{z})\,\mathrm{d}\mathbf{z},
\end{equation}
then, we have
\begin{equation}\label{eq:w_hat_1_expand}\\
    \begin{aligned}
        \hat{W}_{1}(m)
        & = \frac{1}{2}\binom{n}{2}^{-1} \sum_{j=2}^{n}\sum_{i=1}^{j-1} \left[ \tilde{A}_{m}\left( \mathbf{x}_{i;m},\mathbf{x}_{j;m} \right) + \tilde{A}_{m}\left( \mathbf{x}_{j;m},\mathbf{x}_{i;m} \right)\right] \\
        & \qquad+\frac{1}{2}\binom{n}{2}^{-1} \sum_{j=2}^{n}\sum_{i=1}^{j-1} \left[ \gamma_{m}\left( \mathbf{x}_{i;m},\mathbf{x}_{j;m} \right) + \gamma_{m}\left( \mathbf{x}_{j;m},\mathbf{x}_{i;m} \right)\right] \\
        & \qquad+\frac{1}{n}\sum_{i=1}^{n}B_{m}\left( \mathbf{x}_{i;m} \right), \\
        & = \frac{1}{2}\hat{H}_{1}(m)+\frac{1}{2}\hat{\Gamma}(m)+\hat{B}(m).
    \end{aligned}
\end{equation}

Next, we discuss the expansion of the second term in equation~\eqref{eq:Inm_f_hat_f}. We write
\begin{equation}\label{eq:w_hat_2_m}\\
    \begin{aligned}
        \hat{W}_{2}(m) & = \hat{W}_{21}(m)+ \hat{W}_{22}(m)+\hat{W}_{23}(m),
    \end{aligned}
\end{equation}
where
\begin{equation*}
    \hat{W}_{21}(m)=n^{-1}\sum\nolimits_{i=1}^{n}{\left[   \left( \hat{g}\left( \mathbf{x}_{i;m} \right)-\bar{g}\left( \mathbf{x}_{i;m} \right) \right) \big/g\left( \mathbf{x}_{i;m} \right) \right]}^{2},
\end{equation*}
\begin{equation*}
    \hat{W}_{22}(m)=n^{-1}\sum\nolimits_{i=1}^{n}{\left[   \left( \bar{g}\left( \mathbf{x}_{i;m} \right)-g\left( \mathbf{x}_{i;m} \right) \right) \big/g\left( \mathbf{x}_{i;m} \right) \right]}^{2},
\end{equation*}
and
\begin{equation*}
    \hat{W}_{23}(m)=\frac{2}{n}\sum_{i=1}^{n}\left[ \frac{\hat{g}\left( \mathbf{x}_{i;m} \right)-\bar{g}\left( \mathbf{x}_{i;m} \right)}{g\left( \mathbf{x}_{i;m} \right)} \right]\left[ \frac{\bar{g}\left( \mathbf{x}_{i;m} \right)-g\left( \mathbf{x}_{i;m} \right)}{g\left( \mathbf{x}_{i;m} \right)} \right].
\end{equation*}
Let \( D_{m}(\mathbf{z}_{1},\mathbf{z}_{2})=A_{m}^{2}(\mathbf{z}_{1},\mathbf{z}_{2})+A_{m}^{2}(\mathbf{z}_{2},\mathbf{z}_{1}) \) and
\begin{equation*}
    \begin{aligned}
        \tilde{H}_{2m}(\mathbf{z}_{1},\mathbf{z}_{2},\mathbf{z}_{3}) & = A_{m}(\mathbf{z}_{1},\mathbf{z}_{2})A_{m}(\mathbf{z}_{1},\mathbf{z}_{3}) + A_{m}(\mathbf{z}_{2},\mathbf{z}_{3})A_{m}(\mathbf{z}_{2},\mathbf{z}_{1}) \\
        & \qquad+A_{m}(\mathbf{z}_{3},\mathbf{z}_{1})A_{m}(\mathbf{z}_{3},\mathbf{z}_{2}),
    \end{aligned}
\end{equation*}
then after some simple calculations, we can obtain
\begin{equation}\label{eq:W_hat_21_m}
    \hat{W}_{21}(m) = \frac{1}{2(n-1)}\hat{D}(m)+\frac{1}{3}\frac{n-2}{n-1}\tilde{H}_{2}(m),
\end{equation}
where \( \hat{D}(m)=\binom{n}{2}^{-1}\sum_{j=2}^{n}\sum_{i=1}^{j-1}D_{m}( \mathbf{x}_{i;m},\mathbf{x}_{j;m} ) \)
and
\begin{equation*}
    \tilde{H}_{2}(m)=\binom{n}{3}^{-1}\sum_{k=3}^{n}\sum_{j=2}^{k-1}\sum_{i=1}^{j-1}\tilde{H}_{2m}( \mathbf{x}_{k;m}, \mathbf{x}_{i;m},\mathbf{x}_{j;m} ).
\end{equation*}
The remainder term in~\Cref{eq:Inm_f_hat_f} has the order of \(O_{p}\left(
n^{-3/2} h^{-3m/2}{(\log{n})}^{1/2}
+m^{2}h^{6} \right)\) (see~\Cref{lemma:remainder_term_in_I_ff})
Combining Equations~\eqref{eq:Inm_f_hat_f},~\eqref{eq:w_hat_1_expand},~\eqref{eq:w_hat_2_m} and~\eqref{eq:W_hat_21_m}, we finally have
\begin{equation}\label{eq:I_hat_nm_f_f_hat_remainder}\\
    \begin{aligned}
        \hat{I}_{m}(\hat{g}, g) & = \frac{1}{2}\hat{H}_{1}(m)+\frac{1}{2}\hat{\Gamma}(m)+\hat{B}(m)-\frac{1}{4(n-1)}\hat{D}(m) \\
        & \qquad-\frac{n-2}{6(n-1)}\tilde{H}_{2}(m)-\frac{1}{2}\hat{W}_{22}(m)-\frac{1}{2}\hat{W}_{23}(m) \\
        & \qquad+O_{p}\left( \frac{{(\log{n})}^{1/2}}{n^{3/2}h^{3m/2}} + m^{2}h^{6} \right).
    \end{aligned}
\end{equation}

In the following,~\textcolor{darkblue}{Appendix}~\ref{sec:consistency_of_rlen},~\Cref{lemma:Gamma_m,lemma:D_n_m,lemma:H_2n_m,lemma:w_22_hat,lemma:w_23_hat} give the orders of terms \(\hat{\Gamma}(m)\), \(\hat{D}(m)\), \(\tilde{H}_{2}(m)\), \(\hat{W}_{22}(m)\) and \(\hat{W}_{23}(m)\) in~\Cref{eq:I_hat_nm_f_f_hat_remainder} respectively. Based on these theory results, we immediately have~\Cref{thm:I_g_g}.
\begin{theorem}\label{thm:I_g_g}
    Given \textcolor{darkblue}{Assumptions}~\ref{assump:kernel_assumption} and~\ref{assump:density_assumption}, if \( 2\leq m<M \), \( nh^{m}\rightarrow
    \infty \), \( nh^{m+12}\rightarrow 0  \) and
    \( {(\log n)}^{1/2}/(nh^{m})\rightarrow 0 \), then under \( \mathbb{H}_{0} \),
    \begin{equation*}
        \hat{I}_{m}(\hat{g}, g)=\frac{1}{2}\hat{H}(m)-\frac{1}{2}L(m)+\left[\hat{B}(m)- \hat{C}(m)\right] +o_{p}(n^{-1/2}h^{-m/2}),
    \end{equation*}
    where \( \hat{H}(m)=\hat{H}_{1}(m)-(n-2)/(n-1)\hat{H}_{2}(m) \), \(
    L(m)={(n-1)}^{-1}\mathbb{E}A^{2}_{m}(\mathbf{z}_{1},\mathbf{z}_{2})
    + \mathbb{E}B_{m}^{2}(\mathbf{z}_{1})\), \(
    \mathbf{z}_{1},\mathbf{z}_{2} \in \mathbb{I}^{m}\) and \(
    \mathbf{z}_{1},\mathbf{z}_{2} \) have no overlap variable,
    \begin{align*}
        \hat{H}_{1}(m) & = \binom{n}{2}^{-1} \sum\nolimits_{j=2}^{n}\sum\nolimits_{i=1}^{j-1}H_{1m}\left( \mathbf{x}_{i;m}, \mathbf{x}_{j;m} \right), \\
        \hat{H}_{2}(m) & = \binom{n}{2}^{-1} \sum\nolimits_{j=2}^{n}\sum\nolimits_{i=1}^{j-1}H_{2m}\left( \mathbf{x}_{i;m}, \mathbf{x}_{j;m} \right).
    \end{align*}
\end{theorem}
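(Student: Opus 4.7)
The starting point is the decomposition already recorded in~\Cref{eq:I_hat_nm_f_f_hat_remainder}, which exhibits $\tfrac{1}{2}\hat{H}_{1}(m)$ and $\hat{B}(m)$ explicitly. The proof reduces to replacing the five remaining stochastic pieces $\hat{\Gamma}(m),\,\hat{D}(m),\,\tilde{H}_{2}(m),\,\hat{W}_{22}(m),\,\hat{W}_{23}(m)$ by either a deterministic mean feeding into $L(m)$, by the U-statistic $\hat{H}_{2}(m)$, or by $\hat{C}(m)$, with all residuals controlled at order $o_{p}(n^{-1/2}h^{-m/2})$. Three of these substitutions are handled by standard arguments: \Cref{lemma:Gamma_m} gives $\hat{\Gamma}(m)=o_{p}(n^{-1/2}h^{-m/2})$ (it is a boundary-bias contribution from the Jackknife kernel); \Cref{lemma:D_n_m} applied to the two-sample U-statistic $\hat{D}(m)$ yields $\hat{D}(m)=\mathbb{E}D_{m}(\mathbf{z}_{1},\mathbf{z}_{2})+o_{p}(1)=2\mathbb{E}A_{m}^{2}(\mathbf{z}_{1},\mathbf{z}_{2})+o_{p}(1)$, so the prefactor $1/[4(n-1)]$ converts this into $\mathbb{E}A_{m}^{2}/[2(n-1)]$ plus a term of order $o_{p}(n^{-1/2}h^{-m/2})$; and \Cref{lemma:w_22_hat} gives $\hat{W}_{22}(m)=\mathbb{E}B_{m}^{2}(\mathbf{z}_{1})+o_{p}(n^{-1/2}h^{-m/2})$.

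The genuinely nontrivial reductions are those of $\tilde{H}_{2}(m)$ and $\hat{W}_{23}(m)$, and both are handled by a Hoeffding ANOVA decomposition exploiting the centering identity $\int A_{m}(\mathbf{z}_{1},\mathbf{z}')g(\mathbf{z}')\,\mathrm{d}\mathbf{z}'=0$. For $\tilde{H}_{2}(m)$, that identity forces both the zero- and first-order Hoeffding projections of the three-argument kernel $\tilde{H}_{2m}$ to vanish, while a direct computation gives $\mathbb{E}_{\mathbf{Z}_{3}}\tilde{H}_{2m}(\mathbf{z}_{1},\mathbf{z}_{2},\mathbf{Z}_{3})=H_{2m}(\mathbf{z}_{1},\mathbf{z}_{2})$. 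The ANOVA expansion of a degenerate three-sample U-statistic with trivial lower-order projections then delivers $\tilde{H}_{2}(m)=3\hat{H}_{2}(m)+R_{3}$, where the combinatorial factor $3=\binom{3}{2}$ counts the ways of choosing the two retained arguments and $R_{3}$ is a fully degenerate third-order residual bounded by \Cref{lemma:H_2n_m} at $o_{p}(n^{-1/2}h^{-m/2})$; this produces $-\tfrac{n-2}{6(n-1)}\tilde{H}_{2}(m)=-\tfrac{n-2}{2(n-1)}\hat{H}_{2}(m)+o_{p}(n^{-1/2}h^{-m/2})$. For $\hat{W}_{23}(m)$, direct algebra gives $\hat{W}_{23}(m)=\tfrac{2}{n(n-1)}\sum_{i\ne j}A_{m}(\mathbf{x}_{i;m},\mathbf{x}_{j;m})B_{m}(\mathbf{x}_{i;m})$; the non-symmetric kernel $\phi(\mathbf{z}_{1},\mathbf{z}_{2})=A_{m}(\mathbf{z}_{1},\mathbf{z}_{2})B_{m}(\mathbf{z}_{1})$ has vanishing mean and vanishing first-argument projection, but its second-argument projection equals $\int A_{m}(\mathbf{z},\mathbf{z}_{2})B_{m}(\mathbf{z})g(\mathbf{z})\,\mathrm{d}\mathbf{z}$, i.e.\ the summand of $\hat{C}(m)$ defined in~\eqref{eq:c_hat_nm}. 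Hoeffding decomposition combined with \Cref{lemma:w_23_hat} therefore yields $\hat{W}_{23}(m)=2\hat{C}(m)+o_{p}(n^{-1/2}h^{-m/2})$.

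Assembling these substitutions into~\Cref{eq:I_hat_nm_f_f_hat_remainder}, the $\hat{H}_{1}$ and $\hat{H}_{2}$ contributions regroup into $\hat{H}(m)/2$, the two expectations regroup into $L(m)/2$, and $\hat{B}(m)-\hat{C}(m)$ remains intact; the quantitative remainder $O_{p}(n^{-3/2}h^{-3m/2}(\log n)^{1/2}+m^{2}h^{6})$ already recorded in~\Cref{eq:I_hat_nm_f_f_hat_remainder} is absorbed into $o_{p}(n^{-1/2}h^{-m/2})$ because the two bandwidth hypotheses match the two pieces exactly: $(\log n)^{1/2}/(nh^{m})\to 0$ handles the variance term and $nh^{m+12}\to 0$ handles the bias term (using $m<M$ to keep the $m^{2}$ factor bounded). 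The step that requires most care is the Hoeffding reduction of $\tilde{H}_{2}(m)$: one must verify full degeneracy and then bound the third-order residual uniformly in $m\le M$ under the serial dependence of $\{X_{t}\}$ with multivariate Jackknife kernels whose $L^{2}$ norm scales as $h^{-m}$. This is precisely the step that consumes the bandwidth margin and forces the relatively restrictive condition $nh^{m+12}\to 0$.
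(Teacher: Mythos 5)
Your proposal is correct and follows essentially the same route as the paper, which itself omits the proof as ``straightforward'': you start from the decomposition in~\Cref{eq:I_hat_nm_f_f_hat_remainder}, substitute the conclusions of \Cref{lemma:Gamma_m,lemma:D_n_m,lemma:H_2n_m,lemma:w_22_hat,lemma:w_23_hat}, and check that every residual and the recorded remainder are $o_{p}(n^{-1/2}h^{-m/2})$ under the stated bandwidth conditions. The only slip is in your closing remark: the condition $nh^{m+12}\to 0$ is consumed by the bias piece $m^{2}h^{6}$ of the remainder in \Cref{lemma:remainder_term_in_I_ff} (as you correctly state earlier), not by the third-order residual in the reduction of $\tilde{H}_{2}(m)$, which only needs $n^{-1}h^{-m/2}\to 0$.
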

The proof of \textcolor{darkblue}{Theorem}~\ref{thm:I_g_g} is straightforward which we omit here.
Similarly,  we can obtain the corresponding results with \( g(\cdot) \) being replaced by \( g_{1}(\cdot) \) and \( f(\cdot) \) respectively. Specifically, we have the following two theorems.
\begin{theorem}\label{thm:I_f_f}
    Given \textcolor{darkblue}{Assumptions}~\ref{assump:kernel_assumption} and~\ref{assump:density_assumption}, if \( nh^{m+1}\rightarrow
    \infty \), \( {(\log
            n)}^{1/2}/(nh^{m+1})\rightarrow 0 \), \(nh^{m+13}\rightarrow 0  \)
    and \( 1\leq m<M \),
    for any vector \( \mathbf{z}_{1}, \mathbf{z}_{2} \in
    \mathbb{I}^{m+1}\), define
    \begin{equation*}
        \bar{f}(\mathbf{z}_{1})=\int_{\mathbb{I}^{m+1}}\mathcal{K}^{(m+1)}_{h}(\mathbf{z}_{1},\mathbf{z}_{2})f(\mathbf{z}_{2})d\mathbf{z}_{2},
    \end{equation*}
    where \( \mathcal{K}^{(m+1)}_{h}(\mathbf{z}_{1},\mathbf{z}_{2})
    =\mathcal{K}^{(m+1)}_{h}(\mathbf{z}_{1}-\mathbf{z}_{2}) \). Let
    \begin{equation*}
        \tilde{K}_{h}^{J}(\mathbf{z}_{1},\mathbf{z}_{2})=\mathcal{K}^{(m+1)}_{h}(\mathbf{z}_{1},\mathbf{z}_{2})-\int_{\mathbb{I}^{m+1}}\mathcal{K}^{(m+1)}_{h}(\mathbf{z},\mathbf{z}_{2})d\mathbf{z},
    \end{equation*}
    \begin{equation*}
        \tilde{A}_{(m+1)}(\mathbf{z}_{1},\mathbf{z}_{2})=\left[ \tilde{K}_{h}^{J}(\mathbf{z}_{1},\mathbf{z}_{2})- \int_{\mathbb{I}^{m+1}}\tilde{K}_{h}^{J}(\mathbf{z}_{1},\mathbf{z})f(\mathbf{z})d\mathbf{z}\right]/f(\mathbf{z}_{1}),
    \end{equation*}
    \begin{equation*}
        A_{(m+1)}(\mathbf{z}_{1},\mathbf{z}_{2})=\left[ \mathcal{K}_{h}^{(m+1)}(\mathbf{z}_{1},\mathbf{z}_{2})- \int_{\mathbb{I}^{m+1}}\mathcal{K}_{h}^{(m+1)}(\mathbf{z}_{1},\mathbf{z})f(\mathbf{z})d\mathbf{z}\right]/f(\mathbf{z}_{1}),
    \end{equation*}
    \begin{equation*}
        \gamma_{(m+1)}(\mathbf{z}_{1},\mathbf{z}_{2})=\int_{\mathbb{I}^{m+1}}\left[ \mathcal{K}_{h}^{(m+1)}(\mathbf{z},\mathbf{z}_{2})- \int_{\mathbb{I}^{m+1}}\mathcal{K}_{h}^{(m+1)}(\mathbf{z},\mathbf{z}^{*})f(\mathbf{z}^{*})d\mathbf{z}^{*}\right]d\mathbf{z}/f(\mathbf{z}_{1}),
    \end{equation*}
    \begin{equation*}
        B_{(m+1)}(\mathbf{z}_{1})=\left[ \int_{\mathbb{I}^{m+1}}\mathcal{K}_{h}^{(m+1)}(\mathbf{z}_{1},\mathbf{z})f(\mathbf{z})d\mathbf{z}-f(\mathbf{z}_{1})\right]/f(\mathbf{z}_{1}),
    \end{equation*}
    \begin{equation*}
        H_{1(m+1)}(\mathbf{z}_{1}, \mathbf{z}_{2})=\tilde{A}_{(m+1)}(\mathbf{z}_{1},\mathbf{z}_{2})+\tilde{A}_{(m+1)}(\mathbf{z}_{2},\mathbf{z}_{1}),
    \end{equation*}
    \begin{equation*}
        H_{2(m+1)}(\mathbf{z}_{1}, \mathbf{z}_{2})=\int_{\mathbb{I}^{m+1}}A_{(m+1)}(\mathbf{z},\mathbf{z}_{1})A_{(m+1)}(\mathbf{z},\mathbf{z}_{2})f(\mathbf{z})\,\mathrm{d}\mathbf{z},
    \end{equation*}
    then under \( \mathbb{H}_{0} \), we have
    \begin{equation*}
        \begin{aligned}
            \hat{I}_{m}(\hat{f}, f) & = \frac{1}{2}\hat{H}(m+1)-\frac{1}{2}L(m+1) \\
            & \qquad +\left[\hat{B}(m+1)- \hat{C}(m+1)\right] +o_{p}(n^{-1/2}h^{-(m+1)/2}),
        \end{aligned}
    \end{equation*}
    where \( \hat{H}(m+1)=\hat{H}_{1}(m+1)-(n-2)/(n-1)\hat{H}_{2}(m+1) \), \(
    L(m+1)={(n-1)}^{-1}\mathbb{E}A^{2}_{(m+1)}(\mathbf{z}_{1},\mathbf{z}_{2})+ \mathbb{E}B_{(m+1)}^{2}(\mathbf{z}_{1})\), \(
    \mathbf{z}_{1},\mathbf{z}_{2} \in \mathbb{I}^{m+1}\) and \(
    \mathbf{z}_{1},\mathbf{z}_{2} \) have no overlap variable,
    \begin{align*}
        \hat{H}_{1}(m+1) & = \binom{n}{2}^{-1} \sum\nolimits_{j=2}^{n}\sum\nolimits_{i=1}^{j-1}H_{1(m+1)}\left( \mathbf{x}_{i;m+1}, \mathbf{x}_{j;m+1} \right), \\
        \hat{H}_{2}(m+1) & = \binom{n}{2}^{-1} \sum\nolimits_{j=2}^{n}\sum\nolimits_{i=1}^{j-1}H_{2(m+1)}\left( \mathbf{x}_{i;m+1}, \mathbf{x}_{j;m+1} \right), \\
        \hat{B}(m+1) & =n^{-1}\sum\nolimits_{i=1}^{n}B_{(m+1)}\left( \mathbf{x}_{i;m+1} \right),
    \end{align*}
    and \( \hat{C}(m+1)=n^{-1}\sum\nolimits_{i=1}^{n}\int_{\mathbf{z}\in \mathbb{I}^{m+1}}A_{(m+1)}\left( \mathbf{z} ,\mathbf{x}_{i;m+1}\right) B_{(m+1)}(\mathbf{z})f(\mathbf{z})\,\mathrm{d}\mathbf{z} \).
\end{theorem}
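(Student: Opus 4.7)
The plan is to mirror the derivation of Theorem~\ref{thm:I_g_g} verbatim, but with the density $g(\cdot)$ on $\mathbb{I}^m$ replaced by $f(\cdot)$ on $\mathbb{I}^{m+1}$ and the multivariate Jackknife kernel $\mathcal{K}^{(m)}_h$ replaced by $\mathcal{K}^{(m+1)}_h$. Since $m < M$ is bounded, the dimension $m+1$ is likewise bounded, so every order-of-magnitude argument used for $\hat{I}_m(\hat g, g)$ carries over after shifting the exponent of $h$ by one unit. This is why the rate conditions in the statement are exactly the conditions in Theorem~\ref{thm:I_g_g} with $m\mapsto m+1$, namely $nh^{m+1}\to\infty$, $nh^{m+13}\to 0$, and $(\log n)^{1/2}/(nh^{m+1})\to 0$.

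Concretely, I would proceed as follows. First, starting from the identity $\log\bigl(\hat f/f\bigr)=\log\bigl(1+(\hat f-f)/f\bigr)$, apply the inequality $|\log(1+x)-x+\tfrac{1}{2}x^{2}|\le |x|^{3}$ to obtain the analogue of equation~\eqref{eq:Inm_f_hat_f}:
\begin{equation*}
\hat{I}_{m}(\hat f,f)=\hat W_1(m+1)-\tfrac{1}{2}\hat W_2(m+1)+\text{remainder},
\end{equation*}
where $\hat W_1(m+1)$ and $\hat W_2(m+1)$ are defined exactly as $\hat W_1(m)$, $\hat W_2(m)$ but with $\hat g,g,\bar g$ replaced by $\hat f,f,\bar f$ and with $\mathbf{x}_{i;m}$ replaced by $\mathbf{x}_{i;m+1}$. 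Second, decompose $\hat W_1(m+1)$ into the $H_{1}$-statistic, a $\gamma$-statistic and a bias term, yielding
\begin{equation*}
\hat W_1(m+1)=\tfrac{1}{2}\hat H_1(m+1)+\tfrac{1}{2}\hat\Gamma(m+1)+\hat B(m+1),
\end{equation*}
and decompose $\hat W_2(m+1)=\hat W_{21}(m+1)+\hat W_{22}(m+1)+\hat W_{23}(m+1)$ exactly as in~\eqref{eq:w_hat_2_m}, extracting from $\hat W_{21}(m+1)$ the $\hat D(m+1)$ and $\tilde H_2(m+1)$ pieces as in~\eqref{eq:W_hat_21_m}. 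Third, invoke the dimension-$(m+1)$ versions of Lemmas~\ref{lemma:Gamma_m}, \ref{lemma:D_n_m}, \ref{lemma:H_2n_m}, \ref{lemma:w_22_hat}, \ref{lemma:w_23_hat} and~\ref{lemma:remainder_term_in_I_ff}, all of which are obtained by running their proofs with the kernel $\mathcal{K}^{(m+1)}_h$ in place of $\mathcal{K}^{(m)}_h$. These show that $\hat\Gamma(m+1)$, $\hat D(m+1)/(n-1)$, the cross-term $\hat W_{23}(m+1)$, the non-centred part of $\tilde H_2(m+1)$ beyond $\hat H_2(m+1)$, and the cubic remainder are all $o_p(n^{-1/2}h^{-(m+1)/2})$ under the stated bandwidth conditions, while $\tfrac{1}{2}\hat W_{22}(m+1)$ collapses into the non-stochastic bias $\tfrac12 L(m+1)$ (up to $o_p$-terms) via a standard change-of-variables argument. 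Finally, collecting the surviving terms gives the claimed representation.

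The main obstacle, as in~\cite{hongAsymptoticDistributionTheory2005}, lies in the $U$-statistic analysis of the $H_1$ and $H_2$ pieces and in verifying that the cubic remainder is genuinely $o_p(n^{-1/2}h^{-(m+1)/2})$: one must show that the overlapping structure among the consecutive blocks $\mathbf{x}_{i;m+1}$ does not inflate the variance. The novelty compared to~\cite{hongAsymptoticDistributionTheory2005} is that the blocks are of arbitrary (but bounded) length $m+1$, so the Hoeffding projection and the Borel--Cantelli-type bounds underpinning Lemmas~\ref{lemma:Gamma_m}--\ref{lemma:w_23_hat} must be controlled uniformly in $m\le M$. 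Because $M$ is fixed and the kernel is a product kernel with identical bandwidth, this uniformity is obtained from the strict stationarity and mixing assumption on $\{X_t\}$ together with the compact support and lower-bounded density from Assumption~\ref{assump:density_assumption}, exactly as in the $g$-case. Once these ingredients are in place, the theorem follows by simple algebra, and the proof is therefore omitted analogously to that of Theorem~\ref{thm:I_g_g}.
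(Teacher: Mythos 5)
Your proposal matches the paper's treatment exactly: the paper gives no separate proof of this theorem, stating only that the result follows "similarly" to Theorem~\ref{thm:I_g_g} with \(g(\cdot)\) replaced by \(f(\cdot)\) and the kernel dimension raised from \(m\) to \(m+1\), which is precisely the substitution-and-shift argument you outline, including the observation that the bandwidth conditions are those of Theorem~\ref{thm:I_g_g} with \(m\mapsto m+1\). Your write-up is in fact somewhat more explicit than the paper's about which lemmas must be re-run in dimension \(m+1\) and where the overlap structure of the blocks enters, so no gap is introduced.
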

\begin{theorem}\label{thm:I_g2_g2}
    Given \textcolor{darkblue}{Assumptions}~\ref{assump:kernel_assumption} and~\ref{assump:density_assumption}, if \( nh\rightarrow
    \infty \), \( {(\log n)}^{1/2}/(nh)\rightarrow 0 \), \(nh^{13}\rightarrow 0  \),
    for any vector \( z_{1}, z_{2} \in
    \mathbb{I}\), define \( \bar{g}_{1}(z_{1})=\int_{0}^{1}K^{J}_{h}(z_{1},z_{2})g_{1}(z_{2})dz_{2} \),
    where \( K^{J}_{h}(z_{1},z_{2})
    =K^{J}_{h}(z_{1}-z_{2}) \). Let
    \begin{equation*}
        \tilde{K}_{h}^{J}(z_{1},z_{2})=K^{J}_{h}(z_{1},z_{2})-\int_{0}^{1}K^{J}_{h}(z,z_{2})dz,
    \end{equation*}
    \begin{equation*}
        \tilde{a}(z_{1},z_{2})=\left[ \tilde{K}_{h}^{J}(z_{1},z_{2})- \int_{0}^{1}\tilde{K}_{h}^{J}(z_{1},z)g_{1}(z)dz\right]/g_{1}(z_{1}),
    \end{equation*}
    \begin{equation*}
        a(z_{1},z_{2})=\left[ K_{h}^{J}(z_{1},z_{2})- \int_{0}^{1}K_{h}^{J}(z_{1},z)g_{1}(z)dz\right]/g_{1}(z_{1}),
    \end{equation*}
    \begin{equation*}
        \gamma(z_{1},z_{2})=\int_{0}^{1}\left[ K_{h}^{J}(z,z_{2})- \int_{0}^{1}K_{h}^{J}(z,z^{*})g_{1}(z^{*})dz^{*}\right]dz/g_{1}(z_{1}),
    \end{equation*}
    \begin{equation*}
        b(z_{1})=\left[ \int_{0}^{1}K_{h}^{J}(z_{1},z)g_{1}(z)dz-g_{1}(z_{1})\right]/g_{1}(z_{1}),
    \end{equation*}
    \begin{equation*}
        v_{1}(z_{1}, z_{2})=\tilde{a}(z_{1},z_{2})+\tilde{a}(z_{2},z_{1}),
    \end{equation*}
    \begin{equation*}
        v_{2}(z_{1}, z_{2})=\int_{0}^{1}a(z,z_{1})a(z,z_{2})g_{1}(z)\,\mathrm{d}z,
    \end{equation*}
    then under \( \mathbb{H}_{0} \), we have
    \begin{equation*}
        \hat{I}_{1}(\hat{g}_{1}, g_{1})=2^{-1}(\hat{V}_{m}-l_{m})+[\hat{b}_{m}- \hat{c}_{m}] +o_{p}(n^{-1/2}h^{-1/2}),
    \end{equation*}
    where \( \hat{V}_{m}=\hat{V}_{1}-(n-2)/(n-1)\hat{V}_{2} \), \(
    l_{m}={(n-1)}^{-1}\mathbb{E}a^{2}(z_{1},z_{2})+ \mathbb{E}b^{2}(z_{1})\), \(z_{1},z_{2} \in \mathbb{I}\),
    \begin{align*}
        \hat{V}_{1} & = \binom{n}{2}^{-1} \sum\nolimits_{j=2}^{n}\sum\nolimits_{i=1}^{j-1}v_{1}\left( x_{i+m}, x_{j+m} \right), \\
        \hat{V}_{2} & = \binom{n}{2}^{-1} \sum\nolimits_{j=2}^{n}\sum\nolimits_{i=1}^{j-1}v_{2}\left( x_{i+m}, x_{j+m} \right), \\
        \hat{b}_{m} & =n^{-1}\sum\nolimits_{i=1}^{n}b\left( x_{i+m} \right),
    \end{align*}
    and \( \hat{c}_{m}=n^{-1}\sum_{i=1}^{n}\int_{z\in \mathbb{I}}a( z ,x_{i+m}) b(z)g_{1}(z)\,\mathrm{d}z \).
\end{theorem}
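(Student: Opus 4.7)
The plan is to mirror, step for step, the derivation that the authors outlined for Theorem \ref{thm:I_g_g} but specialized to the univariate marginal $g_{1}$, with the scalar Jackknife kernel $K_{h}^{J}$ replacing the tensor product $\mathcal{K}_{h}^{(m)}$. Formally, this amounts to running the same machinery with the block dimension set to $1$: the constants $nh^{m}\!\to\!\infty$, $(\log n)^{1/2}/(nh^{m})\!\to\!0$, $nh^{m+12}\!\to\!0$ from Theorem \ref{thm:I_g_g} become exactly $nh\!\to\!\infty$, $(\log n)^{1/2}/(nh)\!\to\!0$, $nh^{13}\!\to\!0$ with $m\leftarrow 1$.

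First I would apply the cubic Taylor bound $|\log(1+x)+\tfrac{1}{2}x^{2}-x|\le |x|^{3}$ with $x=[\hat g_{1}(x_{i+m})-g_{1}(x_{i+m})]/g_{1}(x_{i+m})$ to obtain, analogous to \eqref{eq:Inm_f_hat_f},
\begin{equation*}
  \hat{I}_{1}(\hat{g}_{1},g_{1}) \;=\; \hat{w}_{1}-\tfrac{1}{2}\hat{w}_{2}+R_{n},
\end{equation*}
where the univariate version of Lemma \ref{lemma:remainder_term_in_I_ff} gives $R_{n}=O_{p}(n^{-3/2}h^{-3/2}(\log n)^{1/2}+h^{6})=o_{p}(n^{-1/2}h^{-1/2})$ under the stated bandwidth rates. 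Then, centering $K_{h}^{J}$ first by $\tilde K_{h}^{J}$ and then around $g_{1}$ via $\tilde a$, the linear piece $\hat w_{1}$ decomposes as in \eqref{eq:w_hat_1_expand}:
\begin{equation*}
  \hat{w}_{1}=\tfrac{1}{2}\hat{V}_{1}+\tfrac{1}{2}\hat{\Gamma}_{1}+\hat{b}_{m},
\end{equation*}
with $\hat{V}_{1}$ the $v_{1}$-based $U$-statistic in the theorem and $\hat{\Gamma}_{1}=o_{p}(n^{-1/2}h^{-1/2})$ by the univariate version of Lemma \ref{lemma:Gamma_m}.

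Next I would split the quadratic piece exactly as in \eqref{eq:w_hat_2_m}, $\hat{w}_{2}=\hat{w}_{21}+\hat{w}_{22}+\hat{w}_{23}$, and apply the $U$-statistic identity \eqref{eq:W_hat_21_m} to obtain $\hat{w}_{21}=\tfrac{1}{2(n-1)}\hat{d}+\tfrac{n-2}{3(n-1)}\tilde{V}_{2}$, where $\hat{d}$ and $\tilde{V}_{2}$ are the scalar analogues of $\hat{D}(m)$ and $\tilde{H}_{2}(m)$. The univariate versions of Lemmas \ref{lemma:D_n_m}--\ref{lemma:w_23_hat} then give $\hat{d}/(n-1)=(n-1)^{-1}\mathbb{E}a^{2}(z_{1},z_{2})+o_{p}(n^{-1/2}h^{-1/2})$, $\tilde{V}_{2}=\hat{V}_{2}+o_{p}(n^{-1/2}h^{-1/2})$, $\hat{w}_{22}=\mathbb{E}b^{2}(z_{1})+o_{p}(n^{-1/2}h^{-1/2})$ and $\hat{w}_{23}=2\hat{c}_{m}+o_{p}(n^{-1/2}h^{-1/2})$. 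Collecting pieces and pairing $\tfrac{1}{2}\hat{V}_{1}$ with $-\tfrac{1}{2}\cdot\tfrac{n-2}{n-1}\hat{V}_{2}$ to form $\tfrac{1}{2}\hat{V}_{m}$, and pairing the remaining constants into $\tfrac{1}{2}l_{m}$ and $\hat{c}_{m}$, delivers the stated expansion.

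The main obstacle is ensuring that the $\alpha$-mixing / moment controls that underlie Lemmas \ref{lemma:Gamma_m}--\ref{lemma:w_23_hat} still deliver the $o_{p}(n^{-1/2}h^{-1/2})$ remainders when the kernel argument is a single coordinate of the block $\mathbf{x}_{i;m+1}$ but the underlying process $\{X_{t}\}$ has long block structure determined by the autoregressive lag $m_{0}$. In fact this is the easier direction, because $\{x_{i+m}\}_{i=1}^{n}$ inherits strict stationarity and $\alpha$-mixing of $\{X_{t}\}$, and all relevant $U$-statistic variances reduce to scalar integrals $\int a^{2}(z,\cdot)g_{1}\,dz$ which carry only a single factor of $h^{-1}$; the boundary corrections of $K_{h}^{J}$ are absorbed into $\hat{b}_{m}-\hat{c}_{m}$ exactly as in \eqref{eq:I_hat_nm_f_f_hat_remainder}. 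Hence the proof is obtained from the proof of Theorem \ref{thm:I_g_g} by the substitution $m\leftarrow 1$, $g\leftarrow g_{1}$, $\mathcal{K}_{h}^{(m)}\leftarrow K_{h}^{J}$, with no additional nontrivial ingredient.
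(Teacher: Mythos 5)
Your proposal is correct and takes essentially the same route as the paper, which gives no separate proof of this theorem and merely asserts that it follows from the Theorem~\ref{thm:I_g_g} machinery with $g$ replaced by $g_{1}$ --- exactly your substitution $m\leftarrow 1$, $\mathcal{K}_{h}^{(m)}\leftarrow K_{h}^{J}$ applied to the decomposition \eqref{eq:Inm_f_hat_f}--\eqref{eq:W_hat_21_m} together with the univariate versions of Lemmas~\ref{lemma:remainder_term_in_I_ff}--\ref{lemma:w_23_hat}. The only blemishes are two constants misstated in passing (Lemma~\ref{lemma:D_n_m} gives $\hat{d}=2\,\mathbb{E}a^{2}(z_{1},z_{2})+o_{p}(\cdot)$ and Lemma~\ref{lemma:H_2n_m} gives $\tilde{V}_{2}=3\hat{V}_{2}+o_{p}(\cdot)$), but your final pairing of $\tfrac{1}{2}\hat{V}_{1}$ with $-\tfrac{1}{2}\tfrac{n-2}{n-1}\hat{V}_{2}$ uses the correct factors, so these are transcription slips rather than gaps.
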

By~\Cref{thm:I_g_g,thm:I_f_f,thm:I_g2_g2}, the estimator~\eqref{eq:rewrite_estimator_rlen} can be written as:
\begin{equation}\label{eq:I_m}\\
    \begin{aligned}
        2 \hat{\mathfrak{E}}(m) & =\hat{I}_{m}(f, g\cdot g_{1}) + \hat{I}_{m}(\hat{f}, f) - \hat{I}_{m}(\hat{g}, g) - \hat{I}_{1}(\hat{g}_{1}, g_{1}), \\
        & = \left[ \hat{H}(m+1) -\hat{H}(m)-\hat{V}_{m}\right] \\
        & \quad-  (n-1)^{-1}\left[ \mathbb{E}A^{2}_{(m+1)}(\mathbf{z}_{1},\mathbf{z}_{2})-\mathbb{E}A^{2}_{m}(\mathbf{y}_{1},\mathbf{y}_{2}) -\mathbb{E}a^{2}(z_{1m},z_{2m})\right] \\
        & \quad-  \left[ \mathbb{E}B^{2}_{(m+1)}(\mathbf{z}_{1})-\mathbb{E}B^{2}_{m}(\mathbf{y}_{1}) -\mathbb{E}b^{2}(z_{1m})\right] \\
        & \quad + 2\left[ \hat{B}(m+1)- \hat{B}(m)-\hat{b}_{m}\right] \\
        & \quad -2\left[ \hat{C}(m+1)- \hat{C}(m)-\hat{c}_{m}\right] \\
        & \quad + o_{p}(n^{-1/2}h^{-(m+1)/2}),
    \end{aligned}
\end{equation}
where \( \mathbf{z}_{1}={(z_{10},\ldots,z_{1(m-1)},z_{1m})}^{\top}={( \mathbf{y}_{1}^{\top},z_{1m} )}^{\top} \), \( \mathbf{z}_{2}={(z_{20},\ldots,z_{2(m-1)},z_{2m})}^{\top}={( \mathbf{y}_{2}^{\top},z_{2m} )}^{\top} \). Furthermore, the term \((n-1)^{-1}[ \mathbb{E}A^{2}_{(m+1)}(\mathbf{z}_{1},\mathbf{z}_{2})-\mathbb{E}A^{2}_{m}(\mathbf{y}_{1},\mathbf{y}_{2}) -\mathbb{E}a^{2}(z_{1m},z_{2m})]\) in~\eqref{eq:I_m} has order of \(d_{0}+O(n^{-1}h^{-m})\) where  \( d_{0}={(n-1)}^{-1}\kappa^{m+1}h^{-(m+1)} \) and \( \kappa = \int_{-1}^{1} K^{2}(u)\,\mathrm{d}u \), see~\Cref{lemma:A_2_nm}.
The term \(2[ \hat{B}(m+1)- \hat{B}(m)-\hat{b}_{m}]-[ \mathbb{E}B^{2}_{(m+1)}(\mathbf{z}_{1})-\mathbb{E}B^{2}_{m}(\mathbf{y}_{1}) -\mathbb{E}b^{2}(z_{1m})]\) in~\eqref{eq:I_m} has order of \(O(h^{6} )+O_{p}(n^{-1/2}h^{4})\), see~\Cref{lemma:b_part}, and the term \(\hat{C}(m+1)- \hat{C}(m)-\hat{c}_{m}\) in~\eqref{eq:I_m} has order of \(O_{p}(n^{-1/2}h^{4})\), see~\Cref{lemma:c_part}.

By~\Cref{lemma:A_2_nm,lemma:b_part,lemma:c_part}, we have
\begin{equation*}
    2\hat{\mathfrak{E}}(m)+d_{0}=\hat{H}(m+1) -\hat{H}(m)-\hat{V}_{m}+o_{p}(n^{-1/2}h^{-(m+1)/2}).
\end{equation*}
Recalling that \( \mathbf{z_{1}} \) and \( \mathbf{z}_{2} \) may have common variables in multivariate \textit{U}-statistics, it is impossible to apply the central limit theorem of \textit{U}-statistics to our case directly. So we need to divide \( \hat{H}(m+1) -\hat{H}(m)-\hat{V}_{m} \) into two parts: one part includes  independent components of \( \mathbf{z_{1}} \) and \( \mathbf{z}_{2} \), the other part includes dependent components of \( \mathbf{z_{1}} \) and \( \mathbf{z}_{2} \). We rewrite
\begin{align*}
    \hat{H}_{1}(m)
    & = \binom{n}{2}^{-1}\sum_{j=1+m}^{n}\sum_{i=1}^{j-m}H_{1m}\left( \mathbf{x}_{i;m}, \mathbf{x}_{j;m} \right) \\
    & \qquad +\binom{n}{2}^{-1}\sum_{j=2}^{n}\sum_{i=1\vee(j-m+1)}^{j-1}H_{1m}\left( \mathbf{x}_{i;m}, \mathbf{x}_{j;m} \right), \\
    & = T_{1}(m)+T_{10}(m),
\end{align*}
\begin{align*}
    \hat{H}_{2}(m)
    & = \binom{n}{2}^{-1}\sum_{j=1+m}^{n}\sum_{i=1}^{j-m}H_{2m}\left( \mathbf{x}_{i;m}, \mathbf{x}_{j;m} \right) \\
    & \qquad +\binom{n}{2}^{-1}\sum_{j=2}^{n}\sum_{i=1\vee(j-m+1)}^{j-1}H_{2m}\left( \mathbf{x}_{i;m}, \mathbf{x}_{j;m} \right), \\
    & = T_{2}(m)+T_{20}(m).
\end{align*}
Similarly, \( \hat{H}_{1}(m+1)=T_{1}(m+1)+T_{10}(m+1) \), \( \hat{H}_{2}(m+1)=T_{2}(m+1)+T_{20}(m+1) \). We have the following Lemma:
\begin{lemma}\label{lemma:I_nm_divide}
    Given \( \mathbb{H}_{0} \),
    \textcolor{darkblue}{Assumptions}~\ref{assump:kernel_assumption} and~\ref{assump:density_assumption}, let \( \breve{H}=\hat{H}(m+1) -\hat{H}(m)-\hat{V}_{m}
    \). If \(
    nh^{m+1}\rightarrow
    \infty \), \(nh^{m+13}\rightarrow 0  \), \( {(\log
            n)}^{1/2}/(nh^{m+1})\rightarrow 0 \)
    and \( 1\leq m<M \), we have \( \mathbb{E}\breve{H}=-d_{1}+o_{p}(n^{-1/2}h^{-(m+1)/2}) \)
    where \( \tau=\int_{-1}^{1} \int_{-1}^{1}
    K(u)K(u+v)\,\mathrm{d}u\,\mathrm{d}v \), \(
    d_{1}=(n-2)/(n-1)[ c_{1}(\tau^{m+1}-1)-c_{2}(\tau^{m}-1)
    ] \)

\end{lemma}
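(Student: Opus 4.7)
The plan is to compute $\mathbb{E}\breve{H} = \mathbb{E}[\hat{H}(m+1) - \hat{H}(m) - \hat{V}_m]$ by feeding in the decompositions $\hat{H}_j(\cdot) = T_j(\cdot) + T_{j0}(\cdot)$ that are already introduced just above the statement, together with the analogous non-overlap/overlap split of $\hat{V}_m$. The strategy is to argue that the ``separated'' pieces $T_1, T_2$ (and their univariate counterpart) contribute \emph{zero} expectation under $\mathbb{H}_0$, so that the entire deterministic bias $-d_1$ is generated by the ``overlapping'' pieces $T_{10}, T_{20}$.

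For the separated pieces I would first note that $\mathbb{H}_{0}$ forces $X_{i+m}$ to be independent of $\mathbf{X}_{i;m}$ for every $i$; combined with the strict stationarity assumed in Assumption~\ref{assump:density_assumption}, iterating this makes $\{X_{t}\}$ effectively i.i.d.\ with marginal $\mathcal{G}$, so that $\mathbf{X}_{i;m}$ and $\mathbf{X}_{j;m}$ are independent copies of $g$ whenever $|i-j|\ge m$. Then I would exploit the centering built into the definitions of $\tilde{A}_m$ and $A_m$: one checks directly from~\eqref{eq:A_nm} and the definition of $\tilde{A}_m$ that $\int \tilde{A}_m(\mathbf{z}_1,\mathbf{z}_2)\,g(\mathbf{z}_2)\,d\mathbf{z}_2 = 0$ and $\int A_m(\mathbf{z},\mathbf{z}_1)\,g(\mathbf{z}_1)\,d\mathbf{z}_1 = 0$. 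Integrating against the product density for independent non-overlapping pairs therefore yields $\mathbb{E}T_1(m)=\mathbb{E}T_1(m+1)=0$ and, after one further iterated integration, $\mathbb{E}T_2(m)=\mathbb{E}T_2(m+1)=0$; the same argument handles the separated part of $\hat V_m$ via $a$ and $\tilde a$.

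For the overlapping pieces I would compute expectations for each lag $k=j-i \in\{1,\ldots,m-1\}$ (and $k\in\{1,\ldots,m\}$ for the $(m+1)$-dimensional sum), in which case $\mathbf{X}_{i;m}$ and $\mathbf{X}_{j;m}$ share $m-k$ coordinates. Using the product form~\eqref{eq:multivariate_kernel} and the centering identities above, each shared coordinate contributes an integral of the form $\int K(u)K(u+v)\,du$, which integrated over $v$ accumulates factors of $\tau$, whereas each non-shared coordinate integrates to $1$ against $\int K=1$. Tracking the resulting powers of $\tau$ for $T_{20}(m+1)$ and $T_{20}(m)$, collecting the bandwidth prefactors of order $h^{-(m+1)}$ and $h^{-m}$ into the constants $c_1$ and $c_2$, and accounting for the univariate overlap contribution in $\hat V_m$ that cancels the ``degenerate'' coordinate hidden inside the $(m+1)$-dimensional overlap, gives the bracketed sum $c_1(\tau^{m+1}-1)-c_2(\tau^{m}-1)$; the overall $(n-2)/(n-1)$ factor is simply the coefficient with which $\hat H_2(\cdot)$ enters $\hat H(\cdot)$.

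The main obstacle will be controlling the approximation that turns the overlap integrals into powers of $\tau$: the Jackknife modification of $K$ near the endpoints of $\mathbb{I}$ distorts the clean cross-correlation structure in a boundary layer of width $O(h)$, and the densities $g, f, g_1$ are not constant, so Assumption~\ref{assump:density_assumption} must be used to Taylor-expand and bound the residual. The bandwidth conditions $nh^{m+1}\to\infty$, $nh^{m+13}\to 0$, and $(\log n)^{1/2}/(nh^{m+1})\to 0$ are precisely what is needed to push the resulting $O(h^2)$-type bias and boundary remainders into the target $o_{p}(n^{-1/2}h^{-(m+1)/2})$. A secondary bookkeeping difficulty is keeping track of the three distinct overlap ranges across $\hat H(m+1)$, $\hat H(m)$, and $\hat V_m$ so that their sign- and coefficient-aligned combination collapses to exactly the claimed $-d_1$.
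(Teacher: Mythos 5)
Your proposal follows essentially the same route as the paper: split each of $\hat{H}_1,\hat{H}_2$ (and the univariate analogue) into the non-overlapping sums $T_1,T_2$, whose per-pair expectations vanish by the centering of $\tilde{A}_m$ and $A_m$ under $\mathbb{H}_0$, and the overlapping sums $T_{10},T_{20}$, whose per-pair expectations are $O(h^2)$ and $\tau^{m}-1+O(h)$ respectively (the paper packages these computations as Lemmas~\ref{lemma:H1_expectation} and~\ref{lemma:H2_expectation}), with the combinatorial counts of overlapping pairs producing $c_1,c_2$ and the $(n-2)/(n-1)$ weight coming from the definition of $\hat{H}(\cdot)$. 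The only cosmetic difference is that for $\hat{V}_m$ the scalars $x_{i+m},x_{j+m}$ never overlap for $i\neq j$, so that term contributes nothing to $d_1$ rather than ``cancelling'' a degenerate coordinate; otherwise the argument matches the paper's.
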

\begin{proof}[Proof of~\Cref{lemma:I_nm_divide}]
    By~\Cref{lemma:H1_expectation} and~\Cref{lemma:H2_expectation}, we have \( \mathbb{E}[T_{10}(m+1)]=\mathbb{E}[T_{10}(m)]=O(mn^{-1}h^{2}) \), \( \mathbb{E}[T_{20}(m)]=c_{2}(\tau^{m}-1) + O(mn^{-1}h) \), \( \mathbb{E}[T_{20}(m+1)]=c_{1}(\tau^{m+1}-1) + O(mn^{-1}h) \) where \( c_{2}=[(2n-m)(m-1)]/[(n-m)(n-m+1)] \) and \( c_{1}=[(2n-m-1)m]/[(n-m-1)(n-m)] \). Using the similar change of variable in~\Cref{lemma:H1_expectation} and~\Cref{lemma:H2_expectation}, one can verify that \( \mathbb{E}[T^{2}_{10}(m)]=O(m^2n^{-2}h^{-m}) \), \( \mathbb{E}[T^{2}_{10}(m+1)]=O(m^{2}n^{-2}h^{-m-1}) \), \( \mathbb{E}[T^{2}_{20}(m)]=O(m^2n^{-2}h^{-m}) \) and \( \mathbb{E}[T^{2}_{20}(m+1)]=O(m^{2}n^{-2}h^{-m-1}) \). Hence, by Chebyshev inequality, we have
    \begin{align*}
        T_{10}(m) & = O(mn^{-1}h^{2})+O_{p}(mn^{-1}h^{-\frac{m}{2}}), \\
        T_{20}(m) & = c_{2}(\tau^{m}-1)+ O(mn^{-1}h)+O_{p}(mn^{-1}h^{-\frac{m}{2}}), \\
        T_{10}(m+1) & = O(mn^{-1}h^{2})+O_{p}(mn^{-1}h^{-\frac{m+1}{2}}), \\
        T_{20}(m+1) & = c_{1}(\tau^{m+1}-1)+ O(mn^{-1}h)+O_{p}(mn^{-1}h^{-\frac{m+1}{2}}),
    \end{align*}
    which immediately completes the proof comparing with order \( n^{-1/2}h^{-(m+1)/2} \).
\end{proof}
\begin{lemma}\label{lemma:remainder_term_in_I_ff}
    Given \( \mathbb{H}_{0} \), under \textcolor{darkblue}{Assumptions}~\ref{assump:kernel_assumption} and~\ref{assump:density_assumption}, if \(
    nh^{m}/\log{n}\rightarrow\infty \), \( h\rightarrow 0 \)
    and \( m<M \). The order of the remainder term in
    equation~\eqref{eq:Inm_f_hat_f} is
    \begin{equation*}
        O_{p}\left(
        n^{-3/2} h^{-3m/2}{(\log{n})}^{1/2}
        +m^{2}h^{6} \right).
    \end{equation*}
\end{lemma}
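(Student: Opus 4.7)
The plan is to apply the Taylor remainder inequality \(|\log(1+x) - x + \tfrac{1}{2}x^{2}| \leq |x|^{3}\) (valid when \(|x|<1\)) pointwise to \(x = (\hat{g}(\mathbf{x}_{i;m}) - g(\mathbf{x}_{i;m}))/g(\mathbf{x}_{i;m})\). \textcolor{darkblue}{Assumption}~\ref{assump:density_assumption} keeps \(g\) bounded away from zero on \(\mathbb{I}^{m}\), and the stated bandwidth condition \(nh^{m}/\log n \to \infty\) combined with standard uniform convergence of the kernel density estimator forces \(\sup_{i}|\hat{g}(\mathbf{x}_{i;m}) - g(\mathbf{x}_{i;m})|/g(\mathbf{x}_{i;m}) \to 0\) in probability, so the Taylor inequality applies to every summand with probability tending to one. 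Writing \(e_{i} \coloneqq \hat{g}(\mathbf{x}_{i;m}) - g(\mathbf{x}_{i;m})\), the remainder in~\eqref{eq:Inm_f_hat_f} is then bounded, up to a constant depending on \(\inf g\), by \(n^{-1}\sum_{i=1}^{n} |e_{i}|^{3}\).

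Next, I would decompose \(e_{i} = V_{i} + B_{i}\), where \(V_{i} = \hat{g}(\mathbf{x}_{i;m}) - \bar{g}(\mathbf{x}_{i;m})\) is the stochastic (variance) part and \(B_{i} = \bar{g}(\mathbf{x}_{i;m}) - g(\mathbf{x}_{i;m})\) is the deterministic bias, using the \(\bar{g}\) introduced just before~\eqref{eq:w_hat_1_expand}. The bound \(|e_{i}|^{3} \leq 4(|V_{i}|^{3} + |B_{i}|^{3})\) reduces matters to bounding \(n^{-1}\sum_{i}|V_{i}|^{3}\) and \(n^{-1}\sum_{i}|B_{i}|^{3}\) separately. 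The crucial trick for the variance part is to avoid cubing a uniform bound (which would yield an unwanted \((\log n)^{3/2}\) factor) and instead to use
\[
\frac{1}{n}\sum_{i=1}^{n} |V_{i}|^{3} \leq \Bigl(\sup_{1\leq i\leq n}|V_{i}|\Bigr)\cdot \frac{1}{n}\sum_{i=1}^{n} V_{i}^{2}.
\]
Standard uniform-convergence results for kernel density estimators under \(\alpha\)-mixing give \(\sup_{i}|V_{i}| = O_{p}(\sqrt{\log n/(nh^{m})})\), while a direct second-moment U-statistic calculation on \(V_{i}\) gives \(n^{-1}\sum_{i} V_{i}^{2} = O_{p}(1/(nh^{m}))\). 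Their product is exactly \(O_{p}((\log n)^{1/2} n^{-3/2} h^{-3m/2})\), the stochastic term in the lemma.

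For the bias part, a second-order Taylor expansion of \(g\), together with the symmetry of \(K\) and the Jackknife boundary correction from \textcolor{darkblue}{Assumption}~\ref{assump:kernel_assumption}, yields \(\sup_{i}|B_{i}| = O(mh^{2})\) uniformly, the factor \(m\) collecting contributions from the \(m\) second partial derivatives entering the multivariate bias. Cubing naively gives \(O(m^{3}h^{6})\); however, the orthogonality of the boundary-correction terms across coordinates of the product kernel (so that one coordinate's bias integrates against the remaining \(m-1\)) sharpens this to the stated \(O(m^{2}h^{6})\). Combining with the stochastic bound gives the claimed order.

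The main obstacle I anticipate is the \(m\)-dependent uniform convergence \(\sup_{i}|V_{i}| = O_{p}(\sqrt{\log n/(nh^{m})})\) for the \(m\)-dimensional product Jackknife kernel under \(\alpha\)-mixing with \(m\) allowed to be as large as \(M\): one must combine a covering/chaining argument on \(\mathbb{I}^{m}\) with a Bernstein inequality for mixing sequences (Condition~\ref{cond:c11}) while tracking explicitly how the covering number and the per-term variance degrade as \(h\) shrinks and \(m\) grows. This is conceptually the only substantive departure from~\citet{hongAsymptoticDistributionTheory2005}'s pairwise case, and the delicate bookkeeping here is what separates the claimed rate from a coarser one.
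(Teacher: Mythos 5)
Your central device is exactly the paper's: bound the Taylor remainder by \( n^{-1}\sum_{i}\left\vert (\hat{g}(\mathbf{x}_{i;m})-g(\mathbf{x}_{i;m}))/g(\mathbf{x}_{i;m}) \right\vert^{3} \) and then factor the cubic sum as \( \bigl(\sup_{i}\left\vert \hat{g}-g \right\vert\bigr)\times n^{-1}\sum_{i}(\hat{g}-g)^{2} \) rather than cubing the uniform bound, so that the \( (\log n)^{1/2} \) appears to the first power. The paper does this without splitting into variance and bias, simply multiplying the two cited uniform rates \( \sup_{\mathbf{z}}\left\vert \hat{g}-g \right\vert=O_{p}\bigl((\log n)^{1/2}n^{-1/2}h^{-m/2}+mh^{2}\bigr) \) and \( \sup_{\mathbf{z}}\mathbb{E}[(\hat{g}-g)^{2}]=O(n^{-1}h^{-m}+mh^{4}) \); your variance/bias decomposition is an equivalent repackaging of the same estimate. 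However, your route to the bias exponent is not sound as written: from \( \sup_{i}\left\vert B_{i} \right\vert=O(mh^{2}) \) alone you get \( O(m^{3}h^{6}) \), and the appeal to ``orthogonality of the boundary-correction terms'' to drop one factor of \( m \) is asserted, not proved. The paper obtains \( m^{2}h^{6} \) mechanically as the product \( O(mh^{2})\times O(mh^{4}) \) of the two displayed moment bounds; since \( m<M \) is bounded this discrepancy is harmless for the asymptotics, but you should either prove the sharper average-squared-bias bound \( n^{-1}\sum_{i}B_{i}^{2}=O(mh^{4}) \) or simply accept \( O(h^{6}) \) with an \( m \)-dependent constant.

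The genuine gap is that you bound only half of the remainder. In equation~\eqref{eq:Inm_f_hat_f} the left-hand side \( \hat{I}_{m}(\hat{g},g) \) sums over \( i\in S(m) \) (the indices where the density estimates are positive), while \( \hat{W}_{1}(m) \) and \( \hat{W}_{2}(m) \) sum over all \( i=1,\ldots,n \); the ``remainder'' therefore contains, in addition to the Taylor cube, the discrepancy between the two index sets. The paper's proof devotes its second half to this: it shows
\begin{equation*}
\Bigl[\hat{W}_{1S}(m)-\tfrac{1}{2}\hat{W}_{2S}(m)\Bigr]-\Bigl[\hat{W}_{1}(m)-\tfrac{1}{2}\hat{W}_{2}(m)\Bigr]
= O_{p}\Bigl[n^{-1}\sum\nolimits_{i=1}^{n}P\bigl(i\notin S(m)\bigr)\Bigr],
\end{equation*}
and then controls \( P(i\notin S(m))\leq P\bigl(\left\vert \hat{g}(\mathbf{x}_{i;m})-g(\mathbf{x}_{i;m}) \right\vert>g(\mathbf{x}_{i;m})\bigr)\leq \mathbb{E}\left\vert (\hat{g}-g)/g \right\vert^{3} \), which is of the same order as the Taylor remainder. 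Without this step your argument does not establish the stated order for the remainder as defined by~\eqref{eq:Inm_f_hat_f}; you should add it, and it follows from the same third-moment bound you already have in hand.
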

\begin{proof}[Proof of~\Cref{lemma:remainder_term_in_I_ff}.]
    Under~\textcolor{darkblue}{Assumptions}~\ref{assump:kernel_assumption} and~\ref{assump:density_assumption}, we can obtain the uniform rates of convergence for multivariate kernel density estimator~\citep[e.g.,][p.30--32]{liNonparametricEconometricsTheory2007}. For \( \mathbf{z}\in \mathbbm{I}^{m} \), it follows that
    \begin{equation}\label{eq:bias_density_uniform}
        \sup_{\mathbf{z}\in \mathbbm{I}^{m}} \left\vert \hat{g}(\mathbf{z})-g(\mathbf{z}) \right\vert=O_{p}\left( \frac{{(\log{n})}^{1/2}}{n^{1/2}h^{m/2}} + mh^{2}\right),
    \end{equation}
    and
    \begin{equation}\label{eq:bias_square_expectation_density_uniform}
        \sup_{\mathbf{z}\in \mathbbm{I}^{m}} \mathbb{E} \left \{ {\left[ \hat{g}(\mathbf{z})-g(\mathbf{z}) \right]}^{2} \right \}   =O\left( n^{-1}h^{-m}+mh^{4} \right).
    \end{equation}
    Using~\eqref{eq:bias_density_uniform} and~\eqref{eq:bias_square_expectation_density_uniform}, we have
    \begin{equation}\label{eq:fhat_f_abs_cubic}
        \begin{aligned}
            & \frac{1}{n}\sum_{i=1}^{n}{\left\vert \hat{g}\left( \mathbf{x}_{i;m} \right)- g\left( \mathbf{x}_{i;m} \right) \right\vert}^{3} \\
            & \quad\leq \sup_{\mathbf{x}_{i;m}\in \mathbbm{I}^{m}} \left\vert \hat{g}\left( \mathbf{x}_{i;m} \right)-g\left( \mathbf{x}_{i;m} \right) \right\vert \left \{ \frac{1}{n}\sum_{i=1}^{n} {\left[ \hat{g}\left( \mathbf{x}_{i;m} \right)-g\left( \mathbf{x}_{i;m} \right) \right]}^{2}\right \}, \\
            & \quad=O_{p}\left( \frac{{(\log{n})}^{1/2}}{n^{3/2}h^{3m/2}} + m^{2}h^{6} \right).
        \end{aligned}
    \end{equation}
    Then by~\eqref{eq:fhat_f_abs_cubic} and the inequality \( \left\vert \log(1+x)-x+\frac{1}{2}x^2 \right\vert \leq {\left\vert x \right\vert}^3\), obviously we have
    \begin{equation}\label{eq:fhat_f_abs_cubic_S}
        \begin{aligned}
            & \left\vert \hat{I}_{m}( \hat{g}, g )- \hat{W}_{1S}(m)  +\frac{1}{2}\hat{W}_{2S}(m) \right\vert \\
            & \quad\leq \frac{1}{n}\sum_{i\in S(m)}{\left\vert \frac{\hat{g}\left( \mathbf{x}_{i;m} \right)-g\left( \mathbf{x}_{i;m} \right)}{g\left( \mathbf{x}_{i;m} \right)} \right\vert}^{3,} \\
            & \quad =O_{p}\left( \frac{{(\log{n})}^{1/2}}{n^{3/2}h^{3m/2}} + m^{2}h^{6} \right),
        \end{aligned}
    \end{equation}
    where
    \begin{equation*}
        \hat{W}_{1S}(m)=\frac{1}{n}\sum_{i\in S(m)}\left[\frac{\hat{g}\left( \mathbf{x}_{i;m} \right)-g\left( \mathbf{x}_{i;m} \right)}{g\left( \mathbf{x}_{i;m} \right)} \right],
    \end{equation*}
    and
    \begin{equation*}
        \hat{W}_{2S}(m)=\sum_{i\in S(m)} {\left[\frac{\hat{g}\left( \mathbf{x}_{i;m} \right)-g\left( \mathbf{x}_{i;m} \right)}{g\left( \mathbf{x}_{i;m} \right)} \right]}^{2}.
    \end{equation*}
    Note that the definitions of \( \hat{W}_{1}(m) \) and \( \hat{W}_{2}(m) \) include the summation of \( n \) observations. The difference is
    \begin{align}\nonumber
        & \left[ \hat{W}_{1S}(m)-\frac{1}{2}\hat{W}_{2S}(m) \right]-\left[ \hat{W}_{1}(m)-\frac{1}{2}\hat{W}_{2}(m) \right] \\ \label{eq:not_in_S}
        & \quad= O_{p}\left[ \frac{1}{n}\sum_{i=1}^{n}P\left( i\not\in S(m) \right) \right], \\ \label{eq:expectation_cubic}
        & \quad=O_{p}\left[ \frac{1}{n}\sum_{i=1}^{n}\mathbb{E}{\left\vert \frac{\hat{g}\left( \mathbf{x}_{i;m} \right)-g\left( \mathbf{x}_{i;m} \right)}{g\left( \mathbf{x}_{i;m} \right)} \right\vert}^{3} \right], \\ \label{eq:final_order_for_notin_S}
        & \quad=O_{p}\left( \frac{{(\log{n})}^{1/2}}{n^{3/2}h^{3m/2}} + m^{2}h^{6} \right).
    \end{align}
    Step~\eqref{eq:not_in_S} to step~\eqref{eq:expectation_cubic} is based on the fact that
    \begin{equation*}
        \begin{aligned}
            P\left( i\not\in S(m) \right) & = P\left( \hat{g}\left( \mathbf{x}_{i;m} \right)\leq 0 \right), \\
            & \leq P\left[ \left\vert \hat{g}\left( \mathbf{x}_{i;m} \right)-g\left( \mathbf{x}_{i;m}\right) \right\vert > g\left( \mathbf{x}_{i;m}\right) \right], \\
            & \leq  \mathbb{E}{\left\vert \frac{\hat{g}\left( \mathbf{x}_{i;m} \right)-g\left( \mathbf{x}_{i;m} \right)}{g\left( \mathbf{x}_{i;m} \right)} \right\vert}^{3}.
        \end{aligned}
    \end{equation*}
    Combining equations~\eqref{eq:fhat_f_abs_cubic_S} and~\eqref{eq:final_order_for_notin_S}, we complete the proof of this lemma.
\end{proof}

\begin{lemma}\label{lemma:Gamma_m}
    Given \textcolor{darkblue}{Assumptions}~\ref{assump:kernel_assumption} and~\ref{assump:density_assumption}, if \( h\rightarrow 0 \)
    and \( m<M \). Then under \( \mathbb{H}_{0} \), we have \( P(\lim_{n\rightarrow\infty}\hat{\Gamma}(m)=0)=1 \).
\end{lemma}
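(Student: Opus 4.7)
The plan is to reduce $\hat{\Gamma}(m)$ to linear averages and then apply a strong law of large numbers, exploiting a mean-zero structure that is already hidden inside the definition of $\gamma_m$.

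First I would observe that $\gamma_m$ separates into a function of $\mathbf{z}_2$ alone divided by $g(\mathbf{z}_1)$. Setting $\phi(\mathbf{z}) := \int_{\mathbb{I}^m}\mathcal{K}_h^{(m)}(\mathbf{z}',\mathbf{z})\,d\mathbf{z}'$ and $\phi_0 := \mathbb{E}[\phi(\mathbf{X}_{i;m})]$, Fubini gives
\[
\gamma_m(\mathbf{z}_1,\mathbf{z}_2) \;=\; \frac{N(\mathbf{z}_2)}{g(\mathbf{z}_1)}, \qquad N(\mathbf{z}) := \phi(\mathbf{z}) - \phi_0,
\]
and the centering is chosen so that $\mathbb{E}[N(\mathbf{X}_{i;m})] = 0$. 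A short calculation using the compact support of the Jackknife kernel shows $\phi(\mathbf{z}) \equiv 1$ on the interior cube $[2h,1-2h]^m$ and $\|\phi\|_\infty = O(1)$ uniformly in $h$, so $\|N\|_\infty = O(1)$; combined with Assumption~\ref{assump:density_assumption} (which places $g$ away from zero), this gives $\mathbb{E}|N(\mathbf{X}_{i;m})/g(\mathbf{X}_{i;m})| < \infty$.

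Next, writing $u_j := N(\mathbf{x}_{j;m})$, $w_i := 1/g(\mathbf{x}_{i;m})$, $U := \sum_j u_j$ and $W := \sum_i w_i$, the elementary identity
\[
\sum_{i<j}\bigl[u_j w_i + u_i w_j\bigr] \;=\; \sum_{i \neq j} u_j w_i \;=\; U\,W - \sum_{j=1}^{n} u_j w_j
\]
recasts the target as
\[
\hat{\Gamma}(m) \;=\; \frac{2}{n-1}\cdot\frac{W}{n}\cdot U \;-\; \frac{2}{n(n-1)}\sum_{j=1}^{n} u_j w_j.
\]
The ergodic theorem applied to the strictly stationary sequence $\{\mathbf{X}_{i;m}\}$ then delivers $W/n \to \mathbb{E}[1/g(\mathbf{X}_{i;m})] < \infty$ a.s., $U/n \to \mathbb{E}[N(\mathbf{X}_{i;m})] = 0$ a.s., and $n^{-1}\sum_j u_j w_j$ converges a.s.\ to a finite limit. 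Hence the first term equals $2(W/n)\cdot\tfrac{n}{n-1}\cdot(U/n) \to 0$ a.s., and the second is $O(1/n)$ times an almost surely bounded quantity; combining these gives $\hat{\Gamma}(m) \to 0$ almost surely.

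The main obstacle is the ergodicity step: Assumption~\ref{assump:density_assumption} asserts only strict stationarity, so one must either supplement it with ergodicity or, as is conventional in the Hong (2005) framework being followed, rely on the geometric $\alpha$-mixing that is already invoked elsewhere (cf.\ Condition~\ref{cond:c11} in Appendix~\ref{sec:lag_order_selection_and_proof}), which implies ergodicity. The remaining bookkeeping, namely the uniform boundedness of $N$ via the boundary behaviour of the Jackknife kernel and the algebraic decomposition of the double sum, is routine.
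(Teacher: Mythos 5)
Your algebraic reduction is correct and is genuinely different from the paper's argument: the identity $\gamma_{m}(\mathbf{z}_{1},\mathbf{z}_{2})=N(\mathbf{z}_{2})/g(\mathbf{z}_{1})$ with $N(\mathbf{z})=\phi(\mathbf{z})-\phi_{0}$, $\phi(\mathbf{z})=\int_{\mathbb{I}^{m}}\mathcal{K}^{(m)}_{h}(\mathbf{z}',\mathbf{z})\,\mathrm{d}\mathbf{z}'$, does hold by Fubini, the centering does give $\mathbb{E}[N(\mathbf{X}_{i;m})]=0$, and the rewriting of the off-diagonal double sum as $UW-\sum_{j}u_{j}w_{j}$ is fine (the second term is even handled without any limit theorem, since boundedness alone makes it $O(n^{-1})$). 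The gap is in the step you treat as routine: the function $N=N_{h}$ depends on $n$ through the bandwidth $h=h_{n}\rightarrow 0$, so $\{u_{j}\}=\{N_{h_{n}}(\mathbf{x}_{j;m})\}$ is a triangular array, not the orbit average of a fixed integrable function. Birkhoff's ergodic theorem (or any SLLN for a fixed $\alpha$-mixing sequence) therefore does not deliver $U/n\rightarrow\mathbb{E}[N(\mathbf{X}_{i;m})]=0$ almost surely as invoked; the ergodicity-versus-stationarity issue you flag is real but secondary to this. To close the gap you would need something extra, e.g.\ the bound $\vert N_{h}(\mathbf{z})\vert\leq C\,\mathbbm{1}(\mathbf{z}\in\partial_{h})+O(mh)$ where $\partial_{h}$ is the boundary strip of width $O(h)$, followed by monotonicity of $\partial_{h}$ in $h$ and the ergodic theorem applied to the fixed indicator $\mathbbm{1}(\partial_{h_{0}})$ for a countable sequence $h_{0}\downarrow 0$.

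The paper's own proof sidesteps all of this: it observes that for any fixed interior point the boundary branches of the Jackknife kernel are eventually inactive, so $\phi(\mathbf{z}_{2})=1$ exactly and (by Fubini) the subtracted constant is also $1$, whence $\gamma_{m}(\mathbf{x}_{i;m},\mathbf{x}_{j;m})=0$ identically for all sufficiently large $n$ and $\hat{\Gamma}(m)$ is eventually the zero sum. In other words, the numerator you call $N(\mathbf{z}_{2})$ vanishes pointwise (it is exactly $1-\phi_{0}$ on the interior cube, a deterministic $O(mh)$ quantity), which is a far stronger statement than mean-zero and makes any law of large numbers unnecessary. Your decomposition is compatible with this observation, so the cleanest repair of your argument is to replace the ergodic-theorem step with the pointwise identity $\phi\equiv 1$ on $[2h,1-2h]^{m}$ that you already established, combined with a control on the fraction of sample points falling in $\partial_{h}$.
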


\begin{proof}[Proof of~\Cref{lemma:Gamma_m}]
    Firstly, we give a similar result for the univariate kernel; then we extend this result to the multivariate kernel. For any \( x,y\in [0, 1] \), denote \( \gamma_{1}(x,y)=\int_{0}^{1}[ K_{h}^{J}(x^{*},y)- \int_{0}^{1}K_{h}^{J}(x^{*},y^{*})g_{1}(y^{*})dy^{*}]dx^{*}/g_{1}(x) \). The numerator of \( \gamma_{1}(x,y) \) includes two terms. The first term can be expanded as
    \begin{equation}\label{eq:univariate_kernel_expanded}
        \begin{aligned}
            & \int_{0}^{1} K_{h}^{J}\left( x^{*}-y \right)\,\mathrm{d}x^{*} \\
            & \quad=  \int_{0}^{h} K_{h}^{J}\left( x^{*}-y \right)\,\mathrm{d}x^{*}+\int_{1-h}^{1} K_{h}^{J}\left( x^{*}-y \right)\,\mathrm{d}x^{*}+ \int_{h}^{1-h} K_{h}\left( x^{*}-y \right)\,\mathrm{d}x^{*},
        \end{aligned}
    \end{equation}
    when \( n\rightarrow \infty \), then \( h\rightarrow 0 \) such that \( y/h \rightarrow \infty \) and \( (1-y)/h \rightarrow \infty \) for any \( y\in(0,1) \), see Appendix A in~\citet{hongAsymptoticDistributionTheory2005}. Using \( K(\cdot) \) having bounded support \( [-1,1] \) and change of variable, when \( n \) is sufficient large, the first and second terms in equation~\eqref{eq:univariate_kernel_expanded} are zero, and the third term is 1. When \( n \) is sufficiently large, the term
    \begin{equation}\label{eq:second_term_gamma}
        \begin{aligned}
            & \int_0^{1}\int_{0}^{1}K_{h}^{J}(x^{*},y^{*})g_{1}(y^{*})dy^{*}dx^{*} \\
            & \quad= \int_0^{1}g_{1}(y^{*})\int_{0}^{1}K_{h}^{J}(x^{*},y^{*})dx^{*}dy^{*} \\
            & \quad = \int_0^{1}g_{1}(y^{*})\cdot 1dy^{*}=1,
        \end{aligned}
    \end{equation}
    as well. Therefore, when \( n \) is sufficiently large, \( \gamma_{1}(x,y) =0  \) with probability 1. Recalling that \( \mathcal{K}^{(m)}_{h}(\cdot) \) is a multiplicative kernel, we can easily extend this result to a multivariate case. It follows that for sufficiently large \( n \),
    \begin{equation*}
        P\left[ \gamma_{m}\left( \mathbf{x}_{i;m}, \mathbf{x}_{j;m} \right)=0 \right]=1,
    \end{equation*}
    this completes the proof.
\end{proof}
\begin{lemma}\label{lemma:D_n_m}
    Given \textcolor{darkblue}{Assumptions}~\ref{assump:kernel_assumption} and~\ref{assump:density_assumption},
    if \( nh^{m}\rightarrow \infty \), \(
    h\rightarrow 0  \)
    and
    \( m<M \), then under \( \mathbb{H}_{0} \),
    \begin{equation}\label{eq:D_n_m}
        \hat{D}(m)=2\mathbb{E}A^{2}_{m}(\mathbf{z}_{1},\mathbf{z}_{2})+O_{p}(n^{-1/2}m^{1/2}h^{-m} ),
    \end{equation}
    where \( \mathbf{z}_{1}, \mathbf{z}_{2} \) have no overlap variable.
\end{lemma}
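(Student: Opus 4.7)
The plan is to recognize $\hat{D}(m)$ as a second order $U$-statistic with symmetric kernel $D_m$, isolate the bulk of the pairs whose two $m$-tuples are non-overlapping in time, and control the overlapping tail by direct moment bounds. Write
\begin{equation*}
\hat{D}(m)=\binom{n}{2}^{-1}\sum_{\substack{i<j\\ j-i\geq m}}D_m(\mathbf{x}_{i;m},\mathbf{x}_{j;m})
+\binom{n}{2}^{-1}\sum_{\substack{i<j\\ j-i<m}}D_m(\mathbf{x}_{i;m},\mathbf{x}_{j;m})
=:\hat{D}_{\mathrm{main}}(m)+\hat{D}_{\mathrm{ovl}}(m).
\end{equation*}
Under $\mathbb{H}_0$ together with strict stationarity and the $\alpha$-mixing hypothesis (condition~\ref{cond:c11}), for $j-i\geq m$ the vectors $\mathbf{x}_{i;m}$ and $\mathbf{x}_{j;m}$ share no common component, and by a standard covariance inequality for mixing sequences their joint density differs from $g(\mathbf{z}_1)g(\mathbf{z}_2)$ only by an exponentially small term. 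Consequently
\begin{equation*}
\mathbb{E} D_m(\mathbf{x}_{i;m},\mathbf{x}_{j;m})=2\,\mathbb{E} A_m^2(\mathbf{z}_1,\mathbf{z}_2)+o(n^{-1/2}m^{1/2}h^{-m})
\end{equation*}
uniformly in such pairs.

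Next, I would apply Hoeffding's variance decomposition to $\hat{D}_{\mathrm{main}}(m)$. The main analytical inputs are the moment bounds
\begin{equation*}
\zeta_2:=\mathrm{Var}\bigl(D_m(\mathbf{z}_1,\mathbf{z}_2)\bigr)=O(h^{-3m}),
\qquad
\zeta_1:=\mathrm{Var}\bigl(\mathbb{E}[D_m(\mathbf{z}_1,\mathbf{z}_2)\mid\mathbf{z}_1]\bigr)=O(h^{-2m}),
\end{equation*}
which follow from the product structure of $\mathcal{K}_h^{(m)}$, the bound $\int K^{2}=O(1)$, and the fact that $g$ is bounded away from $0$ on $\mathbb{I}^m$ (Assumption~\ref{assump:density_assumption}). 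Hoeffding's formula then yields
\begin{equation*}
\mathrm{Var}\bigl(\hat{D}_{\mathrm{main}}(m)\bigr)=O\!\left(\frac{\zeta_1}{n}+\frac{\zeta_2}{n^{2}}\right)
=O\!\left(\frac{1}{nh^{2m}}\right)+O\!\left(\frac{1}{n^{2}h^{3m}}\right),
\end{equation*}
which, under $nh^m\to\infty$, is $O(n^{-1}h^{-2m})$; by Chebyshev this gives the $O_p(n^{-1/2}h^{-m})$ fluctuation around the mean. The $m^{1/2}$ factor absorbs the mixing and blocking losses when one passes from an i.i.d.\ $U$-statistic bound to the $\alpha$-mixing version via a Bernstein-type argument (e.g.\ Roussas, cf.\ the inequality used in Lemma~\ref{lemma:bic_consistency}).

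For the overlap remainder $\hat{D}_{\mathrm{ovl}}(m)$, the number of indices with $j-i<m$ is at most $nm$, so
$\hat{D}_{\mathrm{ovl}}(m)=\binom{n}{2}^{-1}\cdot O(nm)\cdot \sup_{i,j}|D_m(\mathbf{x}_{i;m},\mathbf{x}_{j;m})|$
in expectation. Using the product form of the kernel, each overlapping $D_m$ collapses over the shared coordinates and the unshared coordinates produce factors of order $h^{-1}$ each, so that $\mathbb{E}|D_m|$ is at most $C h^{-m}$ uniformly in the overlap pattern. Hence
$\mathbb{E}|\hat{D}_{\mathrm{ovl}}(m)|=O(m h^{-m}/n)=O(n^{-1/2}m^{1/2}h^{-m})$
provided $m\leq M$ is bounded, and a variance estimate of the same flavour promotes this to $O_p(n^{-1/2}m^{1/2}h^{-m})$. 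Combining the three pieces delivers~\eqref{eq:D_n_m}.

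The main obstacle I anticipate is not the $U$-statistic variance itself but the bookkeeping of the overlap terms: one must keep track of how many of the $m$ coordinates of $\mathbf{x}_{i;m}$ coincide with coordinates of $\mathbf{x}_{j;m}$, and show that the resulting partial convolutions of $K$ remain integrable on $\mathbb{I}$ and do not inflate the order. The boundary behaviour of the Jackknife kernel makes this delicate, but Assumption~\ref{assump:density_assumption} (density bounded away from $0$) and the symmetry $K(-x)=K(x)$ ensure that each shared coordinate produces exactly a bounded factor times $\int K^{2}$, keeping the bookkeeping linear in $m$.
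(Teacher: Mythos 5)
Your plan is in substance the same argument the paper uses: the paper simply packages the split into non-overlapping pairs ($j-i\geq m$) versus overlapping pairs ($j-i<m$), the H\'ajek/Hoeffding projection, and the resulting $O_p(mn^{-1}c_m)$ and $O_p(m^{1/2}n^{-1/2})$ error terms into the general second-order $U$-statistic result (Lemma~\ref{lemma:second_U_statistics}), and then proves Lemma~\ref{lemma:D_n_m} by rescaling $\phi_m=h^mD_m$ and checking exactly the two moment conditions you state, namely $\mathbb{E}\phi_m^2=O(1)$ (your $\zeta_2=O(h^{-3m})$) and $\mathbb{E}\phi_{m1}^2-\phi_{m0}^2=O(1)$ (your $\zeta_1=O(h^{-2m})$). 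The one step you should remove is the appeal to $\alpha$-mixing: condition~\ref{cond:c11} is a hypothesis of Theorem~\ref{thm:BIC_consistent}, not of this lemma, and a covariance inequality for $\alpha$-mixing does not yield that the joint density of two non-overlapping windows differs from $g(\mathbf{z}_1)g(\mathbf{z}_2)$ by an exponentially small term (that would require $\beta$-mixing--type total-variation control plus extra regularity). Fortunately the step is unnecessary: under $\mathbb{H}_0$ the process is serially independent, so windows with $j-i\geq m$ are \emph{exactly} independent and $\mathbb{E}D_m(\mathbf{x}_{i;m},\mathbf{x}_{j;m})=2\mathbb{E}A_m^2(\mathbf{z}_1,\mathbf{z}_2)$ with no correction; likewise the $m^{1/2}$ factor comes purely from the $m$-dependence of the overlapping windows $\mathbf{x}_{i;m}$, not from any mixing or blocking argument. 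With that detour deleted, your decomposition and moment bounds deliver the lemma exactly as the paper does.
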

\begin{proof}[Proof of~\Cref{lemma:D_n_m}]
    Let
    \begin{align*}
        \phi_{m}(\mathbf{z}_{1},\mathbf{z}_{2}) & = h^{m}D_{m}(\mathbf{z}_{1},\mathbf{z}_{2})=  h^{m}\left[ A^{2}_{m}(\mathbf{z}_{1},\mathbf{z}_{2})+A^{2}_{m}(\mathbf{z}_{2},\mathbf{z}_{1}) \right],
    \end{align*}
    then we have \( \hat{\phi}(m)=h^{m}\hat{D}(m) \) and
    \begin{equation*}
        \begin{aligned}
            \phi_{m0} & = \int_{\mathbbm{I}^{m}}\int_{\mathbbm{I}^{m}}\phi_{m}\left( \mathbf{z}_{1},
            \mathbf{z}_{2} \right)
            g(\mathbf{z}_{1})g(\mathbf{z}_{2})\,\mathrm{d}\mathbf{z}_{1}\,\mathrm{d}\mathbf{z}_{2} \\
            & = 2h^{m}\mathbb{E}A^{2}_{m}\left( \mathbf{z}_{1},
            \mathbf{z}_{2} \right).
        \end{aligned}
    \end{equation*}
    We note that \( g(\mathbf{z}) \) is bounded away from zero by~\ref{assump:density_assumption}, then it follows that
    \begin{equation*}
        \begin{aligned}
            \mathbb{E}\phi^{2}_{m} & \leq h^{2m}C \int_{\mathbbm{I}^{m}}\int_{\mathbbm{I}^{m}}A^{4}_{m}\left( \mathbf{z}_{1},
            \mathbf{z}_{2}\right)\,\mathrm{d}\mathbf{z}_{1}\,\mathrm{d}\mathbf{z}_{2} = O(1),
        \end{aligned}
    \end{equation*}
    because \( \mathbb{E}A^{2}_{m}\left( \mathbf{z}_{1}, \mathbf{z}_{2}\right) =O(h^{-m})\), Jensen's inequality and Cauchy-Schwarz inequality. Using the same way, one can also verify that \( \mathbb{E} \phi_{m1}^{2}( \mathbf{x}_{i;m})-\phi_{m0}^{2}=O(1)\), so \( h^{m}D_{m}(\mathbf{z}_{1},\mathbf{z}_{2}) \) satisfies the conditions in~\Cref{lemma:second_U_statistics}, immediately have the result~\eqref{eq:D_n_m}.
\end{proof}
\begin{lemma}\label{lemma:H_2n_m}
    Given \textcolor{darkblue}{Assumptions}~\ref{assump:kernel_assumption} and~\ref{assump:density_assumption}, if \( nh^{m}\rightarrow \infty \), \(
    h\rightarrow 0  \)
    and \( m<M \), then under \( \mathbb{H}_{0} \),
    \begin{equation}\label{eq:H_2n_m}
        \tilde{H}_{2}(m)=3\hat{H}_{2}(m) +O_{p}(n^{-3/2}m^{3/2}h^{-m} ),
    \end{equation}
    where \( \hat{H}_{2}(m) \) is defined in equation~\eqref{eq:H_hat_2nm}.
    \begin{equation}\label{eq:H_hat_2nm}
        \hat{H}_{2}(m)=\binom{n}{2}^{-1} \sum\nolimits_{j=2}^{n}\sum\nolimits_{i=1}^{j-1}H_{2m}\left( \mathbf{x}_{i;m},\mathbf{x}_{j;m} \right).
    \end{equation}
\end{lemma}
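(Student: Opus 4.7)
}
The plan is to perform a Hoeffding-type decomposition of the symmetric third-order kernel $\tilde{H}_{2m}(\mathbf{z}_1,\mathbf{z}_2,\mathbf{z}_3)$ and show that, after summation, the low-order projections reproduce exactly $3\hat H_{2}(m)$, while the completely degenerate third-order remainder has the claimed rate. First I would verify, under $\mathbb{H}_0$ and using the key identity $\int A_m(\mathbf{z}_1,\mathbf{z})g(\mathbf{z})\,d\mathbf{z}=0$ that follows directly from the definition~\eqref{eq:A_nm}, that both the unconditional mean $\theta=\mathbb{E}\tilde{H}_{2m}(\mathbf{Z}_1,\mathbf{Z}_2,\mathbf{Z}_3)$ and the first-order projection $\mathbb{E}[\tilde{H}_{2m}(\mathbf{z}_1,\mathbf{Z}_2,\mathbf{Z}_3)]$ vanish, because every one of the three summands of $\tilde{H}_{2m}$ carries an $A_m(\cdot,\mathbf{Z}_j)$ factor whose conditional expectation in the free argument is zero.

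Next I would compute the two-variable projection. Integrating out one argument, the first two terms of $\tilde{H}_{2m}$ die by the same vanishing property, while the third term produces exactly $\int A_m(\mathbf{z},\mathbf{z}_1)A_m(\mathbf{z},\mathbf{z}_2)g(\mathbf{z})\,d\mathbf{z}=H_{2m}(\mathbf{z}_1,\mathbf{z}_2)$. Symmetry of $\tilde H_{2m}$ (each of the three indices plays the role of the ``centre'') guarantees that whichever argument is marginalised one obtains the same $H_{2m}$ up to permutation of its arguments. The Hoeffding decomposition therefore collapses to
\begin{equation*}
\tilde{H}_{2m}(\mathbf{z}_1,\mathbf{z}_2,\mathbf{z}_3)=H_{2m}(\mathbf{z}_1,\mathbf{z}_2)+H_{2m}(\mathbf{z}_1,\mathbf{z}_3)+H_{2m}(\mathbf{z}_2,\mathbf{z}_3)+h_{3}(\mathbf{z}_1,\mathbf{z}_2,\mathbf{z}_3),
\end{equation*}
where $h_3$ is completely degenerate (its conditional expectation in any proper subset of arguments is zero).

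Summing this identity over all triples $i<j<k$ and dividing by $\binom{n}{3}$, a simple counting gives
$\binom{n}{3}^{-1}(n-2)\binom{n}{2}=3$, so the pairwise part contributes exactly $3\hat H_{2}(m)$. What remains is $R_n:=\binom{n}{3}^{-1}\sum_{i<j<k}h_{3}(\mathbf{x}_{i;m},\mathbf{x}_{j;m},\mathbf{x}_{k;m})$, a completely degenerate third-order $U$-statistic to which I would apply the third-order $U$-statistic lemma announced in Appendix~\ref{sec:lemmas_for_the_second_and_third_u_statistics}. By that lemma, $R_n=O_p\bigl(n^{-3/2}(\mathbb{E}h_{3}^{2})^{1/2}\bigr)$ up to the factors coming from the $\alpha$-mixing condition~\ref{cond:c11} and from the $m$-overlap between consecutive blocks $\mathbf{x}_{i;m}$.

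The remaining calculation, which I expect to be the real obstacle, is controlling $\mathbb{E}h_{3}^{2}$. Since $h_3$ is a bounded linear combination of $\tilde H_{2m}$ and $H_{2m}$ terms, it suffices to bound $\mathbb{E}\tilde H_{2m}^{2}$. Conditioning on the centre variable and using independence of the two free arguments (under $\mathbb{H}_0$, for non-overlapping blocks), each summand gives $\mathbb{E}\bigl[A_m^{2}(\mathbf{Z}_1,\mathbf{Z}_2)A_m^{2}(\mathbf{Z}_1,\mathbf{Z}_3)\bigr]=\mathbb{E}\bigl(\int A_m^{2}(\mathbf{Z}_1,\mathbf{z})g(\mathbf{z})\,d\mathbf{z}\bigr)^{2}=O(h^{-2m})$, by the usual kernel-moment computation $\int A_m^{2}(\mathbf{z}_1,\mathbf{z})g(\mathbf{z})\,d\mathbf{z}=O(h^{-m})$ under Assumptions~\ref{assump:kernel_assumption}--\ref{assump:density_assumption}. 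The overlapping-index contributions (indices within $m$ of each other) must be handled separately, splitting the triple sum as in the $T_1/T_{10}$ decomposition used earlier in the paper; the $m^{3/2}$ factor in the stated bound arises from the $O(m)$ such overlapping pairs at each of the three positions and from the $\alpha$-mixing covariance bound absorbing the remaining dependence. Putting the two pieces together yields $R_n=O_p(n^{-3/2}m^{3/2}h^{-m})$, which is the desired remainder rate.
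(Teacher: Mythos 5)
Your proposal is correct and follows essentially the same route as the paper: recognise $\tilde H_{2}(m)$ as a symmetric third-order \textit{U}-statistic, use $\int_{\mathbb{I}^{m}}A_{m}(\mathbf{z}_{1},\mathbf{z})g(\mathbf{z})\,\mathrm{d}\mathbf{z}=0$ to show the two-variable projection is exactly $H_{2m}$, bound the second moment of the kernel (the paper rescales by $h^{m}$ so that $c_{m}=O(1)$, whereas you carry $c_{m}=h^{-m}$ explicitly --- equivalent bookkeeping), and invoke Lemma~\ref{lemma:third_U_statistics} for the degenerate remainder. The only cosmetic difference is that you unfold the Hoeffding decomposition and triple-counting that the appendix lemma already packages, and your passing appeal to $\alpha$-mixing is unnecessary since that lemma handles the block overlap by direct index counting under the i.i.d.\ assumption.
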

\begin{proof}[Proof of~\Cref{lemma:H_2n_m}]
    Let \begin{equation*}
        \begin{aligned}
            \phi_{m}(\mathbf{z}_{1},\mathbf{z}_{2},\mathbf{z}_{3}) & = h^{m}\tilde{H}_{2m}(\mathbf{z}_{1},\mathbf{z}_{2},\mathbf{z}_{3}), \\
            & = h^{m}\left[ A_{m}(\mathbf{z}_{1},\mathbf{z}_{2})A_{m}(\mathbf{z}_{1},\mathbf{z}_{3}) + A_{m}(\mathbf{z}_{2},\mathbf{z}_{3})A_{m}(\mathbf{z}_{2},\mathbf{z}_{1})\right. \\
            & \left.\qquad+A_{m}(\mathbf{z}_{3},\mathbf{z}_{1})A_{m}(\mathbf{z}_{3},\mathbf{z}_{2}) \right], \\
            & \qquad
        \end{aligned}
    \end{equation*}
    then we have
    \begin{equation*}
        \hat{\phi}(m)=h^{m}\binom{n}{3}^{-1}\sum_{k=3}^{n}\sum_{j=2}^{k-1}\sum_{i=1}^{j-1}\tilde{H}_{2m}\left( \mathbf{x}_{k;m}, \mathbf{x}_{i;m},\mathbf{x}_{j;m} \right) =h^{m}\tilde{H}_{2}(m),
    \end{equation*}
    and \( \phi_{m2}(\mathbf{z}_{1},\mathbf{z}_{2})=h^{m}H_{2m}(\mathbf{z}_{1},\mathbf{z}_{2}) \)
    based on the fact \( \int_{\mathbbm{I}^{m}}A_{m}(\mathbf{z}_{1},\mathbf{z}_{2})g(\mathbf{z}_{2}) \,\mathrm{d}\mathbf{z}_{2}=0 \). Furthermore, we can easily verify that \( \mathbb{E} \phi_{m}^{2}( \mathbf{x}_{i;m}, \mathbf{x}_{j;m},\mathbf{x}_{k;m} )=O(1)\), then  by~\Cref{lemma:third_U_statistics}, we immediately obtain equation~\eqref{eq:H_2n_m} which completes the proof.
\end{proof}
\begin{lemma}\label{lemma:w_22_hat}
    Given \textcolor{darkblue}{Assumptions}~\ref{assump:kernel_assumption} and~\ref{assump:density_assumption}, if \( nh^{m}\rightarrow \infty \), \(
    h\rightarrow 0  \)
    and \( m<M \), then under \( \mathbb{H}_{0} \),
    \begin{equation*}
        \hat{W}_{22}(m)=\mathbb{E}B_{m}^{2}\left( \mathbf{x}_{1;m} \right)+O_{p}(n^{-1/2}h^{4}).
    \end{equation*}
\end{lemma}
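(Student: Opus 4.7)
The plan is to note that $\hat{W}_{22}(m)$ is, by the very definition of $B_{m}$, an empirical mean of the squared bias function:
\begin{equation*}
\hat{W}_{22}(m) = \frac{1}{n}\sum_{i=1}^{n} B_{m}^{2}(\mathbf{x}_{i;m}),
\end{equation*}
so the result reduces to a law-of-large-numbers-type statement with an explicit rate. First I would establish the uniform bound $\sup_{\mathbf{z}\in \mathbb{I}^{m}} |B_{m}(\mathbf{z})| = O(h^{2})$. This follows from a Taylor expansion of $g$ together with~\Cref{assump:density_assumption} (twice continuous differentiability and $g$ bounded away from zero), the symmetry and second-moment conditions in~\Cref{assump:kernel_assumption}, and the boundary-corrected construction of the Jackknife kernel $\mathcal{K}^{(m)}_{h}$, which kills the leading $O(h)$ boundary bias and leaves only the $O(h^{2})$ interior-type bias throughout the whole support $\mathbb{I}^{m}$; this is exactly the property that motivated the choice of the Jackknife kernel in the first place.

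Second, I would compute the mean: by strict stationarity of $\{X_{t}\}$ and hence of $\{\mathbf{X}_{i;m}\}$,
\begin{equation*}
\mathbb{E}\hat{W}_{22}(m) = \mathbb{E}B_{m}^{2}(\mathbf{x}_{1;m}),
\end{equation*}
so only the stochastic fluctuation $\hat{W}_{22}(m)-\mathbb{E}B_{m}^{2}(\mathbf{x}_{1;m})$ needs to be controlled. Writing $\xi_{i}=B_{m}^{2}(\mathbf{x}_{i;m})-\mathbb{E}B_{m}^{2}(\mathbf{x}_{1;m})$, the uniform bound from the previous step gives $|\xi_{i}|\le C h^{4}$ and $\mathrm{Var}(\xi_{i})=O(h^{8})$. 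Under the $\alpha$-mixing assumption~\ref{cond:c11}, summing the covariances $\mathrm{Cov}(\xi_{i},\xi_{j})$ via a standard $\alpha$-mixing covariance inequality yields
\begin{equation*}
\mathrm{Var}\left(\frac{1}{n}\sum_{i=1}^{n}\xi_{i}\right)=O\!\left(\frac{h^{8}}{n}\right),
\end{equation*}
because the geometric mixing rate makes the sum of mixing coefficients finite and the overlap blocks $|i-j|<m$ contribute only $O(m n^{-1} h^{8})=o(n^{-1}h^{8})$ under $m<M$. Applying Chebyshev's inequality then gives the deviation $\hat{W}_{22}(m)-\mathbb{E}B_{m}^{2}(\mathbf{x}_{1;m})=O_{p}(n^{-1/2}h^{4})$, which is exactly the claimed rate.

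The main obstacle is the first step, namely proving the uniform bound $\sup_{\mathbf{z}} |B_{m}(\mathbf{z})|=O(h^{2})$ with the correct constant near the boundary of $\mathbb{I}^{m}$. In the interior this is the textbook kernel-bias calculation, but inside the boundary strip one component or more of $\mathbf{z}$ lies in $[0,h)$ or $(1-h,1]$, where the kernel switches to the Jackknife form $k_{\rho}$. There one has to verify that the linear combination defining $k_{\rho}$ satisfies $\int k_{\rho}(u)\,du=1$ and $\int u\,k_{\rho}(u)\,du=0$, so that the first-order Taylor term in the expansion of $g$ against $\mathcal{K}^{(m)}_{h}$ vanishes and the Lipschitz condition on $\mathcal{G}^{(2)}$ delivers the clean $O(h^{2})$ remainder for every component simultaneously; the product structure of $\mathcal{K}^{(m)}_{h}$ and induction on the number of boundary-affected coordinates then give the global bound. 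Once this is in hand, everything else is routine.
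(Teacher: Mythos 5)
Your proof is correct and follows essentially the same route as the paper's: write $\hat{W}_{22}(m)$ as the empirical mean of $B_{m}^{2}(\mathbf{x}_{i;m})$, invoke the uniform bias bound $\sup_{\mathbf{z}}|B_{m}(\mathbf{z})|=O(h^{2})$ (the paper delegates this to Lemma~\ref{lemma:order_an_bn}, which you re-derive inline), and control the centred fluctuation by Chebyshev's inequality using an $O(h^{8}/n)$ variance bound. The only cosmetic difference is that you appeal to the $\alpha$-mixing condition~\ref{cond:c11} --- which is not among this lemma's hypotheses --- to sum the covariances, whereas the paper uses the fact that under $\mathbb{H}_{0}$ the non-overlapping blocks $\mathbf{x}_{i;m}$ and $\mathbf{x}_{j;m}$ with $|i-j|\geq m$ are exactly independent, so only the $O(nm)$ overlapping pairs contribute (each $O(h^{8})$), and boundedness of $m$ by $M$ already yields $\mathrm{Var}(n^{-1}\sum_{i}\xi_{i})=O(h^{8}/n)$ without any mixing inequality.
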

\begin{proof}[Proof of~\Cref{lemma:w_22_hat}]
    Firstly, \( \hat{W}_{22}(m) \) can be expressed as
    \begin{equation*}
        \begin{aligned}
            \hat{W}_{22}(m) & = \frac{1}{n}\sum_{i=1}^{n}B_{m}^{2}\left( \mathbf{x}_{i;m} \right), \\
            & = \mathbb{E}B_{m}^{2}\left( \mathbf{x}_{1;m} \right) + \frac{1}{n}\sum_{i=1}^{n}\left[ B_{m}^{2}\left( \mathbf{x}_{i;m} \right)- \mathbb{E}B_{m}^{2}\left( \mathbf{x}_{1;m} \right)\right].
        \end{aligned}
    \end{equation*}
    By~\Cref{lemma:order_an_bn}, we have \( \mathbb{E} B_{m}^{4}( \mathbf{x}_{i;m} )=O(h^{8})\). When \( i\geq m \), \( \mathbf{x}_{i;m} \) is independent of \( \mathbf{x}_{1;m} \) and \( m \) is bounded by \( M \), by Chebyshev's inequality, we immediately have \( \hat{W}_{22}(m)=\mathbb{E}B_{m}^{2}( \mathbf{x}_{1;m} )+O_{p}(n^{-1/2}h^{4}) \), this completes the proof.
\end{proof}
\begin{lemma}\label{lemma:w_23_hat}
    Given \textcolor{darkblue}{Assumptions}~\ref{assump:kernel_assumption} and~\ref{assump:density_assumption}, if \( nh^{m}\rightarrow \infty \), \(
    h\rightarrow 0  \)
    and \( m<M \), then under \( \mathbb{H}_{0} \),
    \begin{equation*}
        \hat{W}_{23}(m)=2\hat{C}(m)+O_{p}(n^{-1}mh^{2-m/2}),
    \end{equation*}
    where \( \hat{C}(m) \) is defined in equation~\eqref{eq:c_hat_nm}.
\end{lemma}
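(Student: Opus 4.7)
The plan is to recognize $\hat{W}_{23}(m)$ as a second-order $U$-statistic, compute its Hajek projection, identify that projection with $2\hat{C}(m)$ via Hoeffding's decomposition, and then control the completely degenerate remainder through its variance.

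Starting from the definitions of $A_m$ and $B_m$, one has
\begin{equation*}
\frac{\hat{g}(\mathbf{x}_{i;m})-\bar{g}(\mathbf{x}_{i;m})}{g(\mathbf{x}_{i;m})}=\frac{1}{n-1}\sum_{j\neq i}A_m(\mathbf{x}_{i;m},\mathbf{x}_{j;m}),\qquad \frac{\bar{g}(\mathbf{x}_{i;m})-g(\mathbf{x}_{i;m})}{g(\mathbf{x}_{i;m})}=B_m(\mathbf{x}_{i;m}).
\end{equation*}
Substituting into the definition of $\hat{W}_{23}(m)$ and symmetrizing the double sum over $i\neq j$ yields
\begin{equation*}
\hat{W}_{23}(m)=\binom{n}{2}^{-1}\sum_{1\le i<j\le n}\phi_m(\mathbf{x}_{i;m},\mathbf{x}_{j;m}),\quad \phi_m(\mathbf{z}_1,\mathbf{z}_2)\coloneqq B_m(\mathbf{z}_1)A_m(\mathbf{z}_1,\mathbf{z}_2)+B_m(\mathbf{z}_2)A_m(\mathbf{z}_2,\mathbf{z}_1),
\end{equation*}
a symmetric second-order $U$-statistic.

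Because $\int A_m(\mathbf{z}_1,\mathbf{z}_2)g(\mathbf{z}_2)\,d\mathbf{z}_2\equiv 0$ directly from the definition of $A_m$, the Hajek projection is
\begin{equation*}
\phi_{m,1}(\mathbf{z}_1)\coloneqq \mathbb{E}\phi_m(\mathbf{z}_1,\mathbf{z}_2)=\int A_m(\mathbf{z},\mathbf{z}_1)B_m(\mathbf{z})g(\mathbf{z})\,d\mathbf{z},
\end{equation*}
which is precisely the integrand in the definition of $\hat{C}(m)$; in particular $\mathbb{E}\phi_m=0$. Hoeffding's decomposition therefore gives
\begin{equation*}
\hat{W}_{23}(m)=\frac{2}{n}\sum_{i=1}^n \phi_{m,1}(\mathbf{x}_{i;m})+R_n(m)=2\hat{C}(m)+R_n(m),
\end{equation*}
where $R_n(m)=\binom{n}{2}^{-1}\sum_{i<j}\psi_m(\mathbf{x}_{i;m},\mathbf{x}_{j;m})$ and $\psi_m(\mathbf{z}_1,\mathbf{z}_2)\coloneqq \phi_m(\mathbf{z}_1,\mathbf{z}_2)-\phi_{m,1}(\mathbf{z}_1)-\phi_{m,1}(\mathbf{z}_2)$ is completely degenerate: $\mathbb{E}[\psi_m(\mathbf{z}_1,\mathbf{z}_2)\mid\mathbf{z}_1]=0$.

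Applying the second-order $U$-statistic lemma in Appendix~\ref{sec:lemmas_for_the_second_and_third_u_statistics} gives $|R_n(m)|=O_p(n^{-1}\sqrt{\mathbb{E}\psi_m^2})$, and a Jensen-type bound yields $\mathbb{E}\psi_m^2\le C\,\mathbb{E}\phi_m^2$. Under Assumptions~\ref{assump:kernel_assumption}--\ref{assump:density_assumption}, a Taylor expansion of $\bar g-g$ delivers the uniform bias bound $\sup_{\mathbf{z}_1}|B_m(\mathbf{z}_1)|=O(mh^2)$, while a standard change of variables gives $\sup_{\mathbf{z}_1}\int A_m^2(\mathbf{z}_1,\mathbf{z}_2)g(\mathbf{z}_2)\,d\mathbf{z}_2=O(h^{-m})$. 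Combining these in the expectation, $\mathbb{E}\phi_m^2=O(m^2 h^{4-m})$, so $\sqrt{\mathbb{E}\psi_m^2}=O(mh^{2-m/2})$ and $R_n(m)=O_p(n^{-1}mh^{2-m/2})$, as required. The principal obstacle is ensuring that the degenerate second-order $U$-statistic variance bound $\mathbb{E}R_n^2=O(n^{-2}\mathbb{E}\psi_m^2)$ survives the $\alpha$-mixing dependence in Condition~\ref{cond:c11} rather than requiring independence; this is precisely the content of the referenced lemma, and one must verify that the geometric mixing decay dominates the covariances between pairs $(\mathbf{x}_{i;m},\mathbf{x}_{j;m})$ and $(\mathbf{x}_{k;m},\mathbf{x}_{l;m})$ with overlapping indices. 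The Jackknife boundary correction alters $\int A_m^2 g$ only by a bounded multiplicative constant and so does not affect the $h^{-m}$ rate.
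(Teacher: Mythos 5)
Your proposal is correct and follows essentially the same route as the paper: symmetrize $\hat{W}_{23}(m)$ into the second-order $U$-statistic with kernel $\tilde{C}_m(\mathbf{z}_1,\mathbf{z}_2)=A_m(\mathbf{z}_1,\mathbf{z}_2)B_m(\mathbf{z}_1)+A_m(\mathbf{z}_2,\mathbf{z}_1)B_m(\mathbf{z}_2)$, observe that $\phi_{m0}=0$ and that the projection $\phi_{m1}$ is exactly the integrand of $\hat{C}(m)$, and invoke Lemma~\ref{lemma:second_U_statistics} with $c_m=h^{2-m/2}$ (coming from $\sup|B_m|=O(h^2)$ and $\mathbb{E}A_m^2=O(h^{-m})$ as in Lemma~\ref{lemma:order_an_bn}). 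Your bookkeeping places an extra factor of $m$ inside $c_m$ rather than taking it from the lemma's $O_p(mn^{-1}c_m)$ remainder, but since $m<M$ is bounded this is immaterial, and your concern about dependence is exactly what the overlapping-index argument in the referenced lemma (applied under the serial-independence null $\mathbb{H}_0$) resolves.
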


\begin{proof}[Proof of~\Cref{lemma:w_23_hat}]
    Define a new symmetric function
    \begin{equation*}
        \tilde{C}_{m}(\mathbf{z}_{1},\mathbf{z}_{2})=A_{m}(\mathbf{z}_{1},\mathbf{z}_{2})B_{m}(\mathbf{z}_{1})+A_{m}(\mathbf{z}_{2},\mathbf{z}_{1})B_{m}(\mathbf{z}_{2}),
    \end{equation*}
    then
    \begin{align*}
        \hat{W}_{23}(m) & = 2\frac{1}{n}\sum_{i=1}^{n}\left[ \frac{\hat{g}\left( \mathbf{x}_{i;m} \right)-\bar{g}\left( \mathbf{x}_{i;m} \right)}{g\left( \mathbf{x}_{i;m} \right)} \right]\left[ \frac{\bar{g}\left( \mathbf{x}_{i;m} \right)-g\left( \mathbf{x}_{i;m} \right)}{g\left( \mathbf{x}_{i;m} \right)} \right], \\
        & =  \binom{n}{2}^{-1} \sum_{j=2}^{n}\sum_{i=1}^{j-1}\tilde{C}_{m}\left( \mathbf{x}_{i;m}, \mathbf{x}_{j;m} \right).
    \end{align*}
    Let \( \phi_{m}(\mathbf{z}_{1},\mathbf{z}_{2})=\tilde{C}_{m}\left( \mathbf{z}_{1}, \mathbf{z}_{2} \right)\), then \( \phi_{m0}=0\) and
    \begin{align*}
        \phi_{m1}(\mathbf{z}) & = \int_{\mathbbm{I}^{m}} \phi_{m}(\mathbf{z},\mathbf{z}_{2})g(\mathbf{z}_{2})\,\mathrm{d}\mathbf{z}_{2}=\int_{\mathbbm{I}^{m}} \phi_{m}(\mathbf{z}_{2},\mathbf{z})g(\mathbf{z}_{2})\,\mathrm{d}\mathbf{z}_{2}, \\
        & = \int_{\mathbbm{I}^{m}} A_{m}(\mathbf{z}_{2},\mathbf{z})B_{m}(\mathbf{z}_{2})g(\mathbf{z}_{2})\,\mathrm{d}\mathbf{z}_{2}.
    \end{align*}
    As discussion in~\Cref{lemma:D_n_m}, one can verify that \( \mathbb{E}[ \phi^{2}_{m} ( \mathbf{x}_{i;m} ,\mathbf{x}_{j;m}) ]=O(c_{m}^{2})  \), where \( c_{m}=h^{2-m/2} \). By~\Cref{lemma:second_U_statistics}, we have
    \begin{align*}
        \hat{W}_{23}(m) & = \hat{\phi}(m)= \frac{2}{n}\sum_{i=1}^{n} \phi_{m1}\left( \mathbf{x}_{i;m} \right)  + O_{p}(n^{-1}mh^{2-m/2}), \\
        & =2\hat{C}(m)+O_{p}(n^{-1}mh^{2-m/2}),
    \end{align*}
    which completes the proof.
\end{proof}
\begin{lemma}\label{lemma:A_2_nm}
    Given \textcolor{darkblue}{Assumptions}~\ref{assump:kernel_assumption} and~\ref{assump:density_assumption}, under \( \mathbb{H}_{0} \), we have
    \begin{equation*}
        (n-1)^{-1}\left[ \mathbb{E}A^{2}_{(m+1)}(\mathbf{z}_{1},\mathbf{z}_{2})-\mathbb{E}A^{2}_{m}(\mathbf{y}_{1},\mathbf{y}_{2}) -\mathbb{E}a^{2}(z_{1m},z_{2m})\right]=d_{0}+O(n^{-1}h^{-m}),
    \end{equation*}
    where \( \kappa = \int_{-1}^{1} K^{2}(u)\,\mathrm{d}u \) and \( d_{0}={(n-1)}^{-1}\kappa^{m+1}h^{-(m+1)} \).
\end{lemma}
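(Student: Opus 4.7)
The plan is to exploit the fact that under $\mathbb{H}_0$ the joint density of $m+1$ consecutive variables factorises as $f(\mathbf{z}) = g(\mathbf{y})\,g_1(z_m)$, so that the multiplicative kernel satisfies $\mathcal{K}_h^{(m+1)}(\mathbf{z}_1, \mathbf{z}_2) = \mathcal{K}_h^{(m)}(\mathbf{y}_1, \mathbf{y}_2)\, K_h^{J}(z_{1m}-z_{2m})$ and $\bar{f}(\mathbf{z}_1) = \bar{g}(\mathbf{y}_1)\,\bar{g}_1(z_{1m})$. Writing $r_1=\bar{g}(\mathbf{y}_1)/g(\mathbf{y}_1)$ and $r_2=\bar{g}_1(z_{1m})/g_1(z_{1m})$, the algebraic identity $ab-cd=(a-c)(b-d)+d(a-c)+c(b-d)$ applied with $a=\mathcal{K}_h^{(m)}$, $b=K_h^{J}$, $c=\bar{g}$, $d=\bar{g}_1$ yields the key decomposition
\begin{equation*}
A_{(m+1)}(\mathbf{z}_1, \mathbf{z}_2) = A_m(\mathbf{y}_1,\mathbf{y}_2)\,a(z_{1m},z_{2m}) + r_2\,A_m(\mathbf{y}_1, \mathbf{y}_2) + r_1\,a(z_{1m}, z_{2m}).
\end{equation*}

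Squaring this and taking expectation under $f(\mathbf{z}_1)f(\mathbf{z}_2)$, all six cross products vanish by the zero-mean properties $\mathbb{E}_{\mathbf{y}_2}[A_m\mid\mathbf{y}_1]=0$ and $\mathbb{E}_{z_{2m}}[a\mid z_{1m}]=0$ combined with the $\mathbb{H}_0$-independence of the $\mathbf{y}$-block and $z_m$-block; only the three pure-square pieces survive, giving
\begin{equation*}
\mathbb{E}A^2_{(m+1)} = \mathbb{E}A_m^2 \cdot \mathbb{E}a^2 + \mathbb{E}A_m^2 \cdot \mathbb{E}r_2^2 + \mathbb{E}a^2 \cdot \mathbb{E}r_1^2.
\end{equation*}
Next I would pin down the individual factors via the change of variable $\mathbf{u}=(\mathbf{z}_2-\mathbf{z}_1)/h$ and the Taylor expansion of $g(\mathbf{z}_1+h\mathbf{u})$ permitted by \textcolor{darkblue}{Assumption}~\ref{assump:density_assumption}, obtaining
\begin{equation*}
\mathbb{E}A_m^2 = \kappa^{m}h^{-m}(1+o(1)) = O(h^{-m}), \qquad \mathbb{E}a^2 = \kappa h^{-1}(1+o(1)) = O(h^{-1}),
\end{equation*}
where the leading $\kappa^m g(\mathbf{z}_1)h^{-m}$ in $\int \mathcal{K}^2 g\,d\mathbf{z}_2$ integrates against $1/g(\mathbf{z}_1)$ to give $\kappa^m h^{-m}$, while the centring term $\bar{g}(\mathbf{z}_1)^2/g(\mathbf{z}_1)$ contributes only $O(1)$. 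The Jackknife construction keeps the bias $B_m$ of order $h^2$ uniformly on $\mathbb{I}^m$, so $r_j=1+B_j$ gives $\mathbb{E}r_j^2 = 1 + O(h^2)$. Substituting,
\begin{equation*}
\mathbb{E}A^2_{(m+1)} - \mathbb{E}A_m^2 - \mathbb{E}a^2 = \mathbb{E}A_m^2\,\mathbb{E}a^2 + O(h^2)\bigl(\mathbb{E}A_m^2 + \mathbb{E}a^2\bigr) = \kappa^{m+1}h^{-(m+1)} + O(h^{-m}),
\end{equation*}
since $\kappa^{m}h^{-m}$, $\kappa h^{-1}$, and $O(h^{2-m})$ are all $O(h^{-m})$ for $m\ge 1$. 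Dividing by $n-1$ delivers $d_0 + O(n^{-1}h^{-m})$.

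The main technical obstacle will be the boundary analysis underlying the two clean orders $\mathbb{E}A_m^2 = \kappa^m h^{-m}+O(1)$ and $\mathbb{E}r_j^2 = 1+O(h^2)$, since on the boundary layers $[0,h)$ and $(1-h,1]$ the interior kernel is replaced by a linear combination of two rescaled kernels with $\omega_0(\rho)\neq 1$. Here I would invoke $\int_{-1}^{\rho} k_{\rho}(u)\,du = 1$ and $\int_{-1}^{\rho} u\,k_{\rho}(u)\,du = 0$ from \citet{johnBoundaryModificationKernel1984} to verify that the boundary contributions to $\int \mathcal{K}_h^{(m)}(\mathbf{z}_1,\mathbf{z}_2)^2 g(\mathbf{z}_2)\,d\mathbf{z}_2$ alter the interior value $\kappa^m g(\mathbf{z}_1)h^{-m}$ only on a set of $\mathbf{z}_1$-measure $O(mh)$, producing at worst an $O(h^{-m+1})$ correction after integration—harmlessly absorbed into $O(h^{-m})$—and to verify the standard bias bound $B_j(\mathbf{z})=O(h^2)$ uniformly in $\mathbf{z}$ using the Lipschitz bound on $\mathcal{G}^{(2)}$. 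Once these uniform controls are in hand, the factorisation-plus-Taylor argument closes the lemma cleanly.
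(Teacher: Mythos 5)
Your proposal is correct and follows essentially the same route as the paper: the multiplicative decomposition of \(A_{(m+1)}\) under \(\mathbb{H}_{0}\) that you derive is exactly equation~\eqref{eq:A_expand} established in \textcolor{darkblue}{Lemma}~\ref{lemma:order_an_bn}, and the paper's proof of \textcolor{darkblue}{Lemma}~\ref{lemma:A_2_nm} likewise reduces to that identity together with \(\mathbb{E}a^{2}(z_{1m},z_{2m})=\kappa h^{-1}+\kappa_{1}+O(h)\). One minor imprecision: iterating the recursion gives \(\mathbb{E}A^{2}_{m}=\kappa^{m}h^{-m}+O(h^{-m+1})\) (not \(+O(1)\) as in your closing paragraph, nor merely \((1+o(1))\) as in your intermediate display), but that weaker remainder is exactly what your final computation requires, so the argument closes.
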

\begin{proof}[Proof of~\Cref{lemma:A_2_nm}]
    Let \( \kappa_{1}= 2\int_{0}^{1}\int_{-1}^{\rho}k_{\rho}^{2}(u)\,\mathrm{d}u\,\mathrm{d}\rho-2\kappa -1\), then \( \mathbb{E}a^{2}(z_{1m},z_{2m})=\kappa h^{-1}+\kappa_{1}+O(h) \), by equation~\eqref{eq:A_expand} and \( \mathbb{H}_{0} \), we immediately obtain the desired result.
\end{proof}
\begin{lemma}\label{lemma:b_part}
    Given \( \mathbb{H}_{0} \) and \( 1\leq m< M \),
    \begin{equation*}
        \begin{aligned}
            & 2\left[ \hat{B}(m+1)- \hat{B}(m)-\hat{b}_{m}\right] \\
            & \quad- \left[ \mathbb{E}B^{2}_{(m+1)}(\mathbf{z}_{1})-\mathbb{E}B^{2}_{m}(\mathbf{y}_{1}) -\mathbb{E}b^{2}(z_{1m})\right] \\
            & =O(h^{6} )+O_{p}(n^{-1/2}h^{4}).
        \end{aligned}
    \end{equation*}
\end{lemma}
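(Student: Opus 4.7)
The plan is to exploit the multiplicative structure of the Jackknife kernel together with the factorisation $f(\mathbf{z}_1)=g(\mathbf{y}_1)g_1(z_{1m})$ that holds under $\mathbb{H}_0$. Writing $\mathbf{z}_1=(\mathbf{y}_1,z_{1m})$ and setting
\[
P=\frac{\int_{\mathbb{I}^m}\mathcal{K}^{(m)}_h(\mathbf{y}_1,\mathbf{y})g(\mathbf{y})\,\mathrm{d}\mathbf{y}}{g(\mathbf{y}_1)},\qquad Q=\frac{\int_{0}^{1} K^J_h(z_{1m},z)g_1(z)\,\mathrm{d}z}{g_1(z_{1m})},
\]
the product form of $\mathcal{K}^{(m+1)}_h$ immediately yields $B_m(\mathbf{y}_1)=P-1$, $b(z_{1m})=Q-1$, and $B_{(m+1)}(\mathbf{z}_1)=PQ-1$. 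A direct algebraic check then delivers the pointwise identity
\[
B_{(m+1)}(\mathbf{z}_1)-B_m(\mathbf{y}_1)-b(z_{1m})=B_m(\mathbf{y}_1)\,b(z_{1m}),
\]
which collapses the empirical quantity to
\[
\hat{B}(m+1)-\hat{B}(m)-\hat{b}_m=\frac{1}{n}\sum_{i=1}^{n}B_m(\mathbf{x}_{i;m})\,b(x_{i+m}).
\]

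Next I would invoke the standard Jackknife-kernel bias expansion under~\Cref{assump:density_assumption}: both $B_m$ and $b$ are $O(h^2)$ uniformly, so $\mathbb{E}B_m=O(h^2)$, $\mathbb{E}B_m^2=O(h^4)$, and analogously for $b$. Setting $u=B_m(\mathbf{y}_1)$ and $v=b(z_{1m})$, the same identity expands $(u+v+uv)^2$ to give
\[
\mathbb{E}B^2_{(m+1)}(\mathbf{z}_1)-\mathbb{E}B^2_m(\mathbf{y}_1)-\mathbb{E}b^2(z_{1m})=\mathbb{E}\bigl[2uv+2u^2v+2uv^2+u^2v^2\bigr].
\]
Under $\mathbb{H}_0$ the arguments $\mathbf{y}_1$ and $z_{1m}$ are independent, so every expectation factors; the leading piece $2\,\mathbb{E}u\cdot\mathbb{E}v$ has order $h^4$, while the remaining three contribute $O(h^6)+O(h^6)+O(h^8)=O(h^6)$. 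For the sample mean, the same factorisation with stationarity gives $\mathbb{E}[u_i v_i]=\mathbb{E}u\cdot\mathbb{E}v$ and $\mathrm{Var}(u_1 v_1)=\mathbb{E}u^2\mathbb{E}v^2-(\mathbb{E}u\,\mathbb{E}v)^2=O(h^8)$, so a Chebyshev-type bound yields $n^{-1}\sum_i u_i v_i=\mathbb{E}u\cdot\mathbb{E}v+O_p(n^{-1/2}h^4)$. Doubling and subtracting the two displays cancels the shared $O(h^4)$ leading term, leaving precisely $O(h^6)+O_p(n^{-1/2}h^4)$.

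The main obstacle will be justifying the variance bound $\mathrm{Var}(n^{-1}\sum_i u_i v_i)=O(n^{-1}h^8)$ under weak dependence rather than independence. For index gaps $|i-j|<m$ the blocks $\mathbf{x}_{i;m}$ and $\mathbf{x}_{j;m}$ overlap, so $u_iv_i$ and $u_jv_j$ need not factor; I would dominate each of these $O(m)$ covariances by $\mathbb{E}u^2\mathbb{E}v^2=O(h^8)$ via Cauchy--Schwarz, using $m\le M$ to keep the count bounded. For $|i-j|\ge m$ the blocks are disjoint in index but the series itself is only $\alpha$-mixing; I would control these covariances through a Davydov-type inequality driven by the geometric mixing rate in condition~\ref{cond:c11}, which makes the tail sum summable and of strictly smaller order than the diagonal $O(n^{-1}h^8)$ contribution.
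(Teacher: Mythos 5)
Your proposal is correct and follows essentially the same route as the paper's proof: your pointwise identity $B_{(m+1)}(\mathbf{z}_1)-B_m(\mathbf{y}_1)-b(z_{1m})=B_m(\mathbf{y}_1)\,b(z_{1m})$ is exactly equation~\eqref{eq:B_expand}, and the subsequent steps --- writing the empirical difference as $\mathbb{E}[B_m b]$ plus an $O_{p}(n^{-1/2}h^{4})$ fluctuation via $\mathbb{E}[(B_m b)^2]=O(h^8)$ and Chebyshev, expanding the squared biases so that the shared $2\mathbb{E}[B_m b]$ term cancels and the cross terms contribute $O(h^6)$ --- reproduce the paper's argument. Your extra care with the $O(m)$ overlapping-block covariances and the mixing tail is a sound, slightly more explicit justification of the Chebyshev step that the paper simply asserts (under $\mathbb{H}_0$ the blocks with index gap at least $m$ are in fact independent, so the Davydov step is not needed but does no harm).
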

\begin{proof}[Proof of~\Cref{lemma:b_part}]
    By equation~\eqref{eq:B_expand}, \( [ \hat{B}(m+1)- \hat{B}(m)-\hat{b}_{m}] \) can be expressed as
    \begin{align*}
        & \left[ \hat{B}(m+1)-
            \hat{B}(m)-\hat{b}_{m}\right] \\
        & \quad = \mathbb{E}B_{m}(\mathbf{y}_{1})b(z_{1m})+\frac{1}{n}\sum_{i=1}^{n}\left[ B_{m}\left(\mathbf{x}_{i;m}\right)b(x_{i+m})-\mathbb{E}B_{m}(\mathbf{y}_{1})b(z_{1m}) \right],
    \end{align*}
    According to Cauchy-Schwarz inequality, \( \mathbb{E}{ [ B_{m}\left( \mathbf{x}_{i;m}\right)b(x_{i+m})] }^{2}=O(h^{8}) \), so the second term
    \begin{equation*}
        n^{-1}\sum_{i=1}^{n} \left[ B_{m}\left(\mathbf{x}_{i;m}\right)b(x_{i+m})-\mathbb{E}B_{m}(\mathbf{y}_{1})b(z_{1m})\right]=O_{p}(n^{-1/2}h^{4} ),
    \end{equation*}
    by Markov inequality and Chebyshev inequality. Furthermore, by equation~\eqref{eq:B_expand} and~\Cref{lemma:order_an_bn}, after simple calculations, we have
    \begin{align*}
        & \mathbb{E}B^{2}_{(m+1)}(\mathbf{z}_{1})-\mathbb{E}B^{2}_{m}(\mathbf{y}_{1}) -\mathbb{E}b^{2}(z_{1m}) \\
        & \quad =2\mathbb{E}B_{m}(\mathbf{y}_{1})b(z_{1m}) +2\mathbb{E}B^{2}_{m}(\mathbf{y}_{1})b(z_{1m})+\mathbb{E}B_{m}(\mathbf{y}_{1})b^{2}(z_{1m})+O(h^{8}), \\
        & \quad=2\mathbb{E}B_{m}(\mathbf{y}_{1})b(z_{1m})+O(h^{6}),
    \end{align*}
    which immediately completes the proof.
\end{proof}
\begin{lemma}\label{lemma:c_part}
    Given \( \mathbb{H}_{0} \) and \( 1\leq m< M \),
    \begin{equation*}
        \hat{C}(m+1)- \hat{C}(m)-\hat{c}_{m}=O_{p}(n^{-1/2}h^{4}),
    \end{equation*}
    where \( \hat{C}(m+1)= n^{-1}\sum_{i=1}^{n}\breve{C}_{m+1}(
    \mathbf{x}_{i;m+1} )\), \( \hat{C}(m)=
    n^{-1}\sum_{i=1}^{n}\breve{C}_{m}( \mathbf{x}_{i;m}
    )\), \( \hat{c}_{m}= n^{-1}\sum_{i=1}^{n}\breve{c}(x_{i+m})\), \( \mathbf{z}_{1}={(z_{10},\ldots,z_{1(m-1)},z_{1m})}^{\top}={(
    \mathbf{y}_{1}^{\top},z_{1m} )}^{\top} \),
    \begin{align*}
        \breve{C}_{m+1}\left(
        \mathbf{x}_{i;m+1} \right) & = \int_{\mathbf{z}_{1}\in \mathbb{I}^{m+1}}A_{(m+1)}\left( \mathbf{z}_{1} ,\mathbf{x}_{i;m+1}\right) B_{(m+1)}(\mathbf{z}_{1})f(\mathbf{z}_{1})\,\mathrm{d}\mathbf{z}_{1}, \\
        \breve{C}_{m}\left( \mathbf{x}_{i;m}
        \right) & = \int_{\mathbf{y}_{1}\in \mathbb{I}^{m}}A_{m}\left( \mathbf{y}_{1} ,\mathbf{x}_{i;m}\right) B_{m}(\mathbf{y}_{1})g(\mathbf{y}_{1})\,\mathrm{d}\mathbf{y}_{1}, \\
        \breve{c}(x_{i+m}) & =\int_{0}^{1}a\left( z_{1m} ,x_{i+m}\right) b(z_{1m})g_{1}(z_{1m})\,\mathrm{d}z_{1m}.
    \end{align*}
\end{lemma}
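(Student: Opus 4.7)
The plan is to exploit the product structure of the multivariate Jackknife kernel together with the factorization $f(\mathbf{z}_1)=g(\mathbf{y}_1)g_1(z_{1m})$ that holds under $\mathbb{H}_0$. Writing $\mathbf{z}_1=(\mathbf{y}_1^{\top},z_{1m})^{\top}$ and $\mathbf{x}_{i;m+1}=(\mathbf{x}_{i;m}^{\top},x_{i+m})^{\top}$, the identity $\mathcal{K}_h^{(m+1)}=\mathcal{K}_h^{(m)}\cdot K_h^{J}$ combined with $f=g\,g_1$ lets me express
\begin{equation*}
\frac{\mathcal{K}_h^{(m+1)}(\mathbf{z}_1,\mathbf{x}_{i;m+1})}{f(\mathbf{z}_1)}=\bigl[A_m(\mathbf{y}_1,\mathbf{x}_{i;m})+1+B_m(\mathbf{y}_1)\bigr]\bigl[a(z_{1m},x_{i+m})+1+b(z_{1m})\bigr],
\end{equation*}
and similarly $\int\mathcal{K}_h^{(m+1)}(\mathbf{z}_1,\mathbf{z})f(\mathbf{z})d\mathbf{z}/f(\mathbf{z}_1)=(1+B_m(\mathbf{y}_1))(1+b(z_{1m}))$. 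Subtracting and rearranging gives explicit polynomial expansions of $A_{(m+1)}$ and $B_{(m+1)}=B_m+b+B_m b$ in terms of the lower-dimensional objects $A_m,a,B_m,b$.

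Next I would substitute these expansions into the product $A_{(m+1)}(\mathbf{z}_1,\mathbf{x}_{i;m+1})B_{(m+1)}(\mathbf{z}_1)$, integrate against $g(\mathbf{y}_1)g_1(z_{1m})$, and use the two key orthogonalities $\int A_m(\mathbf{y},\mathbf{x}_{i;m})g(\mathbf{y})d\mathbf{y}=0$ and $\int a(z,x_{i+m})g_1(z)dz=0$ (which hold for interior arguments by the same argument used for $\hat{\Gamma}(m)$ in~\Cref{lemma:Gamma_m}). After this term-by-term integration the leading pieces collapse to exactly $\breve{C}_m(\mathbf{x}_{i;m})+\breve{c}(x_{i+m})$, leaving the identity
\begin{equation*}
\breve{C}_{m+1}(\mathbf{x}_{i;m+1})-\breve{C}_m(\mathbf{x}_{i;m})-\breve{c}(x_{i+m})=\breve{C}_m(\mathbf{x}_{i;m})\,\breve{c}(x_{i+m})+\alpha_n\,\breve{C}_m(\mathbf{x}_{i;m})+\beta_n\,\breve{c}(x_{i+m}),
\end{equation*}
where $\alpha_n=\int(2b+b^{2})g_1\,dz_{1m}$ and $\beta_n=\int(2B_m+B_m^{2})g\,d\mathbf{y}_1$ are deterministic constants of order $O(h^{2})$ by~\Cref{lemma:order_an_bn}.

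Averaging this identity over $i$ and bounding each of the three contributions then yields the required rate. I would first verify that $\breve{C}_m(\mathbf{x})$ and $\breve{c}(x)$ each have mean zero and are uniformly $O(h^{2})$ pointwise: writing $\breve{C}_m(\mathbf{x})=\int\mathcal{K}_h^{(m)}(\mathbf{y}-\mathbf{x})B_m(\mathbf{y})d\mathbf{y}-\int\bar{\mathcal{K}}(\mathbf{y})B_m(\mathbf{y})d\mathbf{y}$ and using $B_m=O(h^{2})$ gives the pointwise bound, while taking expectation and interchanging integrals yields $\mathbb{E}\breve{C}_m(\mathbf{x}_{i;m})=0$ (and similarly for $\breve{c}$). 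Combined with Condition~\ref{cond:c11}, Davydov's covariance inequality for $\alpha$-mixing then gives $n^{-1}\sum_i\breve{C}_m(\mathbf{x}_{i;m})=O_p(n^{-1/2}h^{2})$ and analogously for $\breve{c}$. The two penalty terms are therefore $O(h^{2})\cdot O_p(n^{-1/2}h^{2})=O_p(n^{-1/2}h^{4})$, and the cross term $n^{-1}\sum_i \breve{C}_m(\mathbf{x}_{i;m})\breve{c}(x_{i+m})$ has mean zero by the independence $\mathbf{x}_{i;m}\perp x_{i+m}$ under $\mathbb{H}_0$ and summand-variance $\mathbb{E}[\breve{C}_m^{2}]\mathbb{E}[\breve{c}^{2}]=O(h^{8})$, giving $O_p(n^{-1/2}h^{4})$ as well.

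The main obstacle I anticipate is not the algebraic expansion, which is tedious but mechanical, but rather ensuring that the orthogonality relations and the pointwise $O(h^{2})$ bounds on $\breve{C}_m,\breve{c}$ hold uniformly up to the boundary of $\mathbb{I}$. The Jackknife construction of $K_h^{J}$ is precisely what forces the mean-one property of the kernel across the boundary (the step carried out in~\Cref{lemma:Gamma_m}), and this has to be invoked carefully in higher dimensions so that the ``cancellation of $1$'' used above is valid. A secondary care point is the passage from i.i.d.\ variance bounds to the $\alpha$-mixing setting; here I would appeal to the geometric mixing rate of Condition~\ref{cond:c11} together with Davydov's inequality to conclude that the covariance sum is $O(1)$, so the CLT-type rate $n^{-1/2}$ is preserved despite serial dependence.
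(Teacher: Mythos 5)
Your proposal is correct and follows essentially the same route as the paper's proof: under \( \mathbb{H}_{0} \) you use the product kernel together with \( f=g\,g_{1} \) to expand \( A_{(m+1)} \) and \( B_{(m+1)}=B_{m}+b+B_{m}b \) in terms of \( A_{m},a,B_{m},b \), invoke the Jackknife kernel's unit-integral property (the \Cref{lemma:Gamma_m} argument) to annihilate the boundary terms, and reduce the difference to \( \breve{C}_{m}\breve{c}+O(h^{2})\breve{C}_{m}+O(h^{2})\breve{c} \) --- which is exactly the paper's \( \breve{C}_{1}\breve{c}_{1}-\breve{C}_{2}\breve{c}_{2} \) regrouped --- before concluding via a variance bound with summand second moment \( O(h^{8}) \) from \Cref{lemma:order_an_bn}. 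The only cosmetic difference is that you appeal to Davydov's inequality under Condition~\ref{cond:c11} to get the \( n^{-1/2} \) rate, whereas the paper uses Chebyshev's inequality directly.
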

\begin{proof}[Proof of~\Cref{lemma:c_part}]
    Let
    \begin{align*}
        \bar{K}^{J}_{h}(z_{1m}) & = \int_{0}^{1}
        K_{h}^{J}(z_{1m},z)g_{1}(z)\,\mathrm{d}z, \\
        \bar{\mathcal{K}}^{(m)}_{h}(\mathbf{y}_{1}) & =\int_{\mathbbm{I}^{m}}
        \mathcal{K}_{h}^{(m)}(\mathbf{y}_{1},\mathbf{y})g(\mathbf{y})\,\mathrm{d}\mathbf{y}, \\
        \bar{\mathcal{K}}^{(m+1)}_{h}(\mathbf{z}_{1}) & =\int_{\mathbbm{I}^{m+1}}
        \mathcal{K}_{h}^{(m+1)}(\mathbf{z}_{1},\mathbf{z})f(\mathbf{z})\,\mathrm{d}\mathbf{z},
    \end{align*}
    and \( \psi_{1}(z_{1m})= \bar{K}_{h}^{J}(z_{1m})/g_{1}(x_{i+m})\), \( \psi_{1}(z_{1m},x_{i+m})= K_{h}^{J}(z_{1m},x_{i+m})/g_{1}(x_{i+m})\),\\ \( \psi_{m}( \mathbf{y}_{1} ) =\bar{\mathcal{K}}^{(m)}_{h}(\mathbf{y}_{1})/g(\mathbf{y}_{1}) \),\( \psi_{m}( \mathbf{y}_{1}, \mathbf{x}_{i;m} ) =\mathcal{K}_{h}^{(m)}(\mathbf{y}_{1},\mathbf{x}_{i;m})/g(\mathbf{y}_{1}) \). Given the definition of multivariate kernel and \( \mathbb{H}_{0} \), we obtain
    \begin{align*}
        a\left( z_{1m} ,x_{i+m}\right) & = \psi_{1}(z_{1m},x_{i+m})-\psi_{1}(z_{1m}), \\
        A_{m}\left( \mathbf{y}_{1} ,\mathbf{x}_{i;m}\right) & = \psi_{m}\left( \mathbf{y}_{1}, \mathbf{x}_{i;m} \right) -\psi_{m}\left( \mathbf{y}_{1} \right), \\
        A_{(m+1)}\left( \mathbf{z}_{1} ,\mathbf{x}_{i;m+1}\right) & =\psi_{m}\left( \mathbf{y}_{1}, \mathbf{x}_{i;m} \right)\psi_{1}(z_{1m},x_{i+m})-\psi_{m}\left( \mathbf{y}_{1} \right)\psi_{1}(z_{1m}).
    \end{align*}
    Using equation~\eqref{eq:B_expand} and \( f(\mathbf{z}_{1})=g(\mathbf{y}_{1})g_{1}(z_{1m}) \), we can separately  write \( \breve{c}(x_{i+m}) \), \( \breve{C}_{m} ( \mathbf{x}_{i;m} )\) and \( \breve{C}_{m+1} ( \mathbf{x}_{i;m+1})\) as
    \begin{equation*}
        \begin{aligned}
            \breve{c}(x_{i+m}) & = \int_{0}^{1}\psi_{1}\left( z_{1m} ,x_{i+m}\right) b(z_{1m})g_{1}(z_{1m})\,\mathrm{d}z_{1m} \\
            & \qquad-\int_{0}^{1}\psi_{1}\left( z_{1m}\right) b(z_{1m})g_{1}(z_{1m})\,\mathrm{d}z_{1m}, \\
            & = \breve{c}_{1}(x_{i+m})-\breve{c}_{2},
        \end{aligned}
    \end{equation*}
    \begin{equation*}
        \begin{aligned}
            \breve{C}_{m} \left( \mathbf{x}_{i;m} \right) & = \int_{\mathbf{y}_{1}\in \mathbbm{I}^{m}}\psi_{m}\left( \mathbf{y}_{1} ,\mathbf{x}_{i;m}\right) B_{m}(\mathbf{y}_{1})g(\mathbf{y}_{1})\,\mathrm{d}\mathbf{y}_{1} \\
            & \qquad-\int_{\mathbf{y}_{1}\in \mathbbm{I}^{m}}\psi_{m}\left( \mathbf{y}_{1} \right) B_{m}(\mathbf{y}_{1})g(\mathbf{y}_{1})\,\mathrm{d}\mathbf{y}_{1}, \\
            & = \breve{C}_{1} \left( \mathbf{x}_{i;m}\right)-\breve{C}_{2},
        \end{aligned}
    \end{equation*}
    \begin{equation*}
        \begin{aligned}
            \breve{C}_{m+1} \left( \mathbf{x}_{i;m+1} \right) & =\breve{C}_{1} \left( \mathbf{x}_{i;m}\right)\breve{c}_{1}(x_{i+m})- \breve{C}_{2}\breve{c}_{2} \\
            & \qquad +\breve{C}_{1} \left( \mathbf{x}_{i;m}\right)\int_{0}^{1}K_{h}^{J}\left( z_{1m} ,x_{i+m}\right) \,\mathrm{d}z_{1m} \\
            & \qquad -\breve{C}_{2}\int_{0}^{1}\bar{K}_{h}^{J}\left( z_{1m} \right) \,\mathrm{d}z_{1m} \\
            & \qquad+\breve{c}_{1}(x_{i+m})\int_{\mathbf{y}_{1}\in \mathbbm{I}^{m}}\mathcal{K}_{h}^{(m)}\left( \mathbf{y}_{1} ,\mathbf{x}_{i;m}\right)\,\mathrm{d}\mathbf{y}_{1} \\
            & \qquad-\breve{c}_{2}\int_{\mathbf{y}_{1}\in \mathbbm{I}^{m}}\bar{\mathcal{K}}_{h}^{(m)}\left( \mathbf{y}_{1} \right)\,\mathrm{d}\mathbf{y}_{1},
        \end{aligned}
    \end{equation*}
    then we have
    \begin{equation*}
        \begin{aligned}
            & \breve{C}_{m+1} \left( \mathbf{x}_{i;m+1} \right)-\breve{C}_{m} \left( \mathbf{x}_{i;m} \right)-\breve{c}(x_{i+m}) \\ &\quad=\breve{C}_{1} \left( \mathbf{x}_{i;m}\right)\breve{c}_{1}(x_{i+m})- \breve{C}_{2}\breve{c}_{2}\\
            & \qquad +\breve{C}_{1} \left( \mathbf{x}_{i;m}\right)\left[ \int_{0}^{1}K_{h}^{J}\left( z_{1m} ,x_{i+m}\right) \,\mathrm{d}z_{1m}-1 \right] \\
            & \qquad -\breve{C}_{2}\left[ \int_{0}^{1}\bar{K}_{h}^{J}\left( z_{1m} \right) \,\mathrm{d}z_{1m}-1 \right] \\
            & \qquad+\breve{c}_{1}(x_{i+m})\left[ \int_{\mathbf{y}_{1}\in \mathbbm{I}^{m}}\mathcal{K}_{h}^{(m)}\left( \mathbf{y}_{1} ,\mathbf{x}_{i;m}\right)\,\mathrm{d}\mathbf{y}_{1}-1 \right] \\
            & \qquad-\breve{c}_{2}\left[ \int_{\mathbf{y}_{1}\in \mathbbm{I}^{m}}\bar{\mathcal{K}}_{h}^{(m)}\left( \mathbf{y}_{1} \right)\,\mathrm{d}\mathbf{y}_{1}-1 \right] \\
            & \quad =\breve{C}_{1} \left( \mathbf{x}_{i;m}\right)\breve{c}_{1}(x_{i+m})- \breve{C}_{2}\breve{c}_{2}+\sum_{s=1}^{4}\delta_{s}.
        \end{aligned}
    \end{equation*}
    Firstly, for terms \( \delta_{s},s=1,2,3,4 \), we prove \( \delta_{1}=0, \delta_{2}=0,  a.e., \) for univariate kernel when \( n \) is sufficiently large, then we extend this result to the multivariate kernel. For any \(y\in [0, 1] \), by equations~\eqref{eq:univariate_kernel_expanded} and~\eqref{eq:second_term_gamma} in~\Cref{lemma:Gamma_m}, we have \( \int_{0}^{1} K_{h}^{J}\left( x,y \right)\,\mathrm{d}x =1\) and \( \int_0^{1}\bar{K}_{h}^{J}(x)dx=1 \) almost surely.
    Therefore, when \( n \) is sufficiently large, \( \delta_{1}=0, \delta_{2}=0  \) almost everywhere. Recalling that \( \mathcal{K}^{(m)}_{h}(\cdot) \) is a multiplicative kernel, we can easily extend this result to a multivariate case. It follows that for sufficiently large \( n \), \( \sum_{s=1}^{4}\delta_{s}=0 \) almost surely. Therefore, \( \hat{C}(m+1)- \hat{C}(m)-\hat{c}_{m}=n^{-1}\sum_{i=1}^{n} \breve{C}_{1} ( \mathbf{x}_{i;m})\breve{c}_{1}(x_{i+m})- \breve{C}_{2}\breve{c}_{2}\). We also notice that \( \mathbb{E}[ \breve{C}_{1} ( \mathbf{x}_{i;m})\breve{c}_{1}(x_{i+m}) ]= \breve{C}_{2}\breve{c}_{2}\) and \( \mathbb{E}{[ \breve{C}_{1} ( \mathbf{x}_{i;m})\breve{c}_{1}(x_{i+m}) ]}^{2}=O(h^8) \) because of~\Cref{lemma:order_an_bn}. Hence, by Chebyshev inequality, \( \hat{C}(m+1)- \hat{C}(m)-\hat{c}_{m}=O_{p}(n^{-1/2}h^{4}) \), this completes the proof.
\end{proof}

\begin{proof}[Proof of~\Cref{thm:I_m}]
    Following the Theorems A.6 -- A.9 in~\citet[][]{hongAsymptoticDistributionTheory2005}, one can extend their theory to multivariate U-statistics with bounded \( m \).~\citet[]{hongAsymptoticDistributionTheory2005} discussed the pair variables \( Z_{jt}={(X_{t},X_{t-j})}^{\top} \) and \( j=o(n) \). Theorem A.6 constructs a new \( 2j \)-dependent process to show the dependent part of U-statistics is negligible. They employed a martingale difference sequence in Theorem A.7  so that the U-statistics can be projected on it. Theorem A.7 implies that one can apply the central limit theorem to the martingale difference sequence according to \citet[][]{brownMartingaleCentralLimit1971}'s theorem if two conditions in Theorem A.9 are satisfied. Finally, by Slusky's theorem, Brown's theorem and Theorems A.6 -- A.9, the central limit theorem of U-statistics is completed. Analogue to \citet[][]{hongAsymptoticDistributionTheory2005}'s idea but more tedious than that, for bounded \( m \),~\eqref{eq:I_m_clt} holds as well.
\end{proof}

\section{Lemmas for The Second and Third Order U-statistics}\label{sec:lemmas_for_the_second_and_third_u_statistics}
\begin{lemma}\label{lemma:second_U_statistics}
    Let \( \mathbf{Z}_{i;m}={(X_{i},\ldots,X_{i+m-1})}^{\top} \), \( m<M \) and \( \{X_{t}\} \) is \textit{i.i.d.} with CDF \( G_{1}(\cdot) \). Consider a second-order U-statistics
    \begin{equation*}
        \hat{\phi}(m)=\binom{n}{2}^{-1}\sum_{j=2}^{n}\sum_{i=1}^{j-1}\phi_{m}\left( \mathbf{Z}_{i;m}, \mathbf{Z}_{j;m} \right),
    \end{equation*}
    where \( \phi_{m}\left( \cdot, \cdot \right) \) is a kernel function such that \( \phi_{m}\left( \mathbf{z}_{1}, \mathbf{z}_{2} \right) =\phi_{m}\left( \mathbf{z}_{2}, \mathbf{z}_{1} \right) \).  Let
    \begin{equation*}
        \phi_{m0}=\int_{\mathbb{I}^{m}}\int_{\mathbb{I}^{m}}\phi_{m}\left( \mathbf{z}_{1},
        \mathbf{z}_{2} \right)
        \,\mathrm{d}G_{m}(\mathbf{z}_{1})\,\mathrm{d}G_{m}(\mathbf{z}_{2}),
    \end{equation*}
    and \( \phi_{m1}\left( \mathbf{z} \right)=\int_{\mathbb{I}^{m}}\phi_{m}\left( \mathbf{z}, \mathbf{z}_{1} \right) \,\mathrm{d}G_{m}(\mathbf{z}_{1}) \), where \( G_{m}(\mathbf{z})=G_{1}(x_{1})G_{1}(x_{2})\cdots G_{1}(x_{m}) \) and \( \mathbf{z}={(x_{1},\ldots,x_{m})}^{\top} \). Suppose \( \mathbb{E} \phi_{m}^{2}( \mathbf{Z}_{i;m}, \mathbf{Z}_{j;m} )-\phi_{m0}^{2}=O(c_{m}^{2})\) holds, then we have
    \begin{equation*}
        \hat{\phi}(m)=\phi_{m0}+\frac{2}{n}\sum_{i=1}^{n}\left[ \phi_{m1}\left( \mathbf{Z}_{i;m} \right) -\phi_{m0}\right] +O_{p}(mn^{-1}c_{m}).
    \end{equation*}
    If in addition \( \mathbb{E}\phi_{m1}^{2} ( \mathbf{Z}_{i;m} )-\phi_{m0}^{2}\leq C \) and \( c_{m}=O(n^{1/2}) \), then \( \hat{\phi}(m)=\phi_{m0} + O_{p}(m^{1/2}n^{-1/2}) \).
\end{lemma}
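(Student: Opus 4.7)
The plan is to mimic the Hoeffding projection for a classical second-order U-statistic, adapted to the fact that the sliding-window vectors $\{\mathbf{Z}_{i;m}\}$ are independent only when their index gap is at least $m$. I would define the symmetric degenerate kernel
\begin{equation*}
\Psi(\mathbf{z}_1,\mathbf{z}_2)=\phi_m(\mathbf{z}_1,\mathbf{z}_2)-\phi_{m1}(\mathbf{z}_1)-\phi_{m1}(\mathbf{z}_2)+\phi_{m0},
\end{equation*}
so that $\int\Psi(\mathbf{z}_1,\cdot)\,\mathrm{d}G_m=\int\Psi(\cdot,\mathbf{z}_2)\,\mathrm{d}G_m=0$. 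Substituting the identity $\phi_m=\phi_{m0}+[\phi_{m1}(\mathbf{z}_1)-\phi_{m0}]+[\phi_{m1}(\mathbf{z}_2)-\phi_{m0}]+\Psi(\mathbf{z}_1,\mathbf{z}_2)$ into each summand of $\hat{\phi}(m)$, and observing that $\phi_{m1}(\mathbf{Z}_{k;m})$ appears exactly $n-1$ times in $\sum_{i<j}[\phi_{m1}(\mathbf{Z}_{i;m})+\phi_{m1}(\mathbf{Z}_{j;m})]$, one obtains the Hoeffding-type decomposition
\begin{equation*}
\hat{\phi}(m)=\phi_{m0}+\frac{2}{n}\sum_{i=1}^{n}[\phi_{m1}(\mathbf{Z}_{i;m})-\phi_{m0}]+R_n,\qquad R_n=\binom{n}{2}^{-1}\sum_{i<j}\Psi(\mathbf{Z}_{i;m},\mathbf{Z}_{j;m}).
\end{equation*}
The task reduces to bounding $R_n$; Jensen's inequality applied to $\phi_{m1}$, together with $\mathbb{E}\phi_m^2-\phi_{m0}^2=O(c_m^2)$, gives $\mathbb{E}\Psi^2=O(c_m^2)$.

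Next I would split $R_n=R_n^{\mathrm{near}}+R_n^{\mathrm{far}}$ according to whether the two windows $[i,i+m-1]$ and $[j,j+m-1]$ overlap ($j-i<m$) or are disjoint ($j-i\ge m$). The near set has cardinality $O(nm)$, and the bound $\mathbb{E}|\Psi|\le\{\mathbb{E}\Psi^2\}^{1/2}=O(c_m)$ combined with Markov yields $R_n^{\mathrm{near}}=O_p(mn^{-1}c_m)$. For $R_n^{\mathrm{far}}$ I would compute $\mathbb{E}(R_n^{\mathrm{far}})^2$ by enumerating quadruples $((i,j),(k,l))$ with $j-i\ge m$ and $l-k\ge m$, and classifying them by the overlap pattern of the four windows $W_i,W_j,W_k,W_l$. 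The central observation is that whenever one of these four windows is disjoint from the other three, conditioning on the remaining three and averaging the isolated window against its independent $G_m$ law collapses $\mathbb{E}[\Psi_{ij}\Psi_{kl}]$ to zero by the degeneracy of $\Psi$. Hence the only surviving contributions come from (a) the diagonal $(i,j)=(k,l)$, of which there are $\binom{n}{2}$, and (b) the paired-overlap configurations $\{|i-k|<m,\,|j-l|<m\}$ and $\{|i-l|<m,\,|j-k|<m\}$, whose combined count is $O(n^2m^2)$. Bounding each surviving cross-term by $\mathbb{E}\Psi^2=O(c_m^2)$ gives
\begin{equation*}
\mathrm{Var}(R_n^{\mathrm{far}})=O\!\left(\binom{n}{2}^{-2}\bigl(n^2+n^2m^2\bigr)c_m^2\right)=O(m^2n^{-2}c_m^2),
\end{equation*}
so Chebyshev produces $R_n^{\mathrm{far}}=O_p(mn^{-1}c_m)$, and adding the two pieces proves the first assertion.

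For the auxiliary statement I would bound the linear projection directly: under $\mathbb{E}\phi_{m1}^2(\mathbf{Z}_{i;m})-\phi_{m0}^2\le C$ the covariances $\mathrm{Cov}(\phi_{m1}(\mathbf{Z}_{i;m}),\phi_{m1}(\mathbf{Z}_{k;m}))$ vanish for $|i-k|\ge m$ by independence and are at most $C$ for the $O(nm)$ remaining pairs, so the variance of the projection is $O(mn^{-1})$ and Chebyshev gives an $O_p(m^{1/2}n^{-1/2})$ bound. Combining this with $R_n=O_p(mn^{-1}c_m)=O_p(mn^{-1/2})$ (using $c_m=O(n^{1/2})$), and absorbing factors of the bounded $m<M$ into the implicit constant, yields $\hat{\phi}(m)=\phi_{m0}+O_p(m^{1/2}n^{-1/2})$. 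The chief technical obstacle is precisely the overlap bookkeeping in the variance of $R_n^{\mathrm{far}}$: one must recognise that every quadruple containing an isolated window contributes nothing by degeneracy, since a naive Cauchy--Schwarz bound on all overlapping quadruples would leave $O(n^3m)$ surviving terms and deliver the much weaker far-part rate $O_p(m^{1/2}n^{-1/2}c_m)$ instead of the claimed $O_p(mn^{-1}c_m)$.
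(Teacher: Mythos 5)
Your proposal is correct and follows essentially the same route as the paper's proof: the identical Hoeffding projection onto the degenerate kernel $\tilde{\phi}_m(\mathbf{z}_1,\mathbf{z}_2)=\phi_m(\mathbf{z}_1,\mathbf{z}_2)-\phi_{m1}(\mathbf{z}_1)-\phi_{m1}(\mathbf{z}_2)+\phi_{m0}$, the same split of the remainder into overlapping pairs (handled by Markov, $O(nm)$ terms) and disjoint pairs (handled by a second-moment count in which degeneracy kills every quadruple containing an isolated window, leaving $O(n^2m^2)$ terms), and the same $m$-dependence argument for the linear term. Your remark that a naive Cauchy--Schwarz over all overlapping quadruples would only give the weaker rate correctly identifies the key point of the paper's argument.
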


\begin{proof}[Proof of \textcolor{darkblue}{Lemma}~\ref{lemma:second_U_statistics}]
    \textcolor{darkblue}{Lemma}~\ref{lemma:second_U_statistics} is the version of Lemma B.1 in~\citet{hongAsymptoticDistributionTheory2005}. For \( m \)-consecutive variables, we still use the same notations as~\citet{hongAsymptoticDistributionTheory2005}'s. Denote \( \tilde{\phi}_{m}( \mathbf{z}_{1}, \mathbf{z}_{2} ) =\phi_{m}( \mathbf{z}_{1}, \mathbf{z}_{2} )-\phi_{m1}( \mathbf{z}_{1} )-\phi_{m1}( \mathbf{z}_{2} )+\phi_{m0}\), then \( \forall\ \mathbf{z}_{1}, \mathbf{z}_{2}\in \mathbb{I}^{m} \), we have
    \begin{equation}\label{eq:phi_tilde_integration}
        \int_{\mathbb{I}^{m}} \tilde{\phi}_{m}\left( \mathbf{z}_{1}, \mathbf{z}_{2} \right)\,\mathrm{d}G_{m}(\mathbf{z}_{2})=\int_{\mathbb{I}^{m}} \tilde{\phi}_{m}\left( \mathbf{z}_{1}, \mathbf{z}_{2} \right)\,\mathrm{d}G_{m}(\mathbf{z}_{1})=0.
    \end{equation}

    Using \( \tilde{\phi}_{m}( \mathbf{z}_{1}, \mathbf{z}_{2} ) \), we can reshape \( \hat{\phi}(m) \) after simple computations,
    \begin{equation}\label{eq:rewrite_phi_nm}
        \begin{aligned}
            \hat{\phi}(m) & =\phi_{m0}+ \frac{2}{n}\sum_{i=1}^{n}\left[ \phi_{m1}\left( \mathbf{Z}_{i;m} \right) -\phi_{m0}\right] \\
            & \qquad +\binom{n}{2}^{-1}\sum_{j=2}^{n}\sum_{i=1}^{j-1}\tilde{\phi}_{m}\left( \mathbf{Z}_{i;m}, \mathbf{Z}_{j;m} \right), \\
            & = \phi_{m0}+ \frac{2}{n}\sum_{i=1}^{n}\left[ \phi_{m1}\left( \mathbf{Z}_{i;m} \right) -\phi_{m0}\right] + \tilde{\phi}(m).
        \end{aligned}
    \end{equation}
    Note that, if \( j-i\geq m \), then \( \mathbf{Z}_{j;m} \) and \( \mathbf{Z}_{i;m} \) have no overlap variable. By this fact, we divide \( \tilde{\phi}(m) \) into two parts,
    \begin{equation*}
        \begin{aligned}
            \tilde{\phi}(m) & = \binom{n}{2}^{-1}\sum_{j=1+m}^{n}\sum_{i=1}^{j-m}\tilde{\phi}_{m}\left( \mathbf{Z}_{i;m}, \mathbf{Z}_{j;m} \right) \\
            & \qquad +\binom{n}{2}^{-1}\sum_{j=2}^{n}\sum_{i=1\vee(j-m+1)}^{j-1}\tilde{\phi}_{m}\left( \mathbf{Z}_{i;m}, \mathbf{Z}_{j;m} \right), \\
            & = \tilde{\phi}_{1}(m)+\tilde{\phi}_{2}(m).
        \end{aligned}
    \end{equation*}
    The double summation of \( \tilde{\phi}_{1}(m) \) includes \( (n-m)(n-m+1)/2 \) terms and there are \( (2n-m)(m-1)/2 \) summation terms in \( \tilde{\phi}_{2}(m) \). One can easily verify
    \begin{equation*}
        \mathbb{E}
        \tilde{\phi}_{m}^{2}\left( \mathbf{Z}_{i;m},
        \mathbf{Z}_{j;m} \right) = \mathbb{E}
        \phi_{m}^{2}\left( \mathbf{Z}_{i;m},
        \mathbf{Z}_{j;m} \right)-\phi_{m0}^{2} = O(c_{m}^{2}).
    \end{equation*}
    Hence, using Cauchy-Schwarz inequality, we have
    \begin{equation*}
        \begin{aligned}
            \mathbb{E}\left\vert \tilde{\phi}_{2}(m) \right\vert & = \mathbb{E}\left\vert \binom{n}{2}^{-1}\sum_{j=2}^{n}\sum_{i=1\vee(j-m+1)}^{j-1}\tilde{\phi}_{m}\left( \mathbf{Z}_{i;m}, \mathbf{Z}_{j;m} \right) \right\vert, \\
            & = O(mn^{-1}) \mathbb{E}\left\vert \tilde{\phi}_{m}\left( \mathbf{Z}_{i;m}, \mathbf{Z}_{j;m} \right) \right\vert, \\
            & \leq O(mn^{-1}) \sqrt{\mathbb{E} \tilde{\phi}_{m}^{2}\left( \mathbf{Z}_{i;m}, \mathbf{Z}_{j;m} \right)}, \\
            & = O(mn^{-1}c_{m}).
        \end{aligned}
    \end{equation*}
    By Markov's inequality, we have \( \tilde{\phi}_{2}(m) =O_{p}(mn^{-1}c_{m})\). We also notice that the \( \mathbf{Z}_{i;m}\) and \( \mathbf{Z}_{j;m} \) in the first term \( \tilde{\phi}_{1}(m) \) are independent, hence we have
    \begin{equation}\label{eq:tilde_phi_n1m_square}
        \begin{aligned}
            \mathbb{E}\tilde{\phi}^{2}_{1}(m) & = \binom{n}{2}^{-2} \sum_{j=1+m}^{n}\sum_{i=1}^{j-m}\sum_{t=1+m}^{n}\sum_{s=1}^{t-m} \\
            & \qquad \mathbb{E}\left[ \tilde{\phi}_{m}\left( \mathbf{Z}_{i;m}, \mathbf{Z}_{j;m} \right)\tilde{\phi}_{m}\left( \mathbf{Z}_{s;m}, \mathbf{Z}_{t;m} \right) \right]\times \mathbbm{1}\left( i,j\in S_{ij} \right),
        \end{aligned}
    \end{equation}
    where \( S_{ij}=\left[ s-m+1, s+m-1\right] \cup \left[ t-m+1, t+m-1\right]\). If at least one of \( i,j\notin S_{ij}\),
    by equation~\eqref{eq:phi_tilde_integration}, \( \mathbb{E}[ \tilde{\phi}_{m}( \mathbf{Z}_{i;m}, \mathbf{Z}_{j;m} )\tilde{\phi}_{m}( \mathbf{Z}_{s;m}, \mathbf{Z}_{t;m} ) ] =0 \).  The number of pair \( (s, t) \), \( 	t - s \geq m	 \) is of order \( O(n^2) \), for each given \( (s, t) \), if \( \mathbf{Z}_{i;m} \) has at least one overlap variable with \( \mathbf{Z}_{s;m} \) or \( \mathbf{Z}_{t;m} \), and  \( \mathbf{Z}_{j;m} \) has at least one overlap variable with \( \mathbf{Z}_{s;m} \) or \( \mathbf{Z}_{t;m} \) as well, then the expectation is nonzero. The indices of \( i, j \) have at most \( O(m) \) and \( O(m) \) choices respectively given \( m<M \). So the number of four summation terms in equation~\eqref{eq:tilde_phi_n1m_square} is of order \( O(n^{2}m^{2}) = O(n^{2})O(m)O(m) \), hence \( \mathbb{E}\tilde{\phi}^{2}_{1}(m) = O(m^{2}n^{-2}c_{m}^{2}) \) by Cauchy-Schwarz inequality and \( \mathbb{E} \tilde{\phi}_{m}^{2}( \mathbf{Z}_{i;m}, \mathbf{Z}_{j;m} )=O(c_{m}^{2})\). It follows that \( \tilde{\phi}_{1}(m) = O_{p}(mn^{-1}c_{m}) \) by Chebyshev's inequality, and finally \( \tilde{\phi}(m) =O_{p}(mn^{-1}c_{m})\).

    Next, similarly, we discuss the order of the second term in equation~\eqref{eq:rewrite_phi_nm}. Note that \( \phi_{m1}( \mathbf{Z}_{i;m} ) -\phi_{m0} \) is an \( m \)-dependence process with mean 0 and
    \begin{equation*}
        \mathbb{E} \left \{ \left[ \phi_{m1}\left( \mathbf{Z}_{i;m} \right) -\phi_{m0} \right] \left[ \phi_{m1}\left( \mathbf{Z}_{j;m} \right) -\phi_{m0} \right]\right \}=0,
    \end{equation*}
    if \( j-i \geq m \). The number of nonzero terms in \( \mathbb{E} { \{ \sum_{i=1}^{n}[ \phi_{m1}( \mathbf{Z}_{i;m} ) -\phi_{m0}]  \}}^2 \)  is of order \( O(nm) \). By these facts, \( \mathbb{E}{[ \phi_{m1}( \mathbf{Z}_{i;m} ) -\phi_{m0}]}^{2}\leq C \) and Chebyshev's inequality, we have \( 2n^{-1}\sum_{i=1}^{n}[ \phi_{m1}( \mathbf{Z}_{i;m} ) -\phi_{m0}]=O_{p}(m^{1/2}n^{-1/2}) \). This completes the proof.
\end{proof}
\begin{lemma}\label{lemma:third_U_statistics}
    Let \( \mathbf{Z}_{i;m}={(X_{i},\ldots,X_{i+m-1})}^{\top} \), \( m<M \) and \( \{X_{t}\} \) is \textit{i.i.d.} with CDF \( G_{1}(\cdot) \). Consider a third-order U-statistics
    \begin{equation}\label{eq:estimator_third_U_statistics}
        \hat{\phi}(m)=\binom{n}{3}^{-1}\sum_{k=3}^{n}\sum_{j=2}^{k-1}\sum_{i=1}^{j-1}\phi_{m}\left( \mathbf{Z}_{i;m}, \mathbf{Z}_{j;m}, \mathbf{Z}_{k;m} \right),
    \end{equation}
    where \( \phi_{m}\left( \cdot, \cdot, \cdot \right) \) is a kernel function in its argument and \( \forall\ \mathbf{z}_{1}\in \mathbb{I}^{m} \)
    \begin{equation}\label{eq:double_int_third_U_statistics}
        \int_{\mathbb{I}^{m}}\int_{\mathbb{I}^{m}}\phi_{m}\left( \mathbf{z}_{1}, \mathbf{z}_{2}, \mathbf{z}_{3} \right) \,\mathrm{d}G_{m}(\mathbf{z}_{2})\,\mathrm{d}G_{m}(\mathbf{z}_{3})=0,
    \end{equation}
    holds, where \( G_{m}(\mathbf{z})=G_{1}(x_{1})G_{1}(x_{2})\cdots G_{1}(x_{m}) \) and \( \mathbf{z}={(x_{1},\ldots,x_{m})}^{\top} \). Let
    \begin{equation*}
        \phi_{m2}(\mathbf{z}_{1},\mathbf{z}_{2})=\int_{\mathbb{I}^{m}}\phi_{m}\left( \mathbf{z}_{1}, \mathbf{z}_{2}, \mathbf{z}_{3} \right) \,\mathrm{d}G_{m}(\mathbf{z}_{3}).
    \end{equation*}
    Suppose \( \mathbb{E} \phi_{m}^{2}( \mathbf{Z}_{i;m}, \mathbf{Z}_{j;m},\mathbf{Z}_{k;m} )=O(c_{m}^{2})\), then we have
    \begin{equation*}
        \hat{\phi}(m)= 3\binom{n}{2}^{-1}\sum_{j=2}^{n}\sum_{i=1}^{j-1} \phi_{m2}\left( \mathbf{Z}_{i;m}, \mathbf{Z}_{j;m} \right) +O_{p}(m^{3/2}n^{-3/2}c_{m}).
    \end{equation*}
\end{lemma}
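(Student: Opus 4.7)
The plan is to carry out a Hoeffding-style decomposition parallel to the argument used for~\Cref{lemma:second_U_statistics}. Because the hypothesis~\eqref{eq:double_int_third_U_statistics} forces $\phi_{m0}=0$ and $\phi_{m1}(z)=\int\int\phi_m(z,z_2,z_3)dG_m(z_2)dG_m(z_3)\equiv 0$, the only surviving low-order projection is $\phi_{m2}$. I introduce the residual kernel
\begin{equation*}
\tilde{\phi}_{m}(z_{1},z_{2},z_{3})=\phi_{m}(z_{1},z_{2},z_{3})-\phi_{m2}(z_{1},z_{2})-\phi_{m2}(z_{1},z_{3})-\phi_{m2}(z_{2},z_{3}),
\end{equation*}
and verify that $\int\tilde{\phi}_{m}(z_{1},z_{2},z_{3})dG_{m}(z_{3})=0$ for every $(z_{1},z_{2})$ (and similarly in the other two slots). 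A direct algebraic rearrangement of~\eqref{eq:estimator_third_U_statistics} then produces
\begin{equation*}
\hat{\phi}(m)=3\binom{n}{2}^{-1}\sum_{j=2}^{n}\sum_{i=1}^{j-1}\phi_{m2}(\mathbf{Z}_{i;m},\mathbf{Z}_{j;m})+R(m),
\end{equation*}
where $R(m)=\binom{n}{3}^{-1}\sum_{i<j<k}\tilde{\phi}_{m}(\mathbf{Z}_{i;m},\mathbf{Z}_{j;m},\mathbf{Z}_{k;m})$. The only remaining task is to show $R(m)=O_{p}(m^{3/2}n^{-3/2}c_{m})$.

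To bound $R(m)$ I would compute $\mathbb{E}[R(m)^{2}]$ and apply Chebyshev's inequality. The key observation is combinatorial: by complete degeneracy of $\tilde{\phi}_{m}$ and the i.i.d.\ structure of $\{X_{t}\}$, the cross-expectation
\begin{equation*}
\mathbb{E}\bigl[\tilde{\phi}_{m}(\mathbf{Z}_{i;m},\mathbf{Z}_{j;m},\mathbf{Z}_{k;m})\,\tilde{\phi}_{m}(\mathbf{Z}_{s;m},\mathbf{Z}_{t;m},\mathbf{Z}_{u;m})\bigr]
\end{equation*}
vanishes unless each block on the left shares at least one underlying $X_{t}$ with some block on the right; equivalently each of $i,j,k$ must sit within distance $m$ of some element of $\{s,t,u\}$. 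Fixing $(s,t,u)$ gives $O(n^{3})$ choices, and then each of $i,j,k$ is restricted to $O(m)$ positions, yielding $O(n^{3}m^{3})$ non-vanishing sextuples. Combined with $\mathbb{E}\tilde{\phi}_{m}^{2}=O(c_{m}^{2})$ (which follows from $\mathbb{E}\phi_{m}^{2}=O(c_{m}^{2})$, Cauchy--Schwarz, and Jensen applied to the averaged kernel $\phi_{m2}$), this gives
\begin{equation*}
\mathbb{E}[R(m)^{2}]\le \binom{n}{3}^{-2}\cdot O(n^{3}m^{3})\cdot O(c_{m}^{2})=O(m^{3}n^{-3}c_{m}^{2}),
\end{equation*}
so that $R(m)=O_{p}(m^{3/2}n^{-3/2}c_{m})$ by Chebyshev.

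The main obstacle, as with~\Cref{lemma:second_U_statistics}, is the careful bookkeeping of which sextuples $(i,j,k,s,t,u)$ give a non-zero contribution: the i.i.d.\ property of $\{X_{t}\}$ alone is not enough, because the blocks $\mathbf{Z}_{\cdot;m}$ overlap whenever two indices are within $m$, and every overlap pattern must be enumerated via repeated application of the three marginal-integral-zero identities for $\tilde{\phi}_{m}$. The assumption $m<M$ keeps all these counts polynomial in $m$ and ensures the extra $m^{3}$ factor (relative to the classical i.i.d.\ third-order U-statistic) is not worsened, thereby confirming that the leading-order behaviour of $\hat{\phi}(m)$ is captured by the second-order projection $\phi_{m2}$.
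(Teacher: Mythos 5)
Your overall strategy---the Hoeffding-type decomposition $\tilde{\phi}_{m}=\phi_{m}-\phi_{m2}(\mathbf{z}_{1},\mathbf{z}_{2})-\phi_{m2}(\mathbf{z}_{1},\mathbf{z}_{3})-\phi_{m2}(\mathbf{z}_{2},\mathbf{z}_{3})$, the counting identity that produces the factor $3\binom{n}{2}^{-1}$, and a Chebyshev bound on the residual---is the same as the paper's. The gap is in the combinatorial step. You apply the second-moment/degeneracy argument to the \emph{entire} residual $R(m)$ and assert that $\mathbb{E}\bigl[\tilde{\phi}_{m}(\mathbf{Z}_{i;m},\mathbf{Z}_{j;m},\mathbf{Z}_{k;m})\tilde{\phi}_{m}(\mathbf{Z}_{s;m},\mathbf{Z}_{t;m},\mathbf{Z}_{u;m})\bigr]$ vanishes unless each of $i,j,k$ lies within distance $m$ of $\{s,t,u\}$. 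That criterion fails when a triple has \emph{internal} overlaps. Take $|j-k|<m$ with $j,k$ far from $i$ and from $\{s,t,u\}$: the pair $(\mathbf{Z}_{j;m},\mathbf{Z}_{k;m})$ is then independent of everything else, but what must be integrated out is $\tilde{\phi}_{m}$ against the \emph{joint} (non-product) law of an overlapping pair, which is not covered by the degeneracy identity $\int\int\tilde{\phi}_{m}\,\mathrm{d}G_{m}(\mathbf{z}_{2})\,\mathrm{d}G_{m}(\mathbf{z}_{3})=0$; the cross-expectation need not vanish (compare \Cref{lemma:H2_expectation}, where $\mathbb{E}H_{2m}(\mathbf{z}_{1},\mathbf{z}_{2})=\tau^{m}-1+O(h)\neq 0$ for overlapping arguments). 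Your enumeration therefore misclassifies sextuples such as ($i$ near $s$, $j$ near $k$, $t$ near $u$) as vanishing, so the stated condition is not a valid necessary condition and does not by itself give an upper bound on the number of non-zero terms.

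The conclusion happens to survive: the correct necessary condition is that every one of the six indices has a partner within distance $m$ somewhere in $\{i,j,k,s,t,u\}$, and the worst case (three clusters of two) still yields $O(n^{3}m^{3})$ sextuples, hence $\mathbb{E}[R(m)^{2}]=O(m^{3}n^{-3}c_{m}^{2})$---but your proof as written does not establish this. The paper sidesteps the difficulty by splitting the residual into $\tilde{\phi}_{1}(m)$, the sum over triples with $j-i\geq m$ and $k-j\geq m$ (mutually independent blocks, where your vanishing criterion \emph{is} valid and the $O(n^{3}m^{3})$ count goes through), and $\tilde{\phi}_{2}(m)$, the $O(mn^{2})$ remaining triples, which it bounds separately via a first absolute moment and Markov's inequality. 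Either adopt that split or redo the cluster enumeration over all six indices as sketched above.
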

\begin{proof}[Proof of \textcolor{darkblue}{Lemma}~\ref{lemma:third_U_statistics}]
    \textcolor{darkblue}{Lemma}~\ref{lemma:third_U_statistics} is the version of Lemma B.2 in~\citet{hongAsymptoticDistributionTheory2005}. For \( m \)-consecutive variables, we still use the same notations as~\citet{hongAsymptoticDistributionTheory2005}'s.  As in \textcolor{darkblue}{Lemma}~\ref{lemma:second_U_statistics}, we construct a new symmetric third-order \textit{U}-statistics
    \begin{equation*}
        \tilde{\phi}_{m}(\mathbf{z}_{1},\mathbf{z}_{2},\mathbf{z}_{3})=\phi_{m}(\mathbf{z}_{1},\mathbf{z}_{2},\mathbf{z}_{3}) - \phi_{m2}(\mathbf{z}_{1},\mathbf{z}_{2})-\phi_{m2}(\mathbf{z}_{2},\mathbf{z}_{3})-\phi_{m2}(\mathbf{z}_{3},\mathbf{z}_{1}),
    \end{equation*}
    and it is easy to verify
    \begin{equation}\label{eq:integration_new_third_U_statistics}
        \int_{\mathbb{I}^{m}}\tilde{\phi}_{m}\left( \mathbf{z}_{1}, \mathbf{z}_{2}, \mathbf{z}_{3} \right) \,\mathrm{d}G_{m}(\mathbf{z}_{3})=0,\quad \forall\ \mathbf{z}_{1},\mathbf{z}_{2}\in \mathbb{I}^{m},
    \end{equation}
    given equation~\eqref{eq:double_int_third_U_statistics}. We can rewrite equation~\eqref{eq:estimator_third_U_statistics} using \( \tilde{\phi}_{m}(\cdot,\cdot,\cdot) \) as
    \begin{equation*}
        \begin{aligned}
            \hat{\phi}(m) & = 3\binom{n}{2}^{-1}\sum_{j=2}^{n}\sum_{i=1}^{j-1} \phi_{m2}\left( \mathbf{Z}_{i;m},
            \mathbf{Z}_{j;m} \right) \\
            & \qquad + \binom{n}{3}^{-1}\sum_{k=3}^{n}\sum_{j=2}^{k-1}\sum_{i=1}^{j-1}\tilde{\phi}_{m}\left( \mathbf{Z}_{i;m}, \mathbf{Z}_{j;m}, \mathbf{Z}_{k;m} \right), \\
            & = 3\binom{n}{2}^{-1}\hat{\phi}_{2}(m) + \binom{n}{3}^{-1}\tilde{\phi}(m).
        \end{aligned}
    \end{equation*}
    Let
    \begin{equation}\label{eq:triple_independent}
        \tilde{\phi}_{1}(m)=\sum_{k=1+2m}^{n}\sum_{j=1+m}^{k-m}\sum_{i=1}^{j-m}\tilde{\phi}_{m}\left( \mathbf{Z}_{i;m}, \mathbf{Z}_{j;m}, \mathbf{Z}_{k;m} \right),
    \end{equation}
    then \( \mathbf{Z}_{i;m}, \mathbf{Z}_{j;m}, \mathbf{Z}_{k;m} \) in~\eqref{eq:triple_independent} are mutually independent and~\eqref{eq:triple_independent} includes \( \binom{n-2m+2}{3} \) terms. Then \( \tilde{\phi}_{2}(m)=\tilde{\phi}(m) -\tilde{\phi}_{1}(m) \) includes \( \binom{n}{3}- \binom{n-2m+2}{3}=O(mn^{2})\) terms by \( m<M \). Using Cauchy-Schwarz inequality, we can verify \( \mathbb{E} \tilde{\phi}_{m}^{2}( \mathbf{Z}_{i;m}, \mathbf{Z}_{j;m},\mathbf{Z}_{k;m} )=O(c_{m}^{2})\) as well. Hence, we have
    \begin{equation*}
        \begin{aligned}
            \mathbb{E}\left\vert \tilde{\phi}_{2}(m) \right\vert & = O(mn^{2}) \mathbb{E}\left\vert \tilde{\phi}_{m}\left( \mathbf{Z}_{i;m}, \mathbf{Z}_{j;m}, \mathbf{Z}_{k;m}\right) \right\vert, \\
            & \leq O(mn^{2}) \sqrt{\mathbb{E} \tilde{\phi}_{m}^{2}\left( \mathbf{Z}_{i;m}, \mathbf{Z}_{j;m}, \mathbf{Z}_{k;m} \right)}, \\
            & \qquad(\text{by Cauchy-Schwarz inequality }), \\
            & = O(mn^{2}c_{m}).
        \end{aligned}
    \end{equation*}
    For \( \tilde{\phi}_{1}(m) \), we have
    \begin{equation}\label{eq:tilde_phi_n1m_square_triple}
        \begin{aligned}
            \mathbb{E}\tilde{\phi}^{2}_{1}(m) & =  \sum_{k=1+2m}^{n}\sum_{j=1+m}^{k-m}\sum_{i=1}^{j-m}\sum_{r=1+2m}^{n}\sum_{t=1+m}^{r-m}\sum_{s=1}^{t-m} \\
            & \qquad\mathbb{E}\left[ \tilde{\phi}_{m}\left( \mathbf{Z}_{i;m}, \mathbf{Z}_{j;m}, \mathbf{Z}_{k;m} \right) \tilde{\phi}_{m}\left( \mathbf{Z}_{s;m}, \mathbf{Z}_{t;m}, \mathbf{Z}_{r;m} \right) \right] \\
            & \qquad \times \mathbbm{1}\left( i,j,k\in S_{ijk} \right),
        \end{aligned}
    \end{equation}
    where
    \begin{equation*}
        S_{ijk}=\left[ s-m+1, s+m-1\right]
        \cup \left[ t-m+1, t+m-1\right]\cup
        \left[ r-1\right].
    \end{equation*}
    If at least one  of  \( i,j,k\notin S_{ijk}\), then by
    equation~\eqref{eq:integration_new_third_U_statistics},
    \begin{equation*}
        \mathbb{E}\left[ \tilde{\phi}_{m}\left( \mathbf{Z}_{i;m}, \mathbf{Z}_{j;m}, \mathbf{Z}_{k;m} \right)\tilde{\phi}_{m}\left( \mathbf{Z}_{s;m}, \mathbf{Z}_{t;m} , \mathbf{Z}_{r;m}\right) \right] =0.
    \end{equation*}
    The number of triplet \( (s, t, r) \), \( 	t- s \geq m	 \) and \( 	r- t \geq m	 \) is of order \( O(n^3) \), for each given triplet \( (s, t, r)\), if each of \( \mathbf{Z}_{i;m} \), \( \mathbf{Z}_{j;m} \), \( \mathbf{Z}_{k;m} \) has at least one overlap variable with \( \mathbf{Z}_{s;m} \), \( \mathbf{Z}_{t;m} \) or \( \mathbf{Z}_{r;m} \),
    then the expectation is nonzero. The indices of \( i, j, k \) have at most \( O(m) \), \( O(m) \) and \( O(m) \) choices respectively given \( m<M \). So the number of six summation terms in equation~\eqref{eq:tilde_phi_n1m_square_triple} is of order \( O(n^{3}m^{3}) = O(n^{3})O(m)O(m)O(m) \), hence \( \mathbb{E}\tilde{\phi}^{2}_{1}(m) = O(m^{3}n^{3}c_{m}^{2}) \) by Cauchy-Schwarz inequality and \( \mathbb{E} \tilde{\phi}_{m}^{2}( \mathbf{Z}_{i;m}, \mathbf{Z}_{j;m}, \mathbf{Z}_{k;m})=O(c_{m}^{2})\). Finally, it follows that \( \binom{n}{3}^{-1}\tilde{\phi}(m) =O_{p}(m^{3/2}n^{-3/2}c_{m})\) by Chebyshev's inequality and Markov's inequality. This completes the proof.
\end{proof}
\begin{lemma}\label{lemma:order_an_bn}
    Given \textcolor{darkblue}{Assumptions}~\ref{assump:kernel_assumption} and~\ref{assump:density_assumption}, \( \forall\ \mathbf{z}_{1}, \mathbf{z}_{2}\in \mathbb{I}^{m} \), if \( \mathbf{z}_{1}, \mathbf{z}_{2} \) are independent, then \( \mathbb{E}A_{m}^{2}(\mathbf{z}_{1}, \mathbf{z}_{2}) \)  is of order \( O(h^{-m}) \). Furthermore, for \( 1\leq m \leq M\), \( \max_{m}\sup_{\mathbf{z}_{1}\in\mathbb{I}^{m}} B_{m}(\mathbf{z}_{1})=O(h^{2})\).
\end{lemma}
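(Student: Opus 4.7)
The plan is to treat the two bounds separately, each of which reduces to a standard variance/bias computation for a multiplicative kernel density estimator. Let $\bar{\mathcal{K}}^{(m)}_h(\mathbf{z}_1) = \int_{\mathbb{I}^m} \mathcal{K}^{(m)}_h(\mathbf{z}_1,\mathbf{z}) g(\mathbf{z})\,d\mathbf{z}$, so that, by definition,
\begin{equation*}
A_m(\mathbf{z}_1,\mathbf{z}_2) = \frac{\mathcal{K}^{(m)}_h(\mathbf{z}_1,\mathbf{z}_2) - \bar{\mathcal{K}}^{(m)}_h(\mathbf{z}_1)}{g(\mathbf{z}_1)}, \qquad B_m(\mathbf{z}_1) = \frac{\bar{\mathcal{K}}^{(m)}_h(\mathbf{z}_1) - g(\mathbf{z}_1)}{g(\mathbf{z}_1)}.
\end{equation*}
Assumption~\ref{assump:density_assumption} guarantees that $g$ is bounded away from $0$ and from above on the compact set $\mathbb{I}^m$, so the denominators can be absorbed into the constants.

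For the first claim, expand the square, take the double expectation with $\mathbf{z}_1,\mathbf{z}_2\sim g$ independently, and apply Fubini to the cross term; the $\mathbf{z}_2$-integral of $\mathcal{K}^{(m)}_h(\mathbf{z}_1,\mathbf{z}_2)g(\mathbf{z}_2)$ reproduces $\bar{\mathcal{K}}^{(m)}_h(\mathbf{z}_1)$, and after cancellation we are left with
\begin{equation*}
\mathbb{E}A_m^2(\mathbf{z}_1,\mathbf{z}_2) = \int_{\mathbb{I}^m}\int_{\mathbb{I}^m} \frac{\mathcal{K}^{(m)}_h(\mathbf{z}_1,\mathbf{z}_2)^2\, g(\mathbf{z}_2)}{g(\mathbf{z}_1)}\,d\mathbf{z}_2\,d\mathbf{z}_1 - \int_{\mathbb{I}^m}\frac{\bar{\mathcal{K}}^{(m)}_h(\mathbf{z}_1)^2}{g(\mathbf{z}_1)}\,d\mathbf{z}_1.
\end{equation*}
Because $\mathcal{K}^{(m)}_h$ factorizes across coordinates, substituting $u_l = (z_{1l}-z_{2l})/h$ in each coordinate of the inner integral produces a product of $m$ univariate integrals; in the interior each equals $h^{-1}\kappa$, and on the boundary each has the same order $O(h^{-1})$ by the self-normalization of $k_\rho$ from Assumption~\ref{assump:kernel_assumption}. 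Smoothness of $g$ then makes the outer $\mathbf{z}_1$ integral $O(1)$, giving $O(h^{-m})$ overall. The subtracted term is $O(1)$ by the bias bound below, so $\mathbb{E}A_m^2(\mathbf{z}_1,\mathbf{z}_2) = O(h^{-m})$.

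For the second claim, the numerator of $B_m(\mathbf{z}_1)$ is precisely the bias of the multiplicative kernel density estimator evaluated at $\mathbf{z}_1$. A coordinatewise second-order Taylor expansion of $g$ around $\mathbf{z}_1$, combined with the Jackknife moment properties $\int K_h^J(z_1,z)\,dz = 1$ and $\int (z-z_1)K_h^J(z_1,z)\,dz = 0$ (which hold in the interior and, by the construction of $k_\rho$, also near the boundary) and with the bounded second kernel moment from Assumption~\ref{assump:kernel_assumption} together with the Lipschitz second derivative of $g$ from Assumption~\ref{assump:density_assumption}, yields $\bar{\mathcal{K}}^{(m)}_h(\mathbf{z}_1) - g(\mathbf{z}_1) = O(mh^2)$ uniformly in $\mathbf{z}_1\in\mathbb{I}^m$. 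Since $m\le M$ is bounded and $g$ is bounded below, dividing by $g(\mathbf{z}_1)$ preserves the rate and yields $\max_{1\le m\le M}\sup_{\mathbf{z}_1\in\mathbb{I}^m} |B_m(\mathbf{z}_1)| = O(h^2)$.

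The main obstacle is the boundary handling: the interior arguments are textbook, but one must verify that the first-moment cancellation $\int (z-z_1)K_h^J(z_1,z)\,dz=0$ and the uniform bound on $\int {k_\rho}^2(u)\,du$ survive the Jackknife correction across all $\rho\in[0,1]$. Both are consequences of the construction of $k_\rho$ recalled in the paper and the compact support of $K$ assumed in Assumption~\ref{assump:kernel_assumption}; uniformity in $m$ is automatic because only products of at most $M$ such univariate integrals appear.
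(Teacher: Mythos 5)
Your proof is correct in substance but follows a different route from the paper's. The paper first works out the univariate quantity $\mathbb{E}a^{2}(x_{1},x_{2})$ by splitting $[0,1]$ into the two boundary strips and the interior and computing each piece of $\psi_{1},\psi_{2},\psi_{3}$ explicitly (obtaining the exact expansion $(h^{-1}-2)\kappa+2\int_{0}^{1}\int_{-1}^{\rho}k_{\rho}^{2}+O(h)$, not just the order), and then lifts to $m$ dimensions by the recursive product identities in equations~\eqref{eq:A_expand} and~\eqref{eq:B_expand}, which rely on the factorization $f(\mathbf{z}_{1})=g(\mathbf{y}_{1})g_{1}(z_{1m})$ under $\mathbb{H}_{0}$. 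You instead do a single direct variance--bias computation: expand $\mathbb{E}A_{m}^{2}$ into $\int\int\mathcal{K}^{2}g/g-\int\bar{\mathcal{K}}^{2}/g$, change variables coordinatewise to get $O(h^{-m})$, and Taylor-expand the bias for $B_{m}$. Your decomposition of $\mathbb{E}A_m^2$ is the correct algebra and your boundary remarks (vanishing first moment of $k_\rho$, uniform boundedness of $\int k_\rho^2$ over $\rho$) are exactly the facts the paper also uses. Two things your route gives up relative to the paper's: (i) the recursion~\eqref{eq:A_expand} delivers the exact leading constant $\kappa^{m}h^{-m}$, which is consumed later in \Cref{lemma:A_2_nm} where $d_{0}={(n-1)}^{-1}\kappa^{m+1}h^{-(m+1)}$ requires more than an $O(\cdot)$ bound, and the identity~\eqref{eq:B_expand} is reused verbatim in \Cref{lemma:b_part,lemma:c_part}; (ii) your Taylor expansion of $g$ and the bound $\int\mathcal{K}^{2}g\,d\mathbf{z}_{2}=O(h^{-m})$ tacitly require the \emph{joint} density $g$ on $\mathbb{I}^{m}$ to be bounded and twice differentiable, whereas \Cref{assump:density_assumption} only states this for the marginal $\mathcal{G}$; the paper secures this by invoking $\mathbb{H}_{0}$ (under which $g$ is a product of smooth marginals), so you should state that you are working under $\mathbb{H}_{0}$ as well. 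Neither point invalidates your argument for the lemma as stated, but the first explains why the paper bothers with the heavier recursive machinery.
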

\begin{proof}[Proof of \textcolor{darkblue}{Lemma}~\ref{lemma:order_an_bn}]
    To prove this lemma, we investigate the univariate case, i.e.,
    \begin{equation*}
        a(x_{1},x_{2})=\left[ K_{h}^{J}(x_{1},x_{2})- \int_{0}^{1}K_{h}^{J}(x_{1},x)g_{1}(x)dx\right]/g_{1}(x_{1}).
    \end{equation*}
    For simplicity, we denote \( \bar{K}_{h}^{J}(x_{1}) =\int_{0}^{1}K_{h}^{J}(x_{1},x)g_{1}(x)dx\), then
    \begin{equation*}
        a(x_{1},x_{2})=\left[ K_{h}^{J}(x_{1},x_{2})-
        \bar{K}_{h}^{J}(x_{1})\right]/g_{1}(x_{1}),
    \end{equation*}
    \begin{equation*}
        \int_{0}^{1} a(x_{1},x_{2})g_{1}(x_{2})\,\mathrm{d}x_{2}=0,
    \end{equation*}
    and
    \begin{align*}
        \mathbb{E}a(x_{1},x_{2}) & =\int_{0}^{1} \int_{0}^{1} a(x_{1},x_{2})g_{1}(x_{1})g_{1}(x_{2})\,\mathrm{d}x_{1}\,\mathrm{d}x_{2}, \\
        & = \int_{0}^{1} \int_{0}^{1} \left[ K_{h}^{J}(x_{1},x_{2})-
        \bar{K}_{h}^{J}(x_{1})\right]g_{1}(x_{2})\,\mathrm{d}x_{1}\,\mathrm{d}x_{2}  =0.
    \end{align*}
    We have
    \begin{align*}
        a^{2}(x_{1},x_{2}) & = {\left[ \frac{K_{h}^{J}(x_{1},x_{2})-
        \bar{K}_{h}^{J}(x_{1})}{g_{1}(x_{1})} \right]}^{2}, \\
        & = \frac{{\left[  K_{h}^{J}(x_{1},x_{2})\right]}^{2}}{g_{1}^{2}(x_{1})} +\frac{{\left[ \bar{K}_{h}^{J}(x_{1}) \right]}^{2}}{g_{1}^{2}(x_{1})}-2\frac{K_{h}^{J}(x_{1},x_{2})\times \bar{K}_{h}^{J}(x_{1}) }{g_{1}^{2}(x_{1})}, \\
        & =\psi_{1}(x_{1},x_{2})+\psi_{2}(x_{1},x_{2})-2\psi_{3}(x_{1},x_{2}).
    \end{align*}
    Next, we will discuss each \( \psi_{i}(\cdot,\cdot), i=1,2,3\).
    \begin{align*}
        \mathbb{E}\psi_{1}(x_{1},x_{2})
        & = \int_{0}^{1} \int_{0}^{h}h^{-2}k^{2}_{(x_{1}/h)}\left( \frac{x_{1}-x_{2}}{h} \right) \frac{g_{1}(x_{2})}{g_{1}(x_{1})}\,\mathrm{d}x_{1}\,\mathrm{d}x_{2} \\
        & \quad +\int_{0}^{1} \int_{h}^{1-h}h^{-2}K^{2}_{h}\left( \frac{x_{1}-x_{2}}{h} \right) \frac{g_{1}(x_{2})}{g_{1}(x_{1})}\,\mathrm{d}x_{1}\,\mathrm{d}x_{2} \\
        & \quad +\int_{0}^{1} \int_{1-h}^{1}h^{-2}k^{2}_{([1-x_{1}]/h)}\left( \frac{x_{1}-x_{2}}{h} \right) \frac{g_{1}(x_{2})}{g_{1}(x_{1})}\,\mathrm{d}x_{1}\,\mathrm{d}x_{2}, \\
        & =\psi_{11}+\psi_{12}+\psi_{13}.
    \end{align*}
    By change of variable, we have
    \begin{align*}
        \psi_{11}
        & = \int_{0}^{1} \int_{0}^{h}h^{-2}k^{2}_{(x_{1}/h)}\left( \frac{x_{1}-x_{2}}{h} \right) \frac{g_{1}(x_{2})}{g_{1}(x_{1})}\,\mathrm{d}x_{1}\,\mathrm{d}x_{2}, \\
        & =\int_{0}^{h}\int_{-1}^{x_{1}/h} h^{-1}k^{2}_{(x_{1}/h)}\left( u \right) \frac{g_{1}(x_1-uh)}{g_{1}(x_{1})}\,\mathrm{d}u\,\mathrm{d}x_{1}, \\
        & =\int_{0}^{h}\int_{-1}^{x_{1}/h} h^{-1}k^{2}_{(x_{1}/h)}\left( u \right)\left[ 1-uh\frac{g_{1}^{\prime}(x_1)}{g_{1}(x_{1})} \right] \,\mathrm{d}u\,\mathrm{d}x_{1}, \\
        & =\int_{0}^{1}\int_{-1}^{\rho}k_{\rho}^{2}(u)\,\mathrm{d}u\,\mathrm{d}\rho-h\int_{0}^{1}\int_{-1}^{\rho}uk_{\rho}^{2}(u)\frac{g_{1}^{\prime}(\rho h)}{g_{1}(\rho h)}\,\mathrm{d}u\,\mathrm{d}\rho+O(h^{2}).
    \end{align*}
    Using the same method, we have
    \begin{align*}
        \psi_{12} & = \int_{0}^{1} \int_{h}^{1-h}h^{-2}K^{2}_{h}\left( \frac{x_{1}-x_{2}}{h} \right) \frac{g_{1}(x_{2})}{g_{1}(x_{1})}\,\mathrm{d}x_{1}\,\mathrm{d}x_{2}, \\
        & = \int_{h}^{1-h}\int_{-1}^{1}h^{-1}K^{2}_{h}\left( u \right)\left[ 1-uh \frac{g_{1}^{\prime}(x_{1})}{g_{1}(x_{1})}+\frac{1}{2}u^{2}h^{2}\frac{g_{1}^{\prime\prime}(x_{1})}{g_{1}(x_{1})}\right] \,\mathrm{d}u\,\mathrm{d}x_{1}, \\
        & =(h^{-1}-2)\int_{-1}^{1} K^{2}(u)\,\mathrm{d}u +O(h),
    \end{align*}
    and
    \begin{align*}
        \psi_{13}
        & =\int_{0}^{1}\int_{-\rho}^{1}k_{\rho}^{2}(u)\,\mathrm{d}u\,\mathrm{d}\rho-h\int_{0}^{1}\int_{-\rho}^{1}uk_{\rho}^{2}(u)\frac{g_{1}^{\prime}(\rho h)}{g_{1}(\rho h)}\,\mathrm{d}u\,\mathrm{d}\rho +O(h^{2}).
    \end{align*}
    Hence,
    \begin{equation*}
        \mathbb{E}\psi_{1}(x_{1},x_{2})=(h^{-1}-2)\int_{-1}^{1} K^{2}(u)+2\int_{0}^{1}\int_{-1}^{\rho}k_{\rho}^{2}(u)\,\mathrm{d}u\,\mathrm{d}\rho + O(h),
    \end{equation*}
    \begin{equation*}
        \mathbb{E}\psi_{2}(x_{1},x_{2})=\int_{0}^{1} {\left[ \bar{K}_{h}^{J}(x_{1}) \right]}^{2}/g_{1}(x_{1})\,\mathrm{d}x_{1},
    \end{equation*} and
    \begin{align*}
        \mathbb{E}\psi_{3}(x_{1},x_{2}) & =\int_{0}^{1} \int_{0}^{1} \frac{K_{h}^{J}(x_{1},x_{2})\times \bar{K}_{h}^{J}(x_{1}) }{g_{1}^{2}(x_{1})}g_{1}(x_{1})g_{1}(x_{2})\,\mathrm{d}x_{1}\,\mathrm{d}x_{2}, \\
        & =\int_{0}^{1} {\left[ \bar{K}_{h}^{J}(x_{1}) \right]}^{2}/g_{1}(x_{1})\,\mathrm{d}x_{1}.
    \end{align*}
    Now we need to expand \( \bar{K}_{h}^{J}(x_{1}) \), when \(
    0\leq x_{1}\leq h \),
    \begin{align*}
        \bar{K}_{h}^{J}(x_{1})
        & =g_{1}(x_{1})\int_{-1}^{x_{1}/h}k_{(x_{1}/h)}\left( u \right)du-hg_{1}^{\prime}(x_{1})\int_{-1}^{x_{1}/h}uk_{(x_{1}/h)}\left( u \right)du \\
        & \qquad+\frac{1}{2}h^{2}g_{1}^{\prime\prime}(x_{1})\int_{-1}^{x_{1}/h}u^{2}k_{(x_{1}/h)}\left( u \right)du, \\
        & =g_{1}(x_{1})+O(h^{2}),
    \end{align*}
    because
    \begin{align*}
        \int_{-1}^{x_{1}/h}k_{(x_{1}/h)}\left( u \right)du & = 1, \\
        \int_{-1}^{x_{1}/h}uk_{(x_{1}/h)}\left( u \right)du & =   0.
    \end{align*}
    Note that for Jackknife kernel \( k_{\rho}(u) \),
    \begin{equation*}
        \int_{-1}^{\rho} k_{\rho}(u)\,\mathrm{d}u=\int_{-1}^{\rho} (1+\beta)\frac{K(u)}{\omega_{0}(\rho)}\,\mathrm{d}u-\int_{-\alpha}^{\rho}\frac{\beta}{\alpha}\frac{K(u/\alpha)}{\omega_{0}(\rho/\alpha)}\,\mathrm{d}u,
    \end{equation*}
    for \( h\leq x_{1}\leq 1-h \) and \( 1-h<x_{1}\leq 1 \), we also have \( \bar{K}_{h}^{J}(x_{1})= g_{1}(x_{1})+O(h^{2})\), so \( {\left[ \bar{K}_{h}^{J}(x_{1}) \right]}^{2}/g_{1}(x_{1})=g_{1}(x_{1})+O(h^{2}) \), hence \( \mathbb{E}\psi_{2}(x_{1},x_{2})=\mathbb{E}\psi_{3}(x_{1},x_{2})=1+O(h^{2} ) \). Finally,
    \begin{equation*}
        \begin{aligned}
            \mathbb{E}a^{2}(x_{1},x_{2}) & = \mathbb{E}\psi_{1}(x_{1},x_{2})+\mathbb{E}\psi_{2}(x_{1},x_{2})-2\mathbb{E}\psi_{3}(x_{1},x_{2}) \\
            & = (h^{-1}-2)\int_{-1}^{1} K^{2}(u)du \\
            & \qquad+2\int_{0}^{1}\int_{-1}^{\rho}k_{\rho}^{2}(u)\,\mathrm{d}u\,\mathrm{d}\rho -1 + O(h), \\
            & =O(h^{-1}).
        \end{aligned}
    \end{equation*}
    Now, we have obtained the result of the univariate case. For multivariate case, let \( \mathbf{z}_{1}={(z_{10},\ldots,z_{1(m-2)},z_{1(m-1)})}^{\top}={( \mathbf{y}_{1}^{\top},z_{1(m-1)} )}^{\top} \) and
    \begin{equation*}
        \mathbf{z}_{2}={(z_{20},\ldots,z_{1(m-2)},z_{1(m-1)})}^{\top}={\left( \mathbf{y}_{2}^{\top},z_{2(m-1)} \right)}^{\top},
    \end{equation*}
    by the \textcolor{darkblue}{Assumptions}~\ref{assump:kernel_assumption} and~\ref{assump:density_assumption}, the definition of multivariate kernel and \( \mathbb{H}_{0} \), we can prove that
    \begin{equation}\label{eq:A_expand}
        \begin{aligned}
            \mathbb{E}A^{2}_{m}(\mathbf{z}_{1},\mathbf{z}_{2}) & = \mathbb{E}A^{2}_{(m-1)}(\mathbf{y}_{1},\mathbf{y}_{2})\mathbb{E}a^{2}(z_{1(m-1)}, z_{2(m-1)}) \\
            & \qquad +  \mathbb{E}A^{2}_{(m-1)}(\mathbf{y}_{1},\mathbf{y}_{2})\mathbb{E} {\left[ \frac{\bar{K}^{J}_{h}(z_{1(m-1)})}{g_{1}(z_{1(m-1)})} \right]}^{2} \\
            & \qquad+  \mathbb{E}a^{2}(z_{1(m-1)},z_{2(m-1)})\mathbb{E} {\left[ \frac{\bar{\mathcal{K}}^{(m-1)}_{h}(\mathbf{y}_{1})}{g_{1}(\mathbf{y}_{1})} \right]}^{2}, \\
            & = \mathbb{E}A^{2}_{(m-1)}(\mathbf{y}_{1},\mathbf{y}_{2})O(h^{-1}) \\
            & \qquad +  \mathbb{E}A^{2}_{(m-1)}(\mathbf{y}_{1},\mathbf{y}_{2})\mathbb{E} {\left[ b(z_{1(m-1)}) +1\right]}^{2} \\
            & \qquad+  \mathbb{E}a^{2}(z_{1(m-1)},z_{2(m-1)})\mathbb{E} {\left[ B_{(m-1)}(\mathbf{y}_{1})+1 \right]}^{2},
        \end{aligned}
    \end{equation}
    where \( \bar{K}^{J}_{h}(z_{1})=\int_{0}^{1} K_{h}^{J}(z_{1},z_{2})g_{1}(z_{2})\,\mathrm{d}z_{2} \) and
    \begin{equation*}
        \bar{\mathcal{K}}^{(m-1)}_{h}(\mathbf{y}_{1})=\int_{\mathbbm{I}^{m-1}} \mathcal{K}_{h}^{(m-1)}(\mathbf{y}_{1},\mathbf{y})g(\mathbf{y})\,\mathrm{d}\mathbf{y}.
    \end{equation*}
    Iteratively, we can obtain \( \mathbb{E}A_{m}^{2}(\mathbf{z}_{1}, \mathbf{z}_{2})=O(h^{-m}) \) by equations~\eqref{eq:A_expand} and~\eqref{eq:B_expand}.

    For the last part, we have \( b(x_{1})=\bar{K}_{h}^{J}(x_{1})/g_{1}(x_{1})-1=O(h^{2})\), given \( \mathbb{H}_{0} \), we can also verify that
    \begin{equation}\label{eq:B_expand}
        B_{m}(\mathbf{z}_{1})=B_{(m-1)}(\mathbf{y}_{1})b(z_{1(m-1)})+B_{(m-1)}(\mathbf{y}_{1})+b(z_{1(m-1)}).
    \end{equation}
    This immediately completes the proof.
\end{proof}
\begin{lemma}\label{lemma:H1_expectation}
    Given \( \mathbb{H}_{0} \) and \( 1\leq m< M \), we have
    \begin{equation*}
        \mathbb{E}H_{1m}(\mathbf{z}_{1},\mathbf{z}_{2})=
        \begin{cases}
            0        & \text{if }  \mathbf{z}_{1},\mathbf{z}_{2}\, \text{are independent}, \\
            O(h^{2}) & \text{otherwise}.
        \end{cases}
    \end{equation*}
\end{lemma}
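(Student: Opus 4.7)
The plan is to split the argument by the two cases in the statement and exploit a single construction identity for $\tilde{A}_m$ together with an iterated–contraction argument for the Jackknife kernel.

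\textbf{Key identity.} By the very definition of $\tilde{A}_m$, for every $\mathbf{z}_1\in\mathbb{I}^m$,
\[
\int_{\mathbb{I}^m}\tilde{A}_m(\mathbf{z}_1,\mathbf{z}_2)\,g(\mathbf{z}_2)\,d\mathbf{z}_2 = 0,
\]
since the subtracted term $\int\tilde{\mathcal{K}}_h^{(m)}(\mathbf{z}_1,\mathbf{z})g(\mathbf{z})d\mathbf{z}$ is precisely the $g$-average of $\tilde{\mathcal{K}}_h^{(m)}(\mathbf{z}_1,\cdot)$.

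\textbf{Independence case.} When $|i-j|\ge m$, strict stationarity together with the serial–independence null $\mathbb{H}_0$ makes $\mathbf{z}_1$ and $\mathbf{z}_2$ independent with marginals $g$. Iterating the key identity then gives $\mathbb{E}\tilde{A}_m(\mathbf{z}_1,\mathbf{z}_2)=0$, and by symmetry of roles also $\mathbb{E}\tilde{A}_m(\mathbf{z}_2,\mathbf{z}_1)=0$, whence $\mathbb{E}H_{1m}(\mathbf{z}_1,\mathbf{z}_2)=0$.

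\textbf{Dependence case.} Let $d=|i-j|\in\{1,\ldots,m-1\}$. Then $(\mathbf{z}_1,\mathbf{z}_2)$ is a deterministic linear function of $m+d$ consecutive observations $w_1,\ldots,w_{m+d}$ whose joint law under $\mathbb{H}_0$ factorises as $\prod_{k=1}^{m+d}g_1(w_k)$. I will proceed in three steps. (i) Use the Jackknife identity $\int_{\mathbb{I}^m}\mathcal{K}_h^{(m)}(\mathbf{z},\mathbf{z}_2)d\mathbf{z}=1$ for sufficiently large $n$ (already established inside the proof of~\Cref{lemma:Gamma_m}) to reduce $\tilde{A}_m$ to $A_m$, so that the subtracted piece is an exact constant. (ii) Change variables to $(w_1,\ldots,w_d,\tilde u_1,\ldots,\tilde u_m)$ with $\tilde u_k=(z_{1,k}-z_{2,k})/h$, which has unit Jacobian and confines each $\tilde u_k$ to $[-1,1]$ by the compact kernel support. (iii) Integrate out $w_{m+1},\ldots,w_{m+d}$ sequentially; each such integration contracts one factor $K_h^J(\cdot)$ against a $g_1(\cdot)$, producing $g_1+O(h^2)$ by~\Cref{assump:density_assumption} and the symmetry $K(-x)=K(x)$ of~\Cref{assump:kernel_assumption}. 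Iterating the contraction against the surviving $K_h^J$ and $g_1$ factors reduces the integrand to $\int_0^1 g_1(w)\,dw=1$, while the accumulated error remains $O(h^2)$ because $m<M$ keeps the number of contractions bounded. Matching this $1$ against $\mathbb{E}[\bar g(\mathbf{z}_1)/g(\mathbf{z}_1)]=1$ then yields $\mathbb{E}A_m(\mathbf{z}_1,\mathbf{z}_2)=O(h^2)$, and a symmetric treatment of $A_m(\mathbf{z}_2,\mathbf{z}_1)$ gives $\mathbb{E}H_{1m}(\mathbf{z}_1,\mathbf{z}_2)=O(h^2)$.

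\textbf{Main obstacle.} The delicate step will be the sequential contraction in~(iii) when the evaluation point lies within $h$ of $\partial\mathbb{I}^m$: an ordinary kernel would only deliver $O(h)$ boundary bias and defeat the target rate. I will handle this exactly as in the proof of~\Cref{lemma:order_an_bn} by invoking the Jackknife construction of~\citet{johnBoundaryModificationKernel1984}, which restores the $O(h^2)$ bias uniformly on $\mathbb{I}^m$. A secondary bookkeeping difficulty is that the $m$ scaled differences $\tilde u_k$ and the $d$ free variables exhibit a triangular dependency for arbitrary $d\in\{1,\ldots,m-1\}$; this is handled transparently by the change of variables chosen in step~(ii), whose upper-triangular structure makes the Jacobian unity and the recursive substitution explicit.
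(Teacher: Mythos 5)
Your proposal is correct and takes essentially the same route as the paper's own proof: it establishes the independent case from the vanishing \(g\)-average of \(\tilde{A}_{m}\) in its second argument, reduces \(\tilde{A}_{m}\) to \(A_{m}\) via the Jackknife normalisation identity (the paper does this through \(\tilde{A}_{m}-A_{m}=\gamma_{m}\) and \Cref{lemma:Gamma_m}), and then handles the overlapping case by the same iterated contraction \(\int_{0}^{1}K_{h}^{J}(z,w)g_{1}(w)\,\mathrm{d}w=g_{1}(z)+O(h^{2})\) from \Cref{lemma:order_an_bn}, with the boundedness \(m<M\) keeping the accumulated error at \(O(h^{2})\). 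One cosmetic slip: the Jacobian of your change of variables in step (ii) is \(h^{m}\) rather than unity, but it cancels against the \(h^{-m}\) kernel normalisation exactly as your argument requires, so nothing of substance is affected.
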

\begin{proof}[Proof of \textcolor{darkblue}{Lemma}~\ref{lemma:H1_expectation}] When \( \mathbf{z}_{1} \) and \( \mathbf{z}_{2} \) have no overlap variable, i.e., \( \mathbf{z}_{1} \) and \( \mathbf{z}_{2} \) are independent, by the definition~\eqref{eq:A_nm},~\eqref{eq:H_1nm_z1_z2} and \( \mathbb{H}_{0} \), we have \( \mathbb{E}H_{1m}(\mathbf{z}_{1},\mathbf{z}_{2})=0 \). Next, we will prove the order of \( \mathbb{E}H_{1m}(\mathbf{z}_{1},\mathbf{z}_{2}) \) is \( O(h^{2}) \) if \( \mathbf{z}_{1} \) and \( \mathbf{z}_{2} \) have one or more overlap variables. Note that \( \hat{H}_{1}(m) \) is a U-statistics, \( \mathbf{z}_{1}\neq \mathbf{z}_{2} \), then \( \mathbf{z}_{1} \) and \( \mathbf{z}_{2} \) have at most \( m-1 \) overlap variables. By the fact \( \tilde{A}_{m}(\mathbf{z}_{1}, \mathbf{z}_{2}) - A_{m}(\mathbf{z}_{1}, \mathbf{z}_{2})=\gamma_{m}(\mathbf{z}_{1},\mathbf{z}_{2})\) and \textcolor{darkblue}{Lemma}~\ref{lemma:Gamma_m}, we only need to proof \( \mathbb{E}A_{m}(\mathbf{z}_{1}, \mathbf{z}_{2})=O(h^{2}) \). First, we prove the order is \( O(h^{2}) \) if \( \mathbf{z}_{1} \) and \( \mathbf{z}_{2} \) sharing \( m-1 \) overlap variables, then extend the result to the case whence \( \mathbf{z}_{1} \) and \( \mathbf{z}_{2} \) share only \( 1 \) overlap variable. Let \( \mathbf{z}_{1} =(z_{1},\ldots,z_{m})\) and \( \mathbf{z}_{2} =(z_{2},\ldots,z_{m+1})\), from \textcolor{darkblue}{Lemma}~\ref{lemma:order_an_bn}, we know \( \forall\, \mathbf{z}_{1}\in \mathbbm{I}^{m} \), \( \bar{\mathcal{K}}^{(m)}_{h} =g(\mathbf{z}_{1})+O(h^{2})\). Hence,
    \begin{equation}\label{eq:expectation_A_nm}
        \begin{aligned}
            & \mathbb{E}A_{m}(\mathbf{z}_{1},
            \mathbf{z}_{2}) \\
            & \quad = \int_{0}^{1} \cdots \int_{0}^{1} \frac{\mathcal{K}^{(m)}_{h}(\mathbf{z}_{1},
                \mathbf{z}_{2})-\bar{\mathcal{K}}^{(m)}_{h}(\mathbf{z}_{1})}{g(\mathbf{z}_{1})}g(\mathbf{z}_{1})g_{1(z_{m+1})}\,\mathrm{d}z_{1}\cdots \,\mathrm{d}z_{m+1}, \\
            & \quad =  \int_{0}^{1} \cdots \int_{0}^{1} \left[ \mathcal{K}^{(m)}_{h}(\mathbf{z}_{1},
                \mathbf{z}_{2})- g(\mathbf{z}_{1})+O(h^{2})\right] g_{1(z_{m+1})}\,\mathrm{d}z_{1}\cdots \,\mathrm{d}z_{m+1}, \\
            & \quad=\int_{0}^{1} \cdots \int_{0}^{1} \mathcal{K}^{(m)}_{h}(\mathbf{z}_{1},
            \mathbf{z}_{2})g_{1(z_{m+1})}\,\mathrm{d}z_{1}\cdots \,\mathrm{d}z_{m+1}-1+O(h^{2}).
        \end{aligned}
    \end{equation}
    We also notice that
    \begin{equation}\label{eq:integral_z_m+1}
        \begin{aligned}
            & \int_{0}^{1} \mathcal{K}^{(m)}_{h}(\mathbf{z}_{1},
            \mathbf{z}_{2})g_{1(z_{m+1})}\,\mathrm{d}z_{m1} \\
            & \quad =K^{J}_{h}(z_1,z_{2})\times\cdots \times K^{J}_{h}(z_{m-1},z_{m})\int_{0}^{1} K^{J}_{h}(z_m,z_{m+1})g_{1(z_{m+1})}\,\mathrm{d}z_{m1}, \\
            & \quad =  K^{J}_{h}(z_1,z_{2})\times\cdots \times K^{J}_{h}(z_{m-1},z_{m})\left[ g_{1}(z_{m}) +O(h^2)\right],
        \end{aligned}
    \end{equation}
    iteratively substituting~\eqref{eq:integral_z_m+1} into integration~\eqref{eq:expectation_A_nm}, we finally obtain
    \begin{equation*}
        \mathbb{E}A_{m}(\mathbf{z}_{1},
        \mathbf{z}_{2})=O(h^{2}).
    \end{equation*}
    Using the similar method, one can easily prove \( \mathbb{E}A_{m}(\mathbf{z}_{1}, \mathbf{z}_{2})=O(h^{2}) \) if \( \mathbf{z}_{1}, \mathbf{z}_{2} \) have only one overlap variable. This completes the proof.
\end{proof}
\begin{lemma}\label{lemma:H2_expectation}
    Given \( \mathbb{H}_{0} \) and \( 2\leq m< M \), we have
    \begin{equation*}
        \mathbb{E}H_{2m}(\mathbf{z}_{1},\mathbf{z}_{2})=
        \begin{cases}
            0               & \text{if }  \mathbf{z}_{1},\mathbf{z}_{2}\, \text{are independent}, \\
            \tau^{m}-1+O(h) & \text{otherwise, }
        \end{cases}
    \end{equation*}
    where \( \tau=\int_{-1}^{1} \int_{-1}^{1} K(u)K(u+v)\,\mathrm{d}u\,\mathrm{d}v \).
\end{lemma}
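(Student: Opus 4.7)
The plan is to split into two cases depending on whether $\mathbf{z}_1, \mathbf{z}_2$ have overlapping components, mirroring the structure of~\Cref{lemma:H1_expectation}. For the independent case, under $\mathbb{H}_{0}$ the process is iid so $g(\mathbf{y})=\prod g_{1}(y_{j})$ and $\mathbb{E}_{\mathbf{z}_i}[\mathcal{K}_h^{(m)}(\mathbf{z},\mathbf{z}_i)]=\bar{\mathcal{K}}(\mathbf{z})$, giving $\mathbb{E}_{\mathbf{z}_i}[A_m(\mathbf{z},\mathbf{z}_i)]=0$ pointwise in $\mathbf{z}$. When $\mathbf{z}_1,\mathbf{z}_2$ share no variables they are independent, so $\mathbb{E}[A_m(\mathbf{z},\mathbf{z}_1)A_m(\mathbf{z},\mathbf{z}_2)]=0$, and Fubini yields $\mathbb{E}[H_{2m}]=0$.

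For the overlapping case, take $\mathbf{z}_{1}=(X_{1},\ldots,X_{m})$ and $\mathbf{z}_{2}=(X_{s+1},\ldots,X_{s+m})$ for some $1\leq s\leq m-1$. Using $\mathbb{E}[\mathcal{K}(\mathbf{z},\mathbf{z}_{i})]=\bar{\mathcal{K}}(\mathbf{z})$ to cancel the cross terms of the four-term expansion of $H_{2m}$,
\begin{equation*}
\mathbb{E}[H_{2m}(\mathbf{z}_{1},\mathbf{z}_{2})]=\int \frac{\mathbb{E}[\mathcal{K}(\mathbf{z},\mathbf{z}_{1})\mathcal{K}(\mathbf{z},\mathbf{z}_{2})]-\bar{\mathcal{K}}^{2}(\mathbf{z})}{g(\mathbf{z})}\,\mathrm{d}\mathbf{z}.
\end{equation*}
I would then factor the kernel product over individual $X_k$'s: for $k\in[1,s]\cup[m+1,s+m]$ the variable appears in only one factor and integrates to $\bar{K}_h^J$, while for $k\in[s+1,m]$ the shared $X_k$ appears in both and produces $M(z_k,z_{k-s})\coloneqq\int K_h^J(z_k-x)K_h^J(z_{k-s}-x)g_1(x)\,\mathrm{d}x$. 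Replacing $\bar{K}_h^J(z)$ by $g_1(z)+O(h^2)$ (via~\Cref{lemma:order_an_bn}) cancels the non-shared $\bar{K}$'s against matching $g_1$'s in $g(\mathbf{z})$. For each remaining $M$-factor, the substitution $x=z_k+ht$ combined with $g_1(z_k+ht)=g_1(z_k)+O(h)$ gives $M(z_k,z_{k-s})=h^{-1}g_1(z_k)\,\gamma\!\left((z_{k-s}-z_k)/h\right)+O(1)$, where $\gamma(v)=\int K(t)K(t+v)\,\mathrm{d}t$. Rescaling $v_k=(z_{k-s}-z_k)/h$, the Jacobian cancels the $h^{-1}$ prefactors, integrating the remaining free positions against $g_1$ yields unity, and each $\int \gamma(v_k)\,\mathrm{d}v_k$ evaluates to $\tau$ over the effective compact range enforced by $z_i\in\mathbb{I}$ together with the Jackknife support constraints. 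Packaging $m$ such factors and subtracting $\int \bar{\mathcal{K}}^{2}/g\,\mathrm{d}\mathbf{z}=1+O(h)$ produces $\tau^m-1+O(h)$.

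The main obstacle I expect is the careful bookkeeping of which positions carry $M$-factors versus $\bar{K}$-factors for a generic overlap $s$, and verifying that after the change of variables the leading term packages to $\tau^m$ uniformly in $s\in[1,m-1]$. The $O(h)$ remainder must absorb (a) the $O(h^2)$ corrections from $\bar{K}_h^J=g_1+O(h^2)$, (b) the $O(h)$ boundary effects from restricting the rescaled differences $v_k$ to the compact range induced by $z_i\in\mathbb{I}$, and (c) the $O(h)$ bias from expanding $g_1(z_k+ht)$ around $z_k$. Uniform control of all three is obtained via the Jackknife moment conditions already invoked in~\Cref{lemma:order_an_bn} and~\Cref{lemma:Gamma_m}.
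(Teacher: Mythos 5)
Your independent case is fine and agrees with the paper's (it follows at once from \(\int_{\mathbb{I}^m} A_m(\mathbf{z},\mathbf{z}_2)g(\mathbf{z}_2)\,\mathrm{d}\mathbf{z}_2=0\)), and your reduction of \(\mathbb{E}H_{2m}\) to \(\int\bigl[\mathbb{E}[\mathcal{K}_h^{(m)}(\mathbf{z},\mathbf{z}_1)\mathcal{K}_h^{(m)}(\mathbf{z},\mathbf{z}_2)]-\bar{\mathcal{K}}^2(\mathbf{z})\bigr]/g(\mathbf{z})\,\mathrm{d}\mathbf{z}\) is algebraically clean. The overlapping case, however, has a genuine gap: with \(\mathbf{z}_1=(X_1,\ldots,X_m)\) and \(\mathbf{z}_2=(X_{s+1},\ldots,X_{s+m})\) there are exactly \(m-s\) shared underlying variables, so your factorization of \(\mathbb{E}[\mathcal{K}_1\mathcal{K}_2]\) produces \(m-s\) convolution factors \(M(z_k,z_{k-s})\) (for \(k\in[s+1,m]\), by your own indexing) and \(2s\) plain \(\bar{K}\)-factors. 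After your rescaling this packages into \(\tau^{m-s}\), yet you conclude by "packaging \(m\) such factors" into \(\tau^{m}\) — the two counts disagree for every admissible offset \(s\geq 1\), so the stated conclusion does not follow from your decomposition. The paper gets the exponent \(m\) from a structurally different place: it integrates the \emph{inner} dummy variable first, writing \(H_{2m}(\mathbf{z}_1,\mathbf{z}_2)=\int_{\mathbb{I}^m}A_m(\mathbf{z}_0,\mathbf{z}_1)A_m(\mathbf{z}_0,\mathbf{z}_2)g(\mathbf{z}_0)\,\mathrm{d}\mathbf{z}_0\); each coordinate \(z_{0k}\) appears in exactly two kernel factors, \(K_h^J(z_{0k}-z_k)\) and \(K_h^J(z_{0k}-z_{k+s})\), so integrating out \(\mathbf{z}_0\) yields \(m\) factors \(h^{-1}\gamma((z_k-z_{k+s})/h)\) — one per coordinate of \(\mathbf{z}_0\), irrespective of how many components \(\mathbf{z}_1\) and \(\mathbf{z}_2\) share — and only then is the expectation over the \(X\)'s taken. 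The exponent is the dimension of \(\mathbf{z}_0\), not the number of overlapping components, and your order of integration hides this.

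A second unsubstantiated step is the claim that each rescaled integral \(\int\gamma(v_k)\,\mathrm{d}v_k\) "evaluates to \(\tau\) over the effective compact range enforced by \(z_i\in\mathbb{I}\) together with the Jackknife support constraints." The convolution \(\gamma(v)=\int K(t)K(t+v)\,\mathrm{d}t\) is supported on \([-2,2]\), and for points in the interior of \(\mathbb{I}\) the rescaled difference sweeps out all of \([-2,2]\) (the boundary strips contribute only an \(O(h)\) correction), so nothing in your argument truncates the \(v\)-integration to \([-1,1]\); as written each factor produces \(\int_{-2}^{2}\gamma(v)\,\mathrm{d}v\), which is a different constant from the lemma's \(\tau=\int_{-1}^{1}\int_{-1}^{1}K(u)K(u+v)\,\mathrm{d}u\,\mathrm{d}v\). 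To repair the proof you must both recover the \(m\)-fold pairing (most easily by doing the \(\mathbf{z}_0\)-integration first, as in the paper) and exhibit explicitly where the restriction of the \(v\)-range comes from; neither is supplied by the bookkeeping you outline.
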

\begin{proof}[Proof of \textcolor{darkblue}{Lemma}~\ref{lemma:H2_expectation}]
    Let \( \mathbf{z}_{0}={(z_{01},\ldots,z_{0m})}^{\top} \), \( \mathbf{z}_{1}={(z_{1},\ldots,z_{m})}^{\top} \) and \\ \( \mathbf{z}_{2}={(z_{2},\ldots,z_{m+1})}^{\top} \), then we have
    \begin{align*}
        H_{2m}(\mathbf{z}_{1},\mathbf{z}_{2}) & = \int_{\mathbbm{I}^{m}}A_{m}(\mathbf{z}_{0},\mathbf{z}_{1})A_{m}(\mathbf{z}_{0},\mathbf{z}_{2})g(\mathbf{z}_{0}) \,\mathrm{d}\mathbf{z}_{0}, \\
        & = \int_{\mathbbm{I}^{m}}\frac{\mathcal{K}^{(m)}_{h}(\mathbf{z}_{0},\mathbf{z}_{1})\mathcal{K}^{(m)}_{h}(\mathbf{z}_{0},\mathbf{z}_{2})}{g(\mathbf{z}_{0})}\,\mathrm{d}\mathbf{z}_{0} \\
        & \qquad-\int_{\mathbbm{I}^{m}}\left[\mathcal{K}^{(m)}_{h}(\mathbf{z}_{0},\mathbf{z}_{1})+\mathcal{K}^{(m)}_{h}(\mathbf{z}_{0},\mathbf{z}_{2}) \right]\,\mathrm{d}\mathbf{z}_{0}+1+O(h^{2}),
    \end{align*}
    therefore
    \begin{align*}
        & \mathbb{E}H_{2m}(\mathbf{z}_{1},\mathbf{z}_{2}) \\
        & \quad = \int_{0}^{1} \int_{\mathbbm{I}^{m}}\int_{\mathbbm{I}^{m}}\frac{\mathcal{K}^{(m)}_{h}(\mathbf{z}_{0},\mathbf{z}_{1})\mathcal{K}^{(m)}_{h}(\mathbf{z}_{0},\mathbf{z}_{2})}{g(\mathbf{z}_{0})}g(\mathbf{z}_{1})g_{1}(z_{m+1}) \,\mathrm{d}\mathbf{z}_{0} \,\mathrm{d}\mathbf{z}_{1}\,\mathrm{d}z_{m+1}-1+O(h^{2}).
    \end{align*}
    By change of variable and the first-order Taylor expansion, the first term can be expressed as \( \tau^{m}+O(h) \). One can also obtain the same result  for \( \mathbf{z}_{1}={(z_{1},\ldots,z_{m})}^{\top} \) and \( \mathbf{z}_{2}={(z_{m},\ldots,z_{2m-1})}^{\top} \) using the same discussion. This completes the proof.
\end{proof}
\section{Simulation Results}\label{sec:simulation_results}
\subsection{Case 3}\label{subsec:Case 3} 
This case is designed to evaluate the performance of change-point detection in nonlinear time series models:
\begin{align*}
    \text{Model 1:} & \qquad x_{i} = 0.138 + (0.316 + 0.982x_{i-1})e^{-3.89x_{i-1}^{2}} + \varepsilon_{1i}, \\ 
    \text{Model 2:} & \qquad y_{i} = -0.437 - (0.659 + 1.260y_{i-1})e^{-3.89y_{i-1}^{2}} + \varepsilon_{2i},
\end{align*}
where \(\varepsilon_{1i}\) and \(\varepsilon_{i2}\) are Gaussian noise with mean zero and variance \(0.2^2\).
We let \( N=500 \), then generate \( P_{1}=160 \) and \( P_{2}=80 \) time series from Model 1 and Model 2 respectively. Denote \( P = P_{1}+P_{2} \),  the change-point is 161. To investigate the robustness of \(\mathfrak{E}\) concerning the selection of \(m\), we appropriately allow \( m \) to change from 1 to 8. The other settings are the same as Case 2.~\textcolor{darkblue}{Table}~\ref{tab:Case3_apen_rlen} summarizes the comparison between \(\mathfrak{E}\) and ApEn. We can obtain the same conclusion as Case 2.
\begin{table}[htbp]
    \centering
    \caption{The Comparison Between \(\mathfrak{E}\) and ApEn for Different \( m \)
        in Case 3}\label{tab:Case3_apen_rlen}
    \begin{tabular}{ccccccc}
        \hline
        \multirow{2}{*}{\( m \)} &\multicolumn{3}{c}{\(\mathfrak{E}\)}&\multicolumn{3}{c}{ApEn} \\
        \cmidrule(lr){2-4}\cmidrule(lr){5-7}
        & MAD & Failure & \( \tau=161 \) & MAD & Failure &
        \( \tau=161 \) \\
        \hline
        1 & 0.2333 & 0 & 123/150 & 0.3067 & 0 & 114/150 \\
        2 & 0.2200 & 0 & 123/150 & 85.80 & 110 & 0/150 \\
        3 & 0.3000 & 0 & 116/150 & 1.120 & 0 & 79/150 \\
        4 & 0.4333 & 0 & 106/150 & 2.4467 & 0 & 51/150 \\
        5 & 0.4467 & 0 & 104/150 & 12.1727 & 11 & 19/150 \\
        6 & 0.3533 & 0 & 111/150 & 38.4800 & 100 & 2/150 \\
        7 & 0.4133 & 0 & 107/150 & 88.5385 & 137 & 0/150 \\
        8 & 0.4800 & 0 & 104/150 & 105.667 & 144 & 0/150 \\ \hline
    \end{tabular}
\end{table}
\begin{figure}[ht]
    \centering
    \begin{subfigure}[bt]{0.48\textwidth}
        \centering
        \includegraphics[width=\textwidth]{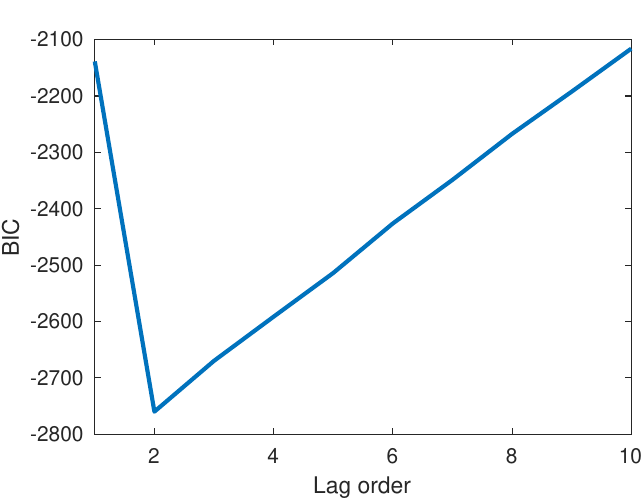}
        \caption{BIC}\label{fig:Case1BIC.pdf}
    \end{subfigure}
    \begin{subfigure}[bt]{0.48\textwidth}
        \centering
        \includegraphics[width=\textwidth]{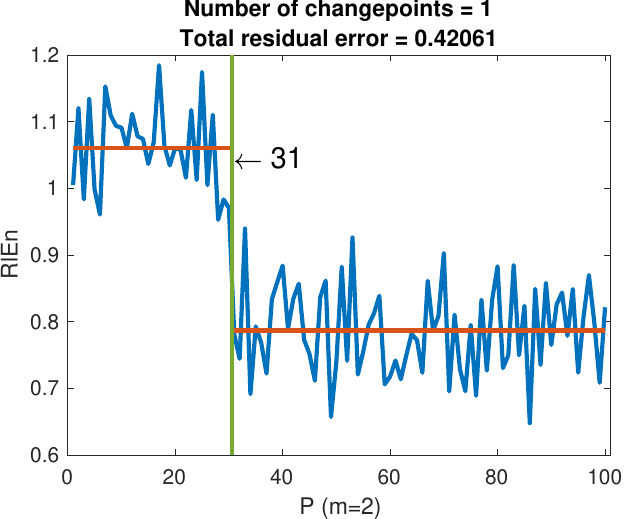}
        \caption{Change-point}\label{fig:Case1ChangPoint.pdf}
    \end{subfigure}
    \caption{Result of Case 1}\label{fig:Result_of_Case_1}
\end{figure}
\section{Real Data Analysis}\label{sec:real_data_analysis_more}
\subsection{Muscle Contraction Data from Single Subject}\label{subsec:muscle_contraction_data}
The real data contains 659977 observations, recorded at each millisecond.~\Cref{fig:allquad.png} shows the maximal contractions used to detect the extent of neuromuscular fatigue. There are also 10 smaller sharp rises afterwards - electrical stimuli to detect peripheral fatigue.
\begin{figure}[ht]
    \centering
    \begin{subfigure}[bt]{0.48\textwidth}
        \centering
        \includegraphics[width=\textwidth]{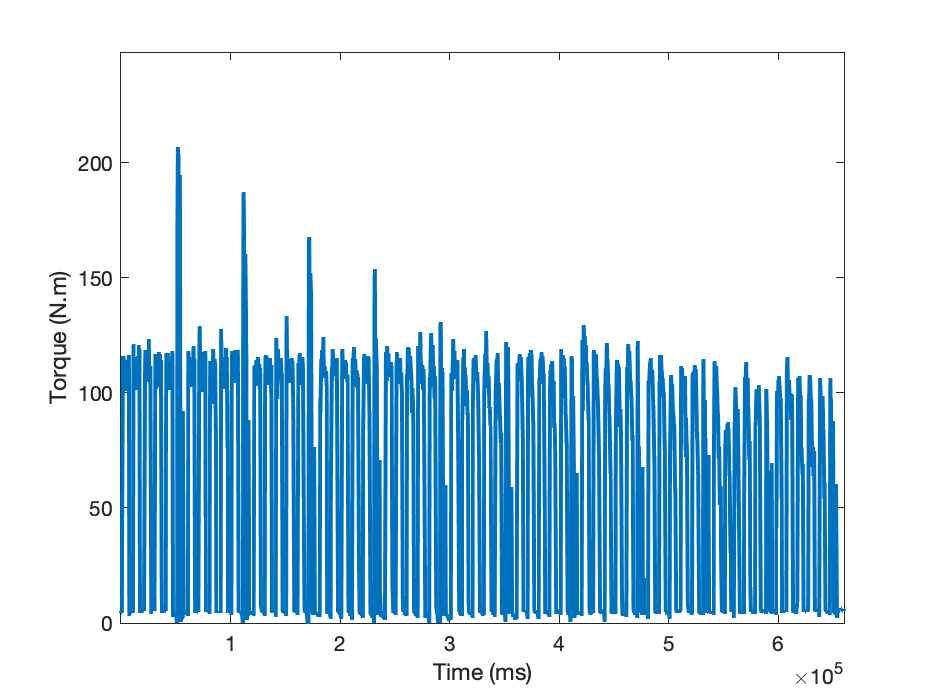}
        \caption{All sub-maximal contractions}\label{fig:allquad.png}
    \end{subfigure}
    \begin{subfigure}[bt]{0.48\textwidth}
        \centering
        \includegraphics[width=\textwidth]{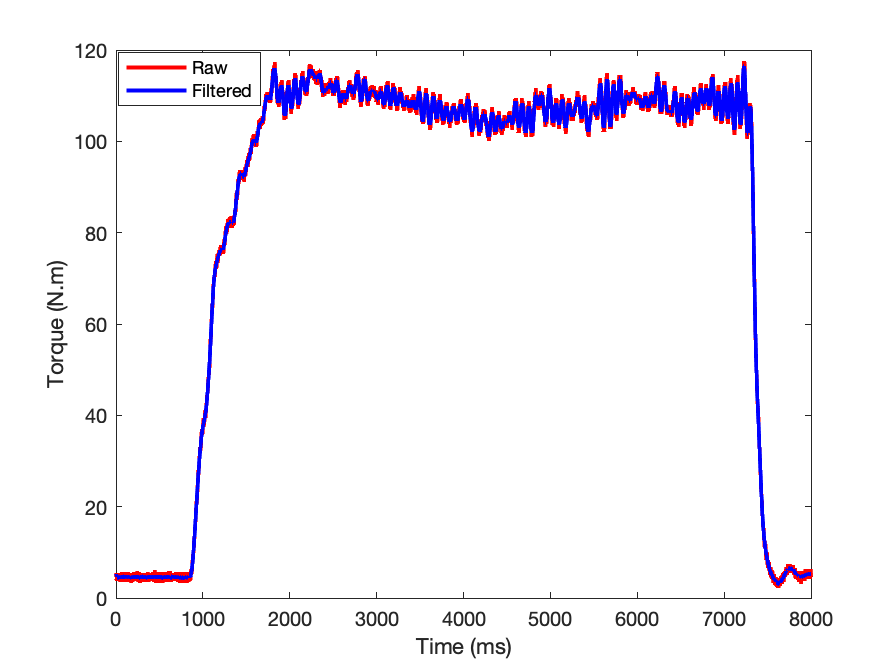}
        \caption{One sub-maximal contraction lasting ~5s}\label{fig:OneContraction.png}
    \end{subfigure}
    \caption{Muscle Contraction Data}\label{fig:Muscle_Contraction_Data}
\end{figure}
\textcolor{darkblue}{Figure}~\ref{fig:allquad.png} shows that the observations have lots of noise data. We need to extract valuable observations. We first cut the small periods into five pieces; each piece contains about 10000 observations (depending on the situation). For each piece, see Figure~\textcolor{darkblue}{Figure}~\ref{fig:OneContraction.png}, we extract 5000 consecutive observations in which the moving variance is minimum. For details, let \(a_i, i=1,\ldots,10000\) represent the 10000 observations, \(A_i=(a_{i},\ldots,a_{i+4999})\) denotes the consecutive 5000 observations, then the minimum moving variance of \(A_{s}\) can be found at \( s=\mathop{\arg\min}_{\substack{i}}(\text{Var}(A_i), i=1,\ldots,5001) \).
Furthermore, we also use Butterworth method~\citep{butterworthTheoryFilterAmplifiers} to filter the time series before extraction.
\textcolor{darkblue}{Figure}~\ref{fig:extractions.png} shows 52 extractions after using Butterworth Filter.~\textcolor{darkblue}{Figure}~\ref{fig:rlen_real_data.png} shows the result of change-point detection, the two change-points are 16 and 22.
\begin{figure}[ht]
    \centering
    \begin{subfigure}[bt]{0.48\textwidth}
        \centering
        \includegraphics[width=\textwidth]{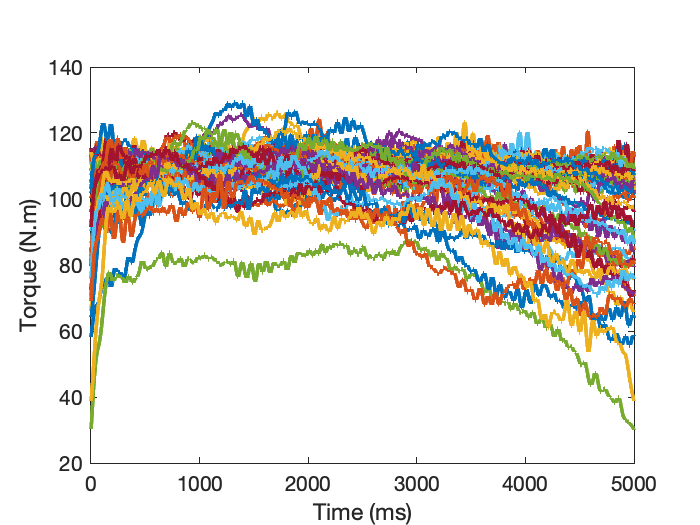}
        \caption{All Extractions}\label{fig:extractions.png}
    \end{subfigure}
    \begin{subfigure}[bt]{0.48\textwidth}
        \centering
        \includegraphics[width=\textwidth]{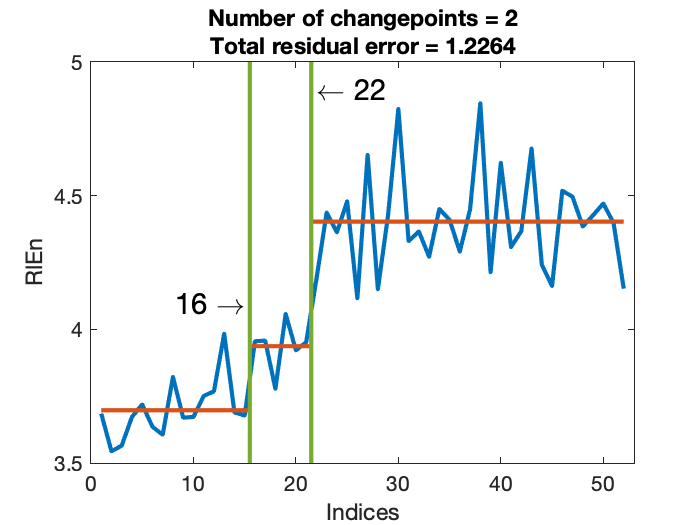}
        \caption{\(\mathfrak{E}\)}\label{fig:rlen_real_data.png}
    \end{subfigure}
    \caption{The Results of Extractions and Change-point Detection}\label{fig:Result}
\end{figure}

\begin{figure}[ht]
    \centering
    \begin{subfigure}[bt]{0.48\textwidth}
        \centering
        \includegraphics[width=\textwidth]{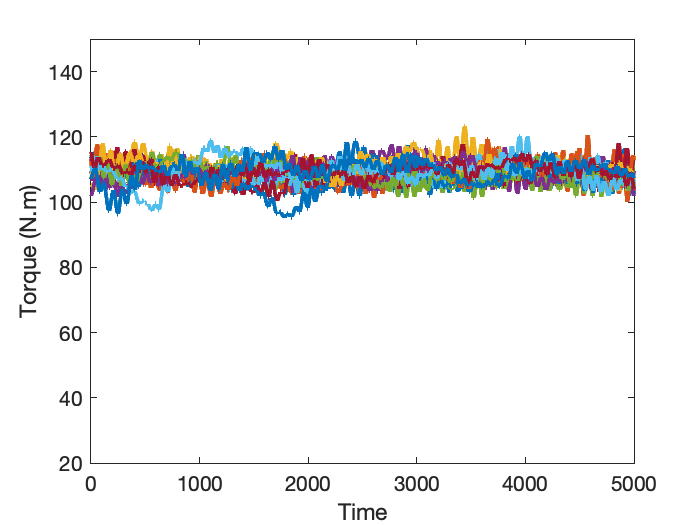}
        \caption{Group 1}\label{fig:rlen_three_groups1.png}
    \end{subfigure}
    \begin{subfigure}[bt]{0.48\textwidth}
        \centering
        \includegraphics[width=\textwidth]{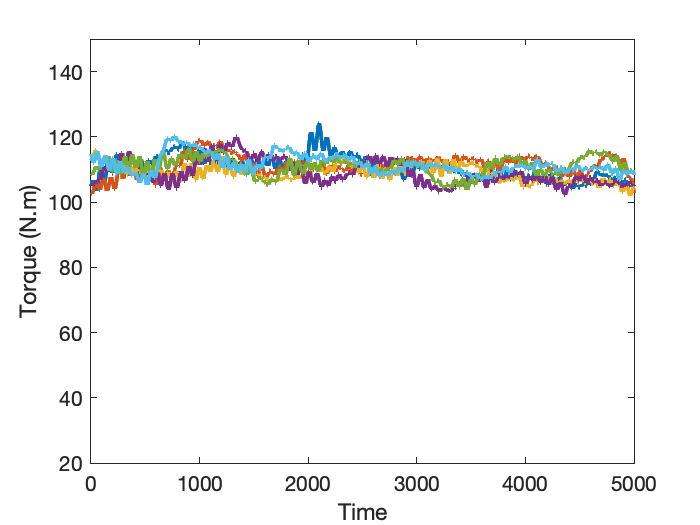}
        \caption{Group 2}\label{fig:rlen_three_groups2.png}
    \end{subfigure}
    \begin{subfigure}[bt]{0.48\textwidth}
        \centering
        \includegraphics[width=\textwidth]{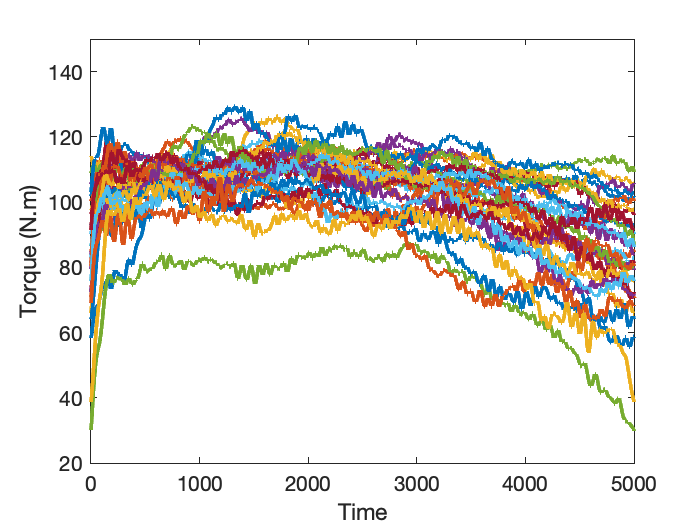}
        \caption{Group 3}\label{fig:rlen_three_groups3.png}
    \end{subfigure}
    \begin{subfigure}[bt]{0.48\textwidth}
        \centering
        \includegraphics[width=\textwidth]{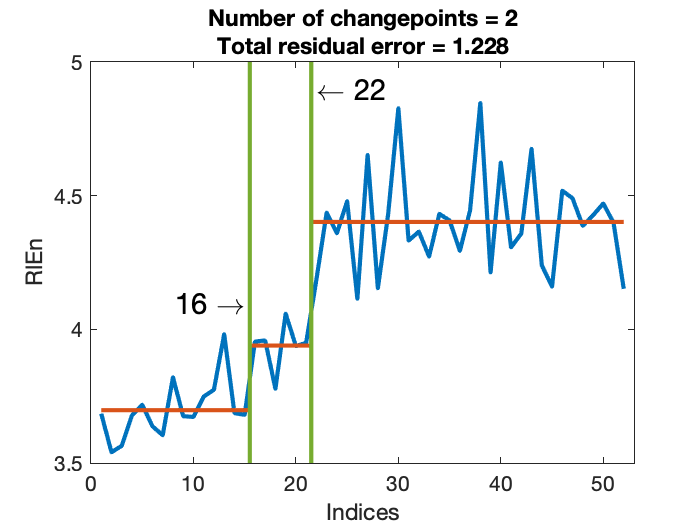}
        \caption{}\label{fig:rlen_change_pointsSim.png}
    \end{subfigure}

    \caption{The Divided Groups and Change-point Detection Result for the New Simulation Dataset}\label{fig:Group_Division_according_to_rlen}
\end{figure}

To verify the performance of \(\mathfrak{E}\), we further divide the time series into three groups based on \textcolor{darkblue}{Figure}~\ref{fig:rlen_real_data.png}, namely, Group 1, 2 and 3.  We obtain a seasonal ARIMA\( (p,d,q) \) process for each group. Again, 52 new time series are generated from the latest seasonal ARIMA processes. Then we regard them as observations and apply our \(\mathfrak{E}\) to these new observations to check whether our approach can detect the change-points correctly.

First, we need to estimate three seasonal ARIMA processes. For simplicity, let \( L \) be the lag operator notation, i.e., \( L^{i}x_{t}=x_{t-i} \). We found that this sports dataset is more complex than we expected; the degree of integration for the three groups are 2, 2 and 2, respectively, according to the Augmented Dickey-Fuller test. The real sport dataset contains seasonal effects and seasonal differences for the three groups as well, so it is a better choice to build the seasonal ARIMA processes\footnote{\url{https://uk.mathworks.com/help/econ/seasonal-arima-sarima-model.html}}:
\begin{equation*}
    \phi(L)\Phi(L){(1-L)}^{D}{(1-L^{s})}^{D_{s}}x_{t}=c+\theta(L)\varepsilon_{t},
\end{equation*}
where \( \phi(L)=1-\phi_{1}L-\cdots-\phi_{p}L^{p} \) and \( \theta(L)=1+\theta_{1}L+\cdots+\theta_{q}L^{q} \) represent the AR and MA operator polynomials.   \( \Phi(L)=1-\Phi_{p_{1}}L^{p_{1}}-\Phi_{p_{2}}L^{p_{2}}-\cdots-\Phi_{p_{s}}L^{p_{s}} \) is seasonal auto-regressive operator polynomials. \( {(1-L^{s})}^{D_{s}} \) is the so-called Seasonal Difference factor; for more details of seasonal ARIMA, see Section 9.9 in~\citet[][]{hyndmanForecastingPrinciplesPractice2013a}.   The spectrum analysis of time series determines the order of \( \Phi(L) \).  We use Bayesian Information Criterion (BIC) to choose the order \( p \) and \( q \) in \( \phi(L) \) and \( \theta(L) \).

Based on the average time series of each group, we have three processes:
\begin{align*}
    \mathbf{Process\  1:\ } & D =2,\quad D_{s}=1,\quad \hat{p} = 2, \\
    & \hat{q}=2, \quad s = 75, \quad c = 2.9993\times 10^{-6}, \\
    & \hat{\phi}(L)= 1-1.9414L+0.693L^{2}, \\
    & \hat{\theta}(L)= 1+1.82984L+0.9931L^{2}, \\
    & \hat{\Phi}(L) = 1-0.02037L^{75},\  \hat{\sigma}^{2} = 2\times 10^{-7}.
\end{align*}
\begin{align*}
    \mathbf{Process\ 2:\ }
    & D =2,\quad D_{s}=1,\quad \hat{p} = 2, \\
    & \hat{q}=2, \quad s = 67 , \quad c = 2.1477\times 10^{-6}, \\
    & \hat{\phi}(L)=1-1.9631L+0.9851L^{2}, \\
    & \hat{\theta}(L)= 1+1.9619L+0.9910L^{2}, \\
    & \hat{\Phi}(L) = 1+0.2818L^{67},\  \hat{\sigma}^{2} = 2\times10^{-7}.
\end{align*}
\begin{align*}
    \mathbf{Process\  3:\ }
    & D =2,\quad D_{s}=1,\quad \hat{p} = 2, \\
    & \hat{q}=1, \quad s = 81, \quad c = 3.9159\times 10^{-7}, \\
    & \hat{\phi}(L)=1-1.9768L+0.98801L^{2}, \\
    & \hat{\Phi}(L) = 1-0.1474L^{81}, \\
    & \hat{\theta}(L)= 1+0.3421L, \  \hat{\sigma}^{2} = 2\times10^{-7}.
\end{align*}
The number of time series generated from Processes 1, 2 and 3 are 15, 6 and 31 respectively.~\textcolor{darkblue}{Figure}~\ref{fig:rlen_change_pointsSim.png} shows our method can detect the change-points precisely at 16 and 22.
\begin{figure}[ht]
    \centering

\end{figure}
\end{document}